\tikzstyle{bwSpider}=[
 \tikzstyle{wbSpider}=[
\tikzstyle{qWire}=[line width = 1pt, black]
\tikzstyle{cWire}=[color=gray,thin]
\tikzstyle{env}=[copoint,regular polygon rotate=0,minimum width=0.2cm, fill=black]
\tikzstyle{probs}=[shape=semicircle,fill=white,draw=black,shape border rotate=180,minimum width=1.2cm]
\tikzstyle{disc}=[circuit ee IEC,thick,ground,rotate=90,scale=1.5]
\tikzstyle{every picture}=[baseline=-0.25em,scale=0.5]
\tikzstyle{dotpic}=[] 
\tikzstyle{diredges}=[every to/.style={diredge}]
\tikzstyle{math matrix}=[matrix of math nodes,left delimiter=(,right delimiter=),inner sep=2pt,column sep=1em,row sep=0.5em,nodes={inner sep=0pt},text height=1.5ex, text depth=0.25ex]
\tikzstyle{inline text}=[text height=1.5ex, text depth=0.25ex,yshift=0.5mm]
\tikzstyle{label}=[font=\footnotesize,text height=1.5ex, text depth=0.25ex,yshift=0.5mm]
\tikzstyle{left label}=[label,anchor=east,xshift=1.5mm]
\tikzstyle{right label}=[label,anchor=west,xshift=-0.5mm]
\tikzstyle{braceedge}=[decorate,decoration={brace,amplitude=2mm,raise=-1mm}]
\tikzstyle{small braceedge}=[decorate,decoration={brace,amplitude=1mm,raise=-1mm}]
\tikzstyle{doubled}=[line width=1.6pt] 
\tikzstyle{boldedge}=[doubled,shorten <=-0.17mm,shorten >=-0.17mm]
\tikzstyle{boldedgegray}=[doubled,gray,shorten <=-0.17mm,shorten >=-0.17mm]
\tikzstyle{singleedgegray}=[gray]
\tikzstyle{semidoubled}=[line width=1.4pt] 
\tikzstyle{semiboldedgegray}=[semidoubled,gray,shorten <=-0.17mm,shorten >=-0.17mm]
\tikzstyle{boxedge}=[semiboldedgegray]
\tikzstyle{boldedgedashed}=[very thick,dashed,shorten <=-0.17mm,shorten >=-0.17mm]
\tikzstyle{vboldedgedashed}=[doubled,dashed,shorten <=-0.17mm,shorten >=-0.17mm]
\tikzstyle{left hook arrow}=[left hook-latex]
\tikzstyle{right hook arrow}=[right hook-latex]
\tikzstyle{sembracket}=[line width=0.5pt,shorten <=-0.07mm,shorten >=-0.07mm]
\tikzstyle{causal edge}=[->,thick,gray]
\tikzstyle{causal nondir}=[thick,gray]
\tikzstyle{timeline}=[thick,gray, dashed]
\tikzstyle{cedge}=[<->,thick,gray!70!white]
\tikzstyle{empty diagram}=[draw=gray!40!white,dashed,shape=rectangle,minimum width=1cm,minimum height=1cm]
\tikzstyle{empty diagram small}=[draw=gray!50!white,dashed,shape=rectangle,minimum width=0.6cm,minimum height=0.5cm]
\tikzstyle{dot}=[inner sep=0mm,minimum width=2mm,minimum height=2mm,draw,shape=circle]
\tikzstyle{leak}=[white dot, shape=regular polygon, minimum size=3.3 mm, regular polygon sides=3, outer sep=-0.2mm, regular polygon rotate=270]
\tikzstyle{proj}=[regular polygon,regular polygon sides=4,draw,scale=0.75,inner sep=-0.5pt,minimum width=6mm,fill=white]
\tikzstyle{projOut}=[regular polygon,regular polygon sides=3,draw,scale=0.75,inner sep=-0.5pt,minimum width=7.5mm,fill=white,regular polygon rotate=180]
\tikzstyle{projIn}=[regular polygon,regular polygon sides=3,draw,scale=0.75,inner sep=-0.5pt,minimum width=7.5mm,fill=white]
\tikzstyle{Vleak}=[white dot, shape=regular polygon, minimum size=3.3 mm, regular polygon sides=3, outer sep=-0.2mm, regular polygon rotate=90]
\tikzstyle{dleak}=[white dot, line width=1.6pt, shape=regular polygon, minimum size=3.3 mm, regular polygon sides=3, outer sep=-0.2mm, regular polygon rotate=270]
\tikzstyle{Wsquare}=[white dot, shape=regular polygon, rounded corners=0.8 mm, minimum size=3.3 mm, regular polygon sides=3, outer sep=-0.2mm]
\tikzstyle{Wsquareadj}=[white dot, shape=regular polygon, rounded corners=0.8 mm, minimum size=3.3 mm, regular polygon sides=3, outer sep=-0.2mm, regular polygon rotate=180]
\tikzstyle{ddot}=[inner sep=0mm, doubled, minimum width=2.5mm,minimum height=2.5mm,draw,shape=circle]
\tikzstyle{black dot}=[dot,fill=black]
\tikzstyle{white dot}=[dot,fill=white,,text depth=-0.2mm]
\tikzstyle{white Wsquare}=[Wsquare,fill=gray,,text depth=-0.2mm]
\tikzstyle{white Wsquareadj}=[Wsquareadj,fill=white,,text depth=-0.2mm]
\tikzstyle{green dot}=[white dot] 
\tikzstyle{gray dot}=[dot,fill=gray!40!white,,text depth=-0.2mm]
\tikzstyle{red dot}=[gray dot] 
\tikzstyle{black ddot}=[ddot,fill=black]
\tikzstyle{white ddot}=[ddot,fill=white]
\tikzstyle{gray ddot}=[ddot,fill=gray!40!white]
\tikzstyle{gray edge}=[gray!60!white]
\tikzstyle{small dot}=[inner sep=0.5mm,minimum width=0pt,minimum height=0pt,draw,shape=circle]
\tikzstyle{small black dot}=[small dot,fill=black]
\tikzstyle{small white dot}=[small dot,fill=white]
\tikzstyle{small gray dot}=[small dot,fill=gray!40!white]
\tikzstyle{causal dot}=[inner sep=0.4mm,minimum width=0pt,minimum height=0pt,draw=white,shape=circle,fill=gray!40!white]
\tikzstyle{phase dimensions}=[minimum size=5mm,font=\footnotesize,rectangle,rounded corners=2.5mm,inner sep=0.2mm,outer sep=-2mm]
\tikzstyle{dphase dimensions}=[minimum size=5mm,font=\footnotesize,rectangle,rounded corners=2.5mm,inner sep=0.2mm,outer sep=-2mm]
\tikzstyle{white phase dot}=[dot,fill=white,phase dimensions]
\tikzstyle{white phase ddot}=[ddot,fill=white,dphase dimensions]
\tikzstyle{white rect ddot}=[draw=black,fill=white,doubled,minimum size=5mm,font=\footnotesize,rectangle,rounded corners=2.5mm,inner sep=0.2mm]
\tikzstyle{gray rect ddot}=[draw=black,fill=gray!40!white,doubled,minimum size=6mm,font=\footnotesize,rectangle,rounded corners=3mm]
\tikzstyle{gray phase dot}=[dot,fill=gray!40!white,phase dimensions]
\tikzstyle{gray phase ddot}=[ddot,fill=gray!40!white,dphase dimensions]
\tikzstyle{grey phase dot}=[gray phase dot]
\tikzstyle{grey phase ddot}=[gray phase ddot]
\tikzstyle{small phase dimensions}=[minimum size=4mm,font=\tiny,rectangle,rounded corners=2mm,inner sep=0.2mm,outer sep=-2mm]
\tikzstyle{small dphase dimensions}=[minimum size=4mm,font=\tiny,rectangle,rounded corners=2mm,inner sep=0.2mm,outer sep=-2mm]
\tikzstyle{small gray phase dot}=[dot,fill=gray!40!white,small phase dimensions]
\tikzstyle{small gray phase ddot}=[ddot,fill=gray!40!white,small dphase dimensions]
\tikzstyle{small map}=[draw,shape=rectangle,minimum height=4mm,minimum width=4mm,fill=white]
\tikzstyle{cnot}=[fill=white,shape=circle,inner sep=-1.4pt]
\tikzstyle{asym hadamard}=[fill=white,draw,shape=NEbox,inner sep=0.6mm,font=\footnotesize,minimum height=4mm]
\tikzstyle{asym hadamard conj}=[fill=white,draw,shape=NWbox,inner sep=0.6mm,font=\footnotesize,minimum height=4mm]
\tikzstyle{asym hadamard dag}=[fill=white,draw,shape=SEbox,inner sep=0.6mm,font=\footnotesize,minimum height=4mm]
\tikzstyle{hadamard}=[fill=white,draw,inner sep=0.6mm,font=\footnotesize,minimum height=4mm,minimum width=4mm]
\tikzstyle{small hadamard}=[fill=white,draw,inner sep=0.6mm,minimum height=1.5mm,minimum width=1.5mm]
\tikzstyle{small hadamard rotate}=[small hadamard,rotate=45]
\tikzstyle{dhadamard}=[hadamard,doubled]
\tikzstyle{small dhadamard}=[small hadamard,doubled]
\tikzstyle{small dhadamard rotate}=[small hadamard rotate,doubled]
\tikzstyle{antipode}=[white dot,inner sep=0.3mm,font=\footnotesize]
\tikzstyle{scalar}=[diamond,draw,inner sep=0.5pt,font=\small]
\tikzstyle{dscalar}=[diamond,doubled, draw,inner sep=0.5pt,font=\small]
\tikzstyle{small box}=[rectangle,inline text,fill=white,draw,minimum height=5mm,yshift=-0.5mm,minimum width=5mm,font=\small]
\tikzstyle{small gray box}=[small box,fill=gray!30]
\tikzstyle{medium box}=[rectangle,inline text,fill=white,draw,minimum height=5mm,yshift=-0.5mm,minimum width=10mm,font=\small]
\tikzstyle{square box}=[small box] 
\tikzstyle{medium gray box}=[small box,fill=gray!30]
\tikzstyle{semilarge box}=[rectangle,inline text,fill=white,draw,minimum height=5mm,yshift=-0.5mm,minimum width=12.5mm,font=\small]
\tikzstyle{large box}=[rectangle,inline text,fill=white,draw,minimum height=5mm,yshift=-0.5mm,minimum width=15mm,font=\small]
\tikzstyle{large gray box}=[small box,fill=gray!30]
\tikzstyle{Bayes box}=[rectangle,fill=black,draw, minimum height=3mm, minimum width=3mm]
\tikzstyle{gray square point}=[small box,fill=gray!50]
\tikzstyle{dphase box white}=[dhadamard]
\tikzstyle{dphase box gray}=[dhadamard,fill=gray!50!white]
\tikzstyle{phase box white}=[hadamard]
\tikzstyle{phase box gray}=[hadamard,fill=gray!50!white]
\tikzstyle{point}=[regular polygon,regular polygon sides=3,draw,scale=0.75,inner sep=-0.5pt,minimum width=9mm,fill=white,regular polygon rotate=180]
\tikzstyle{point nosep}=[regular polygon,regular polygon sides=3,draw,scale=0.75,inner sep=-2pt,minimum width=9mm,fill=white,regular polygon rotate=180]
\tikzstyle{copoint}=[regular polygon,regular polygon sides=3,draw,scale=0.75,inner sep=-0.5pt,minimum width=9mm,fill=white]
\tikzstyle{dpoint}=[point,doubled]
\tikzstyle{dcopoint}=[copoint,doubled]
\tikzstyle{pointgrow}=[shape=cornerpoint,kpoint common,scale=0.75,inner sep=3pt]
\tikzstyle{pointgrow dag}=[shape=cornercopoint,kpoint common,scale=0.75,inner sep=3pt]
\tikzstyle{wide copoint}=[fill=white,draw,shape=isosceles triangle,shape border rotate=90,isosceles triangle stretches=true,inner sep=0pt,minimum width=1.5cm,minimum height=6.12mm]
\tikzstyle{wide point}=[fill=white,draw,shape=isosceles triangle,shape border rotate=-90,isosceles triangle stretches=true,inner sep=0pt,minimum width=1.5cm,minimum height=6.12mm,yshift=-0.0mm]
\tikzstyle{wide point plus}=[fill=white,draw,shape=isosceles triangle,shape border rotate=-90,isosceles triangle stretches=true,inner sep=0pt,minimum width=1.74cm,minimum height=7mm,yshift=-0.0mm]
\tikzstyle{wide dpoint}=[fill=white,doubled,draw,shape=isosceles triangle,shape border rotate=-90,isosceles triangle stretches=true,inner sep=0pt,minimum width=1.5cm,minimum height=6.12mm,yshift=-0.0mm]
\tikzstyle{tinypoint}=[regular polygon,regular polygon sides=3,draw,scale=0.55,inner sep=-0.15pt,minimum width=6mm,fill=white,regular polygon rotate=180]
\tikzstyle{white point}=[point]
\tikzstyle{white dpoint}=[dpoint]
\tikzstyle{green point}=[white point] 
\tikzstyle{white copoint}=[copoint]
\tikzstyle{gray point}=[point,fill=gray!40!white]
\tikzstyle{gray dpoint}=[gray point,doubled]
\tikzstyle{red point}=[gray point] 
\tikzstyle{gray copoint}=[copoint,fill=gray!40!white]
\tikzstyle{gray dcopoint}=[gray copoint,doubled]
\tikzstyle{white point guide}=[regular polygon,regular polygon sides=3,font=\scriptsize,draw,scale=0.65,inner sep=-0.5pt,minimum width=9mm,fill=white,regular polygon rotate=180]
\tikzstyle{black point}=[point,fill=black,font=\color{white}]
\tikzstyle{black copoint}=[copoint,fill=black,font=\color{white}]
\tikzstyle{tiny gray point}=[tinypoint,fill=gray!40!white]
\tikzstyle{diredge}=[->]
\tikzstyle{ddiredge}=[<->]
\tikzstyle{rdiredge}=[<-]
\tikzstyle{thickdiredge}=[->, very thick]
\tikzstyle{pointer edge}=[->,very thick,gray]
\tikzstyle{pointer edge part}=[very thick,gray]
\tikzstyle{dashed edge}=[dashed]
\tikzstyle{thick dashed edge}=[very thick,dashed]
\tikzstyle{thick gray dashed edge}=[thick dashed edge,gray!40]
\tikzstyle{thick map edge}=[very thick,|->]
\newcommand{\boxshape}[3]{%
\pgfdeclareshape{#1}{
\inheritsavedanchors[from=rectangle] 
\inheritanchorborder[from=rectangle]
\inheritanchor[from=rectangle]{center}
\inheritanchor[from=rectangle]{north}
\inheritanchor[from=rectangle]{south}
\inheritanchor[from=rectangle]{west}
\inheritanchor[from=rectangle]{east}
\backgroundpath{
\southwest \pgf@xa=\pgf@x \pgf@ya=\pgf@y
\northeast \pgf@xb=\pgf@x \pgf@yb=\pgf@y

\@tempdima=#2
\@tempdimb=#3

\pgfpathmoveto{\pgfpoint{\pgf@xa - 5pt + \@tempdima}{\pgf@ya}}
\pgfpathlineto{\pgfpoint{\pgf@xa - 5pt - \@tempdima}{\pgf@yb}}
\pgfpathlineto{\pgfpoint{\pgf@xb + 5pt + \@tempdimb}{\pgf@yb}}
\pgfpathlineto{\pgfpoint{\pgf@xb + 5pt - \@tempdimb}{\pgf@ya}}
\pgfpathlineto{\pgfpoint{\pgf@xa - 5pt + \@tempdima}{\pgf@ya}}
\pgfpathclose
}
}}
\tikzstyle{cloud}=[shape=cloud,draw,minimum width=1.5cm,minimum height=1.5cm]
\tikzstyle{map}=[draw,shape=NEbox,inner sep=1pt,minimum height=4mm,fill=white]
\tikzstyle{dashedmap}=[draw,dashed,shape=NEbox,inner sep=2pt,minimum height=6mm,fill=white]
\tikzstyle{mapdag}=[draw,shape=SEbox,inner sep=1pt,minimum height=4mm,fill=white]
\tikzstyle{mapadj}=[draw,shape=SEbox,inner sep=2pt,minimum height=6mm,fill=white]
\tikzstyle{maptrans}=[draw,shape=SWbox,inner sep=2pt,minimum height=6mm,fill=white]
\tikzstyle{mapconj}=[draw,shape=NWbox,inner sep=2pt,minimum height=6mm,fill=white]
\tikzstyle{medium map}=[draw,shape=NEbox,inner sep=2pt,minimum height=6mm,fill=white,minimum width=7mm]
\tikzstyle{medium map dag}=[draw,shape=SEbox,inner sep=2pt,minimum height=6mm,fill=white,minimum width=7mm]
\tikzstyle{medium map adj}=[draw,shape=SEbox,inner sep=2pt,minimum height=6mm,fill=white,minimum width=7mm]
\tikzstyle{medium map trans}=[draw,shape=SWbox,inner sep=2pt,minimum height=6mm,fill=white,minimum width=7mm]
\tikzstyle{medium map conj}=[draw,shape=NWbox,inner sep=2pt,minimum height=6mm,fill=white,minimum width=7mm]
\tikzstyle{semilarge map}=[draw,shape=NEbox,inner sep=2pt,minimum height=6mm,fill=white,minimum width=9.5mm]
\tikzstyle{semilarge map trans}=[draw,shape=SWbox,inner sep=2pt,minimum height=6mm,fill=white,minimum width=9.5mm]
\tikzstyle{semilarge map adj}=[draw,shape=SEbox,inner sep=2pt,minimum height=6mm,fill=white,minimum width=9.5mm]
\tikzstyle{semilarge map dag}=[draw,shape=SEbox,inner sep=2pt,minimum height=6mm,fill=white,minimum width=9.5mm]
\tikzstyle{semilarge map conj}=[draw,shape=NWbox,inner sep=2pt,minimum height=6mm,fill=white,minimum width=9.5mm]
\tikzstyle{large map}=[draw,shape=NEbox,inner sep=2pt,minimum height=6mm,fill=white,minimum width=12mm]
\tikzstyle{large map conj}=[draw,shape=NWbox,inner sep=2pt,minimum height=6mm,fill=white,minimum width=12mm]
\tikzstyle{very large map}=[draw,shape=NEbox,inner sep=2pt,minimum height=6mm,fill=white,minimum width=17mm]
\tikzstyle{medium dmap}=[draw,doubled,shape=NEbox,inner sep=2pt,minimum height=6mm,fill=white,minimum width=7mm]
\tikzstyle{medium dmap dag}=[draw,doubled,shape=SEbox,inner sep=2pt,minimum height=6mm,fill=white,minimum width=7mm]
\tikzstyle{medium dmap adj}=[draw,doubled,shape=SEbox,inner sep=2pt,minimum height=6mm,fill=white,minimum width=7mm]
\tikzstyle{medium dmap trans}=[draw,doubled,shape=SWbox,inner sep=2pt,minimum height=6mm,fill=white,minimum width=7mm]
\tikzstyle{medium dmap conj}=[draw,doubled,shape=NWbox,inner sep=2pt,minimum height=6mm,fill=white,minimum width=7mm]
\tikzstyle{semilarge dmap}=[draw,doubled,shape=NEbox,inner sep=2pt,minimum height=6mm,fill=white,minimum width=9.5mm]
\tikzstyle{semilarge dmap trans}=[draw,doubled,shape=SWbox,inner sep=2pt,minimum height=6mm,fill=white,minimum width=9.5mm]
\tikzstyle{semilarge dmap adj}=[draw,doubled,shape=SEbox,inner sep=2pt,minimum height=6mm,fill=white,minimum width=9.5mm]
\tikzstyle{semilarge dmap dag}=[draw,doubled,shape=SEbox,inner sep=2pt,minimum height=6mm,fill=white,minimum width=9.5mm]
\tikzstyle{semilarge dmap conj}=[draw,doubled,shape=NWbox,inner sep=2pt,minimum height=6mm,fill=white,minimum width=9.5mm]
\tikzstyle{large dmap}=[draw,doubled,shape=NEbox,inner sep=2pt,minimum height=6mm,fill=white,minimum width=12mm]
\tikzstyle{large dmap conj}=[draw,doubled,shape=NWbox,inner sep=2pt,minimum height=6mm,fill=white,minimum width=12mm]
\tikzstyle{large dmap trans}=[draw,doubled,shape=SWbox,inner sep=2pt,minimum height=6mm,fill=white,minimum width=12mm]
\tikzstyle{large dmap adj}=[draw,doubled,shape=SEbox,inner sep=2pt,minimum height=6mm,fill=white,minimum width=12mm]
\tikzstyle{large dmap dag}=[draw,doubled,shape=SEbox,inner sep=2pt,minimum height=6mm,fill=white,minimum width=12mm]
\tikzstyle{very large dmap}=[draw,doubled,shape=NEbox,inner sep=2pt,minimum height=6mm,fill=white,minimum width=19.5mm]
\tikzstyle{muxbox}=[draw,shape=rectangle,minimum height=3mm,minimum width=3mm,fill=white]
\tikzstyle{dmuxbox}=[muxbox,doubled]
\tikzstyle{box}=[draw,shape=rectangle,inner sep=2pt,minimum height=6mm,minimum width=6mm,fill=white]
\tikzstyle{dbox}=[draw,doubled,shape=rectangle,inner sep=2pt,minimum height=6mm,minimum width=6mm,fill=white]
\tikzstyle{dmap}=[draw,doubled,shape=NEbox,inner sep=2pt,minimum height=6mm,fill=white]
\tikzstyle{dmapdag}=[draw,doubled,shape=SEbox,inner sep=2pt,minimum height=6mm,fill=white]
\tikzstyle{dmapadj}=[draw,doubled,shape=SEbox,inner sep=2pt,minimum height=6mm,fill=white]
\tikzstyle{dmaptrans}=[draw,doubled,shape=SWbox,inner sep=2pt,minimum height=6mm,fill=white]
\tikzstyle{dmapconj}=[draw,doubled,shape=NWbox,inner sep=2pt,minimum height=6mm,fill=white]
\tikzstyle{ddmap}=[draw,doubled,dashed,shape=NEbox,inner sep=2pt,minimum height=6mm,fill=white]
\tikzstyle{ddmapdag}=[draw,doubled,dashed,shape=SEbox,inner sep=2pt,minimum height=6mm,fill=white]
\tikzstyle{ddmapadj}=[draw,doubled,dashed,shape=SEbox,inner sep=2pt,minimum height=6mm,fill=white]
\tikzstyle{ddmaptrans}=[draw,doubled,dashed,shape=SWbox,inner sep=2pt,minimum height=6mm,fill=white]
\tikzstyle{ddmapconj}=[draw,doubled,dashed,shape=NWbox,inner sep=2pt,minimum height=6mm,fill=white]
\tikzstyle{smap}=[draw,shape=sNEbox,fill=white]
\tikzstyle{smapdag}=[draw,shape=sSEbox,fill=white]
\tikzstyle{smapadj}=[draw,shape=sSEbox,fill=white]
\tikzstyle{smaptrans}=[draw,shape=sSWbox,fill=white]
\tikzstyle{smapconj}=[draw,shape=sNWbox,fill=white]
\tikzstyle{dsmap}=[draw,dashed,shape=sNEbox,fill=white]
\tikzstyle{dsmapdag}=[draw,dashed,shape=sSEbox,fill=white]
\tikzstyle{dsmaptrans}=[draw,dashed,shape=sSWbox,fill=white]
\tikzstyle{dsmapconj}=[draw,dashed,shape=sNWbox,fill=white]
\tikzstyle{mmap}=[draw,shape=mNEbox]
\tikzstyle{mmapdag}=[draw,shape=mSEbox]
\tikzstyle{mmaptrans}=[draw,shape=mSWbox]
\tikzstyle{mmapconj}=[draw,shape=mNWbox]
\tikzstyle{mmapgray}=[draw,fill=gray!40!white,shape=mNEbox]
\tikzstyle{smapgray}=[draw,fill=gray!40!white,shape=sNEbox]
\pgfmathsetmacro{\pgf@shorten@left}{\pgfkeysvalueof{/tikz/shorten left}}
\pgfmathsetmacro{\pgf@shorten@right}{\pgfkeysvalueof{/tikz/shorten right}}
\pgfmathsetmacro{\pgf@shorten@left}{\pgfkeysvalueof{/tikz/shorten left}}
\pgfmathsetmacro{\pgf@shorten@right}{\pgfkeysvalueof{/tikz/shorten right}}
\tikzstyle{kpoint common}=[draw,fill=white,inner sep=1pt,minimum height=4mm]
\tikzstyle{kpoint sc}=[shape=cornerpoint,kpoint common]
\tikzstyle{kpoint adjoint sc}=[shape=cornercopoint,kpoint common]
\tikzstyle{kpoint}=[shape=cornerpoint,shorten left=5pt,kpoint common]
\tikzstyle{kpoint adjoint}=[shape=cornercopoint,shorten left=5pt,kpoint common]
\tikzstyle{kpoint conjugate}=[shape=cornerpoint,shorten right=5pt,kpoint common]
\tikzstyle{kpoint transpose}=[shape=cornercopoint,shorten right=5pt,kpoint common]
\tikzstyle{kpoint symm}=[shape=cornerpoint,shorten left=5pt,shorten right=5pt,kpoint common]
\tikzstyle{wide kpoint sc}=[shape=cornerpoint,kpoint common, minimum width=1 cm]
\tikzstyle{wide kpointdag sc}=[shape=cornercopoint,kpoint common, minimum width=1 cm]
\tikzstyle{black kpoint}=[shape=cornerpoint,shorten left=5pt,kpoint common,fill=black,font=\color{white}]
\tikzstyle{black kpoint sm}=[shape=cornerpoint,shorten left=5pt,kpoint common,fill=black,font=\color{white},scale=0.75]
\tikzstyle{black kpoint adjoint}=[shape=cornercopoint,shorten left=5pt,kpoint common,fill=black,font=\color{white}]
\tikzstyle{black kpointadj}=[shape=cornercopoint,shorten left=5pt,kpoint common,fill=black,font=\color{white}]
\tikzstyle{black kpointadj sm}=[shape=cornercopoint,shorten left=5pt,kpoint common,fill=black,font=\color{white},scale=0.75]
\tikzstyle{black dkpoint}=[shape=cornerpoint,shorten left=5pt,kpoint common,fill=black, doubled,font=\color{white}]
\tikzstyle{black dkpoint adjoint}=[shape=cornercopoint,shorten left=5pt,kpoint common,fill=black, doubled,font=\color{white}]
\tikzstyle{black dkpointadj}=[shape=cornercopoint,shorten left=5pt,kpoint common,fill=black, doubled,font=\color{white}]
\tikzstyle{black dkpoint sm}=[shape=cornerpoint,shorten left=5pt,kpoint common,fill=black, doubled,font=\color{white},scale=0.75]
\tikzstyle{black dkpointadj sm}=[shape=cornercopoint,shorten left=5pt,kpoint common,fill=black, doubled,font=\color{white},scale=0.75]
\tikzstyle{kpointdag}=[kpoint adjoint]
\tikzstyle{kpointadj}=[kpoint adjoint]
\tikzstyle{kpointconj}=[kpoint conjugate]
\tikzstyle{kpointtrans}=[kpoint transpose]
\tikzstyle{big kpoint}=[kpoint, minimum width=1.2 cm, minimum height=8mm, inner sep=4pt, text depth=3mm]
\tikzstyle{wide kpoint}=[kpoint, minimum width=1 cm, inner sep=2pt]
\tikzstyle{wide kpointdag}=[kpointdag, minimum width=1 cm, inner sep=2pt]
\tikzstyle{wide kpointconj}=[kpointconj, minimum width=1 cm, inner sep=2pt]
\tikzstyle{wide kpointtrans}=[kpointtrans, minimum width=1 cm, inner sep=2pt]
\tikzstyle{wider kpoint}=[kpoint, minimum width=1.25 cm, inner sep=2pt]
\tikzstyle{wider kpointdag}=[kpointdag, minimum width=1.25 cm, inner sep=2pt]
\tikzstyle{wider kpointconj}=[kpointconj, minimum width=1.25 cm, inner sep=2pt]
\tikzstyle{wider kpointtrans}=[kpointtrans, minimum width=1.25 cm, inner sep=2pt]
\tikzstyle{gray kpoint}=[kpoint,fill=gray!50!white]
\tikzstyle{gray kpointdag}=[kpointdag,fill=gray!50!white]
\tikzstyle{gray kpointadj}=[kpointadj,fill=gray!50!white]
\tikzstyle{gray kpointconj}=[kpointconj,fill=gray!50!white]
\tikzstyle{gray kpointtrans}=[kpointtrans,fill=gray!50!white]
\tikzstyle{gray dkpoint}=[kpoint,fill=gray!50!white,doubled]
\tikzstyle{gray dkpointdag}=[kpointdag,fill=gray!50!white,doubled]
\tikzstyle{gray dkpointadj}=[kpointadj,fill=gray!50!white,doubled]
\tikzstyle{gray dkpointconj}=[kpointconj,fill=gray!50!white,doubled]
\tikzstyle{gray dkpointtrans}=[kpointtrans,fill=gray!50!white,doubled]
\tikzstyle{white label}=[draw,fill=white,rectangle,inner sep=0.7 mm]
\tikzstyle{gray label}=[draw,fill=gray!50!white,rectangle,inner sep=0.7 mm]
\tikzstyle{black label}=[draw,fill=black,rectangle,inner sep=0.7 mm]
\tikzstyle{dkpoint}=[kpoint,doubled]
\tikzstyle{wide dkpoint}=[wide kpoint,doubled]
\tikzstyle{dkpointdag}=[kpoint adjoint,doubled]
\tikzstyle{wide dkpointdag}=[wide kpointdag,doubled]
\tikzstyle{dkcopoint}=[kpoint adjoint,doubled]
\tikzstyle{dkpointadj}=[kpoint adjoint,doubled]
\tikzstyle{dkpointconj}=[kpoint conjugate,doubled]
\tikzstyle{dkpointtrans}=[kpoint transpose,doubled]
\tikzstyle{kscalar}=[kpoint common, shape=EBox, inner xsep=-1pt, inner ysep=3pt,font=\small]
\tikzstyle{kscalarconj}=[kpoint common, shape=WBox, inner xsep=-1pt, inner ysep=3pt,font=\small]
\tikzstyle{spekpoint}=[kpoint sc,minimum height=5mm,inner sep=3pt]
\tikzstyle{spekcopoint}=[kpoint adjoint sc,minimum height=5mm,inner sep=3pt]
\tikzstyle{dspekpoint}=[spekpoint,doubled]
\tikzstyle{dspekcopoint}=[spekcopoint,doubled]
 \tikzstyle{upground}=[circuit ee IEC,thick,ground,rotate=90,scale=2.5]
 \tikzstyle{downground}=[circuit ee IEC,thick,ground,rotate=-90,scale=2.5]
 \tikzstyle{bigground}=[regular polygon,regular polygon sides=3,draw=gray,scale=0.50,inner sep=-0.5pt,minimum width=10mm,fill=gray]
 \tikzstyle{maxmix}=[regular polygon,regular polygon sides=3,draw=black,xscale=0.4,yscale=0.3,inner sep=-0.5pt,minimum width=10mm,fill=gray,regular polygon rotate=180]
\tikzstyle{arrs}=[-latex,font=\small,auto]
\tikzstyle{arrow plain}=[arrs]
\tikzstyle{arrow dashed}=[dashed,arrs]
\tikzstyle{arrow bold}=[very thick,arrs]
\tikzstyle{arrow hide}=[draw=white!0,-]
\tikzstyle{arrow reverse}=[latex-]
\tikzstyle{cdnode}=[]
\newcommand{\disc}{%
\,\begin{tikzpicture}[yshift=1.5mm]
\node [disc] (a) at (0,0) {};
\node [style=none] (0) at (0, -0.7) {}; 
\node [style=none] (1) at (0, -0.15) {}; 
\draw [style=none] (0.center) to (1.center);
\end{tikzpicture}\,}
\tikzstyle{tikzfig}=[baseline=-0.25em,scale=4]
\tikzstyle{none}=[inner sep=0mm]
\tikzstyle{every loop}=[]
\newcommand\abs[1]{\left|#1\right|}
\newcommand{\tr}{\mathrm}
\def\be{\begin{equation}}
\def\ee{\end{equation}}
\def\ba{\begin{align}}
\def\ea{\end{align}}
\newcommand{\cat}[1]{\ensuremath{\mathbf{#1}}}
\newcommand{\FHilb}{\cat{FHilb}}
\newcommand{\CPM}{\ensuremath{\mathrm{CPM}}\xspace}
\DeclareMathOperator{\Tr}{Tr}
\newcommand{\ca}{\mathcal A}
\newcommand{\cb}{\mathcal B}
\newcommand{\cc}{\mathcal C}
\newcommand{\cd}{\mathcal D}
\newcommand{\ce}{\mathcal E}
\newcommand{\cf}{\mathcal F}
\newcommand{\ch}{\mathcal H}
\newcommand{\ck}{\mathcal K}
\newcommand{\cl}{\mathcal L}
\newcommand{\calp}{\mathcal P}
\newcommand{\cs}{\mathcal S}
\newcommand{\ct}{\mathcal T}
\newcommand{\cu}{\mathcal U}
\newcommand{\cw}{\mathcal W}
\newcommand{\cx}{\mathcal X}
\newcommand{\cz}{\mathcal Z}
\def\tl{\tilde}
\def\tl{\tilde}
\newtheorem{definition}{Definition}
\newtheorem{theorem}{Theorem}
\newtheorem{lemma}{Lemma}
\newcommand{\correction}[1]{{#1}}
\begin{document} 

\title{Routed quantum circuits}

\author{Augustin Vanrietvelde}
\orcid{0000-0001-9022-8655}
\email{a.vanrietvelde18@imperial.ac.uk}
\affiliation{Quantum Group, Department of Computer Science, University of Oxford}
\affiliation{Department of Physics, Imperial College London}
\affiliation{HKU-Oxford Joint Laboratory for Quantum Information and Computation}

\author{Hl\'er Kristj\'ansson}
\orcid{0000-0003-4465-2863}
\email{hler.kristjansson@st-annes.ox.ac.uk}
\affiliation{Quantum Group, Department of Computer Science, University of Oxford}
\affiliation{HKU-Oxford Joint Laboratory for Quantum Information and Computation}

\author{Jonathan Barrett}
\orcid{0000-0002-2222-0579}
\email{jonathan.barrett@cs.ox.ac.uk}
\affiliation{Quantum Group, Department of Computer Science, University of Oxford}

\begin{abstract}
      We argue that the quantum-theoretical structures studied in several recent lines of research cannot be adequately described within the standard framework of quantum circuits. This is in particular the case whenever the combination of subsystems is described by a nontrivial blend of direct sums and tensor products of Hilbert spaces. We therefore propose an extension to the framework of quantum circuits, given by \textit{routed linear maps} and \textit{routed quantum circuits}. We prove that this new framework allows for a consistent and intuitive diagrammatic representation in terms of circuit diagrams, applicable to both pure and mixed quantum theory, and exemplify its use in several situations, including the superposition of quantum channels and the causal decompositions of unitaries. We show that our framework encompasses the `extended circuit diagrams' of Lorenz and Barrett [arXiv:2001.07774 (2020)], which we derive as a special case, endowing them with a sound semantics. 
\end{abstract}

\maketitle

\tableofcontents

\section{Introduction}

Finite-dimensional quantum theory \correction{ is often} described through the framework of quantum circuits \cite{deutsch1989quantum,aharonov1998quantum,abramsky2004categorical,nielsen2000quantum,coecke_kissinger_2017}. This framework is built on the possibility of both sequential and parallel compositions, with the latter represented by tensor products of completely positive trace-preserving  maps. This enables quantum theory to be expressed  graphically through circuit diagrams, the intuitive nature of which is an important reason for the success of the framework of quantum circuits.

Yet, two different lines of research recently led to the introduction of scenarios and structures which, as we shall argue in the present work, cannot be described in a fully satisfactory way when using the sole framework of standard quantum circuits\footnote{This has to be distinguished from the recent surge in interest in indefinite causal order (ICO) \cite{hardy2005probability,chiribella2009beyond,chiribella2013quantum,oreshkov2012quantum}, which also goes beyond standard quantum circuits in a different way. Even though the framework presented here would have natural applications to the study of ICO \cite{barrett2020cyclic}, it is important to emphasise that the features that differentiate it from standard quantum circuits are not of the same kind as those that give rise to ICO; all the constructions under study in the present work display a definite causal order.}. These scenarios and structures share two deeply interlinked key features. The first of these features is the presence of quantum channels satisfying specific constraints on their ability to relate given sectors (i.e.\ orthogonal subspaces) of their input and output spaces (we shall call these constraints `sectorial constraints'). The second is the existence of parallel compositions of such channels which consist in non-trivial blends of tensor products and direct sums (we shall characterise them as featuring `sectorial correlations' between parallel systems).

\correction{ The first of the lines of research exhibiting these features is that of coherent control, or superpositions, of quantum channels \cite{aharonov1990superpositions,aaberg2004subspace,aaberg2004operations,oi2003interference}. The coherent control of quantum channels generalises the `if' clause in classical computation to the quantum level, where the use or not of a given quantum channel is determined by the value of another quantum system -- enabling the possibility of using multiple quantum channels in a superposition. This idea can be applied equally to computation, where it is often referred to as control of unknown unitaries (or channels) \cite{zhou2011adding, soeda2013limitations, araujo2014, friis2014, gavorova2020topological,thompson2018, dong2019controlled} and communication, where it takes the form of communication in a superposition of trajectories \cite{gisin2005error,abbott2020communication,chiribella2019shannon,kristjansson2020single,kristjansson2020resources,lamoureux2005experimental,rubino2020experimental}.  We shall describe superpositions of quantum channels in depth in Section \ref{sec:CommSuperpos}, focusing on the communication perspective of Ref.\ \cite{chiribella2019shannon}. In this line of research, in particular from the communication point of view,} sectorial constraints and non-trivial parallel compositions have a direct physical meaning, as they serve to describe a physical setup in which a particle is sent in a coherent superposition of two transmission lines. The sectorial constraints correspond to the fact that transmission lines are required not to modify the number of particles going through them. The non-trivial parallel composition of the two transmission lines corresponds to the fact that there is exactly one particle in total.

A second line of research, that of causal decompositions \cite{Allen2017, barrett2019, lorenz2020,barrett2020cyclic} (described more comprehensively in Section \ref{sec:CausalDecs}), aims at probing the equivalence between the compatibility of a unitary channel with a given causal structure, and the existence of a decomposition for this channel along a given unitary circuit in which this causal structure is made obvious. In this context, it was found that, in order to prove such an equivalence in some cases, one has to consider decompositions in which some of the unitary channels satisfy sectorial constraints and are parallelly composed in a non-trivial way. In other words, describing the full scope of possible compositional structures of channels in quantum theory requires the introduction of sectorial constraints and non-trivial parallel compositions -- something which cannot be done through the use of standard quantum circuits. Here, in contrast with the previous example, these new structures are not introduced to depict physical specifications, but to unlock a more flexible mathematical description.

It is striking that two very different lines of research led to the introduction of the same structures; this might be an indication that such structures are common in the study of quantum phenomena. Indeed, they can be seen as being the proper tools to depict the breaking down of a finite-dimensional C* algebra into (possibly non-factor) commuting subalgebras \cite{sinclair_smith_2008}, a decomposition which is central to powerful theorems and techniques in quantum theory, such as the Schur-Weyl duality and its generalisations \cite{marvian2014, Harrow2005ApplicationsOC}, or the existence of decoherence-free subspaces \cite{palma1996quantum, duan1997preserving, zanardi1997noiseless, lidar1998decoherence, beige2000quantum, kwiat2000experimental}. Possibly non-factor sub-C* algebras have also been argued to yield the proper mathematical notion of subsystems in quantum theory (for an overview, see Ref. \cite{chiribella2018agents} and the references therein), \correction{and have recently been studied for the properties of their entanglement entropy \cite{Ma_2016, lin2018comments, Bianchi_2019}.}
Finally, these structures are common in quantum optical setups, in which photons are usually sent in a superposition of trajectories; it has already been argued in \cite{araujo2014} that such setups cannot in general be suitably described by standard quantum circuits \correction{ (the discussion in \cite{friis2014} also relates to this point, see footnote \ref{foot:araujoFriis}).}

Even though they were shown to be fairly natural and physically meaningful in the two lines of research in which they have been introduced, the new structures corresponding to sectorial constraints and correlations were described, and had their consistency proven, only at the level of the specific situations considered in each case; no systematic description and formalisation has been undertaken. As we shall argue in Section \ref{sec:motivation}, such a systematic description cannot be faithfully obtained when solely using the framework of quantum circuits, as this framework lacks the tools both to hardcode sectorial constraints on quantum channels, and to describe non-trivial parallel compositions. An extension of this framework is therefore needed.

Such an extension would need to satisfy several requirements. First, given the strong compositional flavour of the structures at hand, it would need to be appropriately compositional (i.e., emphasising the sequential and parallel composition operations on channels and their structures). Second, drawing lessons from recent successful re-expressions of quantum theory \cite{hardy2016reconstructing, chiribella2011informational, masanes2011derivation, selby2018reconstructing, abramsky2004categorical}, this extension would have to form a \textit{process theory} \cite{coecke_kissinger_2017}, focusing on dynamical processes (i.e., channels) and obtaining states as specific cases of the latter: the emphasis on compositional structure is more natural in process theories. Third, it should yield intuitive, higher-level (i.e.\ diagrammatic) expressions, from which the sectorial constraints and correlations can be read off. Finally, this extension should be general enough, encompassing all possible situations in which sectorial constraints and correlations could appear, and should yield a theory of pure states and unitaries as well as one of mixed states and general quantum channels.

The aim of this paper is to present such an extension, obtained through the introduction of \textit{routed maps}. In addition to a linear map, a routed map includes a boolean matrix, called the \textit{route}, which encodes constraints on the possibility for the linear map to relate given subspaces of its input and output spaces (sectorial constraints). Sectorial correlations are then obtained as a contextual feature, specified by the global circuit in which a parallel composition is present (and, more specifically, by the routes in this circuit). We prove that routed maps form a consistent theory which faithfully describes the structures previously mentioned, and which possesses the compositional features required to accept a consistent diagrammatic expression. In both the pure and mixed theories, we single out physically meaningful routed maps and provide rules for the (non-trivial) problem of their sequential composition. We present and exemplify a general and user-friendly diagrammatic representation for routed maps, and show how it allows for faithful and intuitive depictions of physical scenarios.

Finally, we use our framework to provide a solid mathematical basis for the `extended quantum circuits' of Ref.\ \cite{lorenz2020}, previously introduced in a qualitative and non-systematic way for the study of causal decompositions of unitary channels. We show here how a sub-framework of routed maps, that of \textit{index-matching routed maps}, allows to give a precise meaning to the `extended quantum circuits' of Ref.\ \cite{lorenz2020} (which we shall here denote as `index-matching quantum circuits'). We use this framework to provide a systematic account of index-matching quantum circuits and of their interpretations, and in particular to prove a simple rule which singles out the well-indexed index-matching quantum circuits, which correspond to physical circuits. 

One of the main achievements of this paper is to prove that the frameworks we build can be expressed in a fully diagrammatic language, in terms of circuits. The proper mathematical characterisation of this feature is given by category theory, and more specifically by the fact that these frameworks form dagger symmetric monoidal categories (dagger SMCs). In order not to burden the paper with formal notions, we leave our discussion of these categorical features to the appendices; in the main text, we will be describing the structural qualities of our frameworks in more natural terms. An account of the categorical constructions leading to routed theories will also be provided in an upcoming paper \cite{vanrietvelde2020categorical}.

The structure of this paper is as follows. In Section \ref{sec:motivation}, we argue for the necessity of going beyond standard quantum circuits, by presenting and analysing the results of two recent lines of research. In Section \ref{sec:PureRoutedMaps}, we present the framework of routed maps in the case of pure quantum theory, prove that it allows for a consistent diagrammatic representation, and show it in action by means of an example. In Section \ref{sec:RoutedDiagrams}, we describe the kinds of diagrams this framework leads to, and their interpretation. In Section \ref{sec:MixedRoutedMaps}, we extend our framework to encompass general quantum channels, and illustrate with an example how it leads to natural decoherence computations. Finally, in Section \ref{sec:IndexMatching}, we show how a sub-framework of routed maps gives rise to the index-matching quantum circuits of Ref.\ \cite{lorenz2020}, endowing them with a sound and systematic semantics.

\section{Motivation} \label{sec:motivation}

In order to motivate the introduction of routed maps, we will first describe the objects of study and the results of two recent lines of research, and argue why they cannot be described in a fully satisfactory manner within the sole framework of quantum circuits\footnote{\correction{In technical terms, this means that they cannot be adequately described using the categories \FHilb\,or CPM[\FHilb]. In Appendix \ref{app:CP*}, we extend this discussion to the case of other standard categorical constructions such as CP*[\FHilb] and the Karoubi envelope of CPM[\FHilb].}}.

\subsection{Communication and computation in a superposition of trajectories} \label{sec:CommSuperpos}

The study of communication in a superposition of trajectories was recently proposed as a framework which extends standard quantum Shannon theory  \cite{chiribella2019shannon,kristjansson2020resources}. \correction{In this framework, not only is the information carried by a particle allowed to be in a quantum state, but also the trajectory of the quantum information carrier is allowed to be in a coherent superposition of different transmission lines. The goal of Refs.\ \cite{abbott2020communication,chiribella2019shannon} was to study the communication advantages that using such scenarios can lead to. In a similar spirit, several recent works have investigated the use in quantum  computation of controlling unknown `black box' quantum operations, where a quantum control system determines if one unknown operation is used, or another, or both in a superposition \cite{zhou2011adding, soeda2013limitations, araujo2014, friis2014, gavorova2020topological,thompson2018, dong2019controlled}.
Here, we will describe the challenges that these works raise in terms of mathematical formalisation.} To do so, we will focus on a paradigmatic example of communication in a superposition of trajectories, the `one particle in two trajectories' scenario \cite{chiribella2019shannon,abbott2020communication}, focusing specifically on the perspective of Ref.\ \cite{chiribella2019shannon}. 

Even though we will focus on this specific example in order to present sharp arguments on a well-defined situation, the following analysis applies to a wide range of quantum-optical or interferometric setups to show that, in general, they cannot be faithfully described by standard quantum circuits. Indeed, the features of this example are common in quantum optics, in which sending a photon in a superposition of trajectories, and applying operations which do not modify the number of photons (such as waveplates) are standard protocols. This point has, in particular, been encountered in different guises in the literature discussing the possibility of coherently controlling unknown unitaries \cite{zhou2011adding, araujo2014, friis2014, gavorova2020topological} or quantum channels \cite{aharonov1990superpositions,aaberg2004subspace,aaberg2004operations,oi2003interference, thompson2018, abbott2020communication,chiribella2019shannon, dong2019controlled}: no-go theorems forbid such coherent control in standard quantum circuits, yet it is achievable in simple quantum optical implementations. This is because these quantum optical implementations  of coherent control of quantum channels cannot be correctly described by standard quantum circuits\footnote{\label{foot:araujoFriis}On this point, see in particular the discussions in \cite{araujo2014} and \cite{friis2014}. Ref.\ \cite{araujo2014} concludes that `the language of quantum circuits should be extended in order to capture all information processing possibilities allowed by quantum physics': the present work precisely describes such an extension. \correction{Ref.\,\cite{friis2014} argues that implementations of coherent control can be represented using standard quantum circuits, but that `the  correct  circuit  representation  of  a  schematic  may  not  resemble  it  visually'; in terms of our requirements for `adequate representations', this entails that this representation in standard quantum circuits will be inadequate as it does not neatly distinguish the subsystem on which the operation to be controlled is acting.}}.

\correction{In this work, we will not discuss further the specific problem of coherently controlling an unknown unitary; we will provide a detailed analysis and an account in terms of routed circuits in a separate work \cite{vanrietvelde2021coherent}. In general, however, we note that the coherent control of a unitary is a special case of a superposition of trajectories through two independent unitaries, discussed below, where one of the two unitaries is the identity.}

Going back to our main example, let us describe it in physical terms. There is a sender $S$ and a receiver $R$, and there are two different communication lines from $S$ to $R$. These communication lines, $A$ and $B$, can be thought of as being under the control of two agents, Alice and Bob. The sender wants to transmit a qudit message (also called a particle) $M$, and there is an additional control qubit $C$, whose value coherently determines whether the message goes through Alice's or Bob's communication lines. The agent who does not get the message is instead handed a `vacuum state', orthogonal to the possible states of the message. The agents' operations do not modify the number of particles in their communication lines. Afterwards, the two communication lines are merged back, yielding again a message and a control qubit for the receiver to analyse.

A mathematical description of this scenario is the following. The communication lines are represented as quantum channels (i.e.\ completely positive trace-preserving (CPTP) maps) $\ca$ and $\cb$, acting on $\cl (\ch_A)$ and $\cl (\ch_B)$, where $\ch_A$ and $\ch_B$ are Hilbert spaces of dimension $d+1$, and $\cl (\ch)$ is the set of linear operators on $\ch$. Moreover, each of these Hilbert spaces is partitioned into two orthogonal subspaces (also called sectors): a one-dimensional subspace corresponding to the vacuum state, and a $d$-dimensional subspace corresponding to the possible states of the message. This partitioning is written as $\ch_A = \ch_A^0 \oplus \ch_A^1, \, \ch_B= \ch_B^0 \oplus \ch_B^1$. $\mathcal{A}$ and $\mathcal{B}$ are constrained to map the vacuum sector to itself, and the one-particle sector to itself; this is called the \textit{no-leakage condition} in Ref.\ \cite{chiribella2019shannon}. Finally, initialisation of the trajectories' superposition is described by a unitary channel $\mathcal{U}$ from $\cl (\ch_M \otimes \ch_C)$ to $\cl ( \widetilde{\ch}_{AB} )$, where $\widetilde{\ch}_{AB} := \ch_A^1 \otimes \ch_B^0 \,\, \oplus \,\, \ch_A^0 \otimes \ch_B^1$ is the one-particle subspace of $\ch_A \otimes \ch_B$; the termination of the superposition is given by $\mathcal{U}^\dagger$ \correction{ (it could also be given by any other unitary channel from $\cl ( \widetilde{\ch}_{AB} )$ to $\cl (\ch_M \otimes \ch_C)$)}. Note that this scenario is consistent because of the crucial requirement that the vacuum and one-particle sectors are preserved by $\mathcal{A}$ and $\mathcal{B}$.

Let us now elaborate on why the sole framework of \correction{standard} quantum circuits cannot provide a fully satisfactory description of this scenario. Given that Alice and Bob's channels are applied in parallel and in different regions of spacetime, a natural diagrammatic representation of this scenario should have the following form (here, $\ce$ stands for `encoding' and $\cd$ for `decoding'):

\be \label{eq:SatisfactoryRep} %
\begin{tikzpicture}
	\begin{pgfonlayer}{nodelayer}
		\node [style=map] (0) at (-1.25, 0) {$\mathcal{A}$};
		\node [style=none] (1) at (-1.25, -2.5) {};
		\node [style=none] (2) at (-1.25, 2.5) {};
		\node [style=right label] (3) at (-1.25, 1.25) {$A$};
		\node [style=right label] (4) at (-1.25, -1.25) {$A$};
		\node [style=map] (5) at (1.25, 0) {$\mathcal{B}$};
		\node [style=none] (6) at (1.25, -2.5) {};
		\node [style=none] (7) at (1.25, 2.5) {};
		\node [style=right label] (8) at (1.25, 1.25) {$B$};
		\node [style=right label] (9) at (1.25, -1.25) {$B$};
		\node [style=large map] (10) at (0, 2.5) {$\mathcal{D}$};
		\node [style=large map] (11) at (0, -2.5) {$\mathcal{E}$};
		\node [style=none] (13) at (-1, -4) {};
		\node [style=right label] (14) at (-1, -3.75) {$M$};
		\node [style=none] (16) at (1, -4) {};
		\node [style=right label] (17) at (1, -3.75) {$C$};
		\node [style=none] (18) at (-1, -2.5) {};
		\node [style=none] (19) at (1, -2.5) {};
		\node [style=none] (20) at (-1, 4) {};
		\node [style=right label] (21) at (-1, 3.75) {$M$};
		\node [style=none] (22) at (1, 4) {};
		\node [style=right label] (23) at (1, 3.75) {$C$};
		\node [style=none] (24) at (-1, 2.5) {};
		\node [style=none] (25) at (1, 2.5) {};
	\end{pgfonlayer}
	\begin{pgfonlayer}{edgelayer}
		\draw (2.center) to (0);
		\draw (1.center) to (0);
		\draw (7.center) to (5);
		\draw (6.center) to (5);
		\draw (18.center) to (13.center);
		\draw (19.center) to (16.center);
		\draw (24.center) to (20.center);
		\draw (25.center) to (22.center);
	\end{pgfonlayer}
\end{tikzpicture}} \ee

If this diagram is to be understood as a quantum circuit, its boxes correspond to CPTP maps, and its wires correspond to spaces of linear operations on Hilbert spaces. The conjunction of wires $A$ and $B$ should correspond to the space $\cl (\ch_A \otimes \ch_B)$. This description as a standard quantum circuit depicts neither sectorial correlations -- here, the fact that $B$ gets the vacuum state if $A$ gets the particle, and reciprocally -- nor sectorial constraints -- here, the fact that $\ca$ and $\cb$ have to preserve the number of particles.

A first, general source of dissatisfaction with such a description is therefore its looseness: it does not include key elements of the scenario. For example, $\ce$ is a map to $\cl (\ch_A \otimes \ch_B)$, and not to its subspace $\cl (\widetilde{\ch}_{AB})$, even though the latter is the sector which will always be `used' in this scenario, no matter Alice and Bob's choices of operations; in general, one is forced to manipulate Hilbert spaces that are `too big' in comparison with the Hilbert spaces in actual use. Such a looseness implies, for example, that $\cd$ cannot be taken to be equal to $\cu^\dagger$, as the latter has input space $\cl (\widetilde{\ch}_{AB})$ and is therefore not trace-preserving if extended to $\cl (\ch_A \otimes \ch_B)$. To make $\cd$ trace-preserving, one will have to incorporate to it elements that specify how it acts on the other sectors of $\cl (\ch_A \otimes \ch_B)$, even though this is irrelevant information as far as the scenario is concerned.

This leads to a second, sharper critique: if we restrict to the scenario where $\ca$ and $\cb$ are unitary channels, then a description in standard quantum circuits will have to use the non-unitary channels $\ce$ and $\cd$, even though the scenario itself is then fully unitary. In other words, this is then an example of a fully unitary scenario which cannot be described in unitary quantum circuits.

The absence of a depiction of sectorial constraints on $\ca$ and $\cb$ is also a source of looseness, as the quantum circuit description will be missing a key aspect of the scenario, one that is pivotal not only to its physical interpretation -- as the constraints correspond to a natural physical requirement, that of not modifying the number of  -- but also to its consistency at the mathematical level, as $\ca$ and $\cb$ do not map $\cl(\widetilde{\ch}_{AB})$ to itself if they do not respect the sectorial constraints.

The mathematical consistency of a scenario can of course always be proven `by hand', by explicitly stating the sectorial correlations and constraints and showing that they lead to an overall consistent description; Refs.\ \cite{chiribella2019shannon,kristjansson2020single} proceed in this way. Yet, such an unsystematic account has two drawbacks. First, it yields no diagrammatic depiction of the existence of crucial sectorial correlations and constraints. Second, it will not scale up nicely to more elaborate situations: for more particles, for more channels, for more intricate sectorial constraints (where Alice, for example, is allowed to destroy a particle but not to create one), and for sectorial constraints which also include coherence conditions (for example, if Alice's action on one sector is restricted to be decoherent with her action on another sector). Proving the consistency of such scenarios by hand would quickly become tedious. In contrast, a systematic and consistent framework, such as the framework of routed maps, will not only enable a faithful and intuitive depiction of their structural features, but also include general consistency theorems ensuring that they are well-defined.

\subsection{Causal decompositions of unitary channels} \label{sec:CausalDecs}

We now move on to our second example. The study of causal decompositions \cite{Allen2017, barrett2019, lorenz2020,barrett2020cyclic} aims at exploring the connection between the \textit{causal structure} and the \textit{compositional structure} of unitary channels. Given a unitary channel with several inputs and several outputs (i.e., such that their input and output Hilbert spaces have preferred factorisations into tensor products), its causal structure is described by a set of no-influence relations. A no-influence relation is the fact that modifications of a given input cannot affect a given output. Formally, it is defined in the following way: if $\cu$ is a unitary channel from inputs $A$ and $B$ to outputs $C$ and $D$, one says that $A$ cannot influence $D$ (written $A \not\to D$) if there exists a channel $\cc$ such that

\be \label{Causal} %
\begin{tikzpicture}
	\begin{pgfonlayer}{nodelayer}
		\node [style=right label] (1) at (-1, 1) {$C$};
		\node [style=right label] (3) at (1, 2) {$D$};
		\node [style=large map] (4) at (0, 0) {$\mathcal{U}$};
		\node [style=none] (5) at (-1, -2) {};
		\node [style=right label] (6) at (-1, -1.75) {$A$};
		\node [style=none] (7) at (1, -2) {};
		\node [style=right label] (8) at (1, -1.75) {$B$};
		\node [style=none] (9) at (-1, 0) {};
		\node [style=none] (10) at (1, 0) {};
		\node [style=none] (11) at (-1, 1.75) {};
		\node [style=none] (12) at (1, 2.25) {};
		\node [style=upground] (13) at (-1, 2) {};
	\end{pgfonlayer}
	\begin{pgfonlayer}{edgelayer}
		\draw (9.center) to (5.center);
		\draw (10.center) to (7.center);
		\draw (9.center) to (11.center);
		\draw (10.center) to (12.center);
	\end{pgfonlayer}
\end{tikzpicture}} \,  = \, %
\begin{tikzpicture}
	\begin{pgfonlayer}{nodelayer}
		\node [style=right label] (1) at (1, 2) {$D$};
		\node [style=map] (2) at (1, 0) {$\mathcal{C}$};
		\node [style=none] (3) at (-1, -2) {};
		\node [style=right label] (4) at (-1, -1.75) {$A$};
		\node [style=none] (5) at (1, -2) {};
		\node [style=right label] (6) at (1, -1.75) {$B$};
		\node [style=none] (7) at (-1, -1) {};
		\node [style=none] (10) at (1, 2.25) {};
		\node [style=upground] (11) at (-1, -0.75) {};
	\end{pgfonlayer}
	\begin{pgfonlayer}{edgelayer}
		\draw (7.center) to (3.center);
		\draw (10.center) to (2);
		\draw (5.center) to (2);
	\end{pgfonlayer}
\end{tikzpicture}} \, , \ee
where the symbol $\disc$ denotes the trace-out channel. 

Compositional structure, on the other hand, corresponds to the existence of decompositions of a given unitary channel into several unitary channels along a given graph. For example, there exist unitary channels from inputs $A$ and $B$ to outputs $C$ and $D$ which admit a decomposition of the following form:
 
\be \label{Comp} %
\begin{tikzpicture}
	\begin{pgfonlayer}{nodelayer}
		\node [style=right label] (0) at (-1, 1.75) {$C$};
		\node [style=right label] (1) at (1, 1.75) {$D$};
		\node [style=large map] (2) at (0, 0) {$\mathcal{U}$};
		\node [style=none] (3) at (-1, -2) {};
		\node [style=right label] (4) at (-1, -1.75) {$A$};
		\node [style=none] (5) at (1, -2) {};
		\node [style=right label] (6) at (1, -1.75) {$B$};
		\node [style=none] (7) at (-1, 0) {};
		\node [style=none] (8) at (1, 0) {};
		\node [style=none] (9) at (-1, 2) {};
		\node [style=none] (10) at (1, 2) {};
	\end{pgfonlayer}
	\begin{pgfonlayer}{edgelayer}
		\draw (7.center) to (3.center);
		\draw (8.center) to (5.center);
		\draw (7.center) to (9.center);
		\draw (8.center) to (10.center);
	\end{pgfonlayer}
\end{tikzpicture}} \,  = \, %
\begin{tikzpicture}
	\begin{pgfonlayer}{nodelayer}
		\node [style=right label] (0) at (-0.75, 1.75) {$C$};
		\node [style=right label] (1) at (1.25, 1.75) {$D$};
		\node [style=none] (3) at (-1.25, -2) {};
		\node [style=right label] (4) at (-1.25, -1.75) {$A$};
		\node [style=none] (5) at (0.75, -2) {};
		\node [style=right label] (6) at (0.75, -1.75) {$B$};
		\node [style=none] (9) at (-0.75, 2) {};
		\node [style=none] (10) at (1.25, 2) {};
		\node [style=map] (11) at (0.75, -0.75) {$\cu_1$};
		\node [style=none] (12) at (1.25, -0.5) {};
		\node [style=none] (13) at (0.25, -0.5) {};
		\node [style=map] (14) at (-0.75, 0.75) {$\cu_2$};
		\node [style=none] (15) at (0.75, -1) {};
		\node [style=none] (16) at (-0.75, 1) {};
		\node [style=none] (17) at (-1.25, 0.5) {};
		\node [style=none] (18) at (-0.25, 0.5) {};
	\end{pgfonlayer}
	\begin{pgfonlayer}{edgelayer}
		\draw (9.center) to (16.center);
		\draw (17.center) to (3.center);
		\draw (15.center) to (5.center);
		\draw (12.center) to (10.center);
		\draw [in=90, out=-90] (18.center) to (13.center);
	\end{pgfonlayer}
\end{tikzpicture}} \, . \ee

Clearly, for a given unitary channel, the existence of a decomposition of the form (\ref{Comp}) implies that this channel satisfies the no-influence relation $A \not\to D$, as described in (\ref{Causal}). Interestingly, the converse is also true \cite{Eggeling_2002}: if a unitary channel satisfies $A \not\to D$, then it admits a decomposition of the form (\ref{Comp}). The central conjecture that the research programme of \textit{causal decompositions} aims to probe is that this equivalence between causal and compositional structures for unitary channels, which we just illustrated in a simple example, holds in general: for any number of inputs and outputs, a given set of no-influence relations is equivalent to the existence of a decomposition along a given graph, in which these no-influence relations are made obvious. So far, no counter-example to this conjecture has been found, and it has been proven in numerous cases; yet, in some of these cases, the compositional structure had to be expressed by going beyond standard quantum circuits.

The paradigmatic example of this is the case of unitary channels with three inputs $A_I$, $E_I$, $B_I$ and three outputs $A_O$, $E_O$, $B_O$, obeying the no-influence relations $A_I \not\to B_O$ and $B_I \not\to A_O$. The causal decomposition corresponding to this pair of no-influence relations should have the following graph:

\be \label{Diamond} %
\begin{tikzpicture}
	\begin{pgfonlayer}{nodelayer}
		\node [style=none] (0) at (-1.25, 0) {};
		\node [style=right label] (1) at (-1.25, 1.25) {$A_O$};
		\node [style=none] (2) at (1.25, 0) {};
		\node [style=right label] (3) at (1.25, 1.25) {$B_O$};
		\node [style=large map] (4) at (0, 0) {$U$};
		\node [style=none] (5) at (-1.25, -1.5) {};
		\node [style=right label] (6) at (-1.25, -1.25) {$A_I$};
		\node [style=none] (7) at (1.25, -1.5) {};
		\node [style=right label] (8) at (1.25, -1.25) {$B_I$};
		\node [style=none] (9) at (-1.25, 0) {};
		\node [style=none] (10) at (1.25, 0) {};
		\node [style=none] (11) at (-1.25, 1.5) {};
		\node [style=none] (12) at (1.25, 1.5) {};
		\node [style=none] (13) at (0, 0) {};
		\node [style=right label] (14) at (0, 1.25) {$E_O$};
		\node [style=none] (15) at (0, -1.5) {};
		\node [style=right label] (16) at (0, -1.25) {$E_I$};
		\node [style=none] (17) at (0, 0) {};
		\node [style=none] (18) at (0, 1.5) {};
	\end{pgfonlayer}
	\begin{pgfonlayer}{edgelayer}
		\draw (9.center) to (5.center);
		\draw (10.center) to (7.center);
		\draw (11.center) to (0.center);
		\draw (12.center) to (2.center);
		\draw (17.center) to (15.center);
		\draw (18.center) to (13.center);
	\end{pgfonlayer}
\end{tikzpicture}} \,  = \, %
\begin{tikzpicture}
	\begin{pgfonlayer}{nodelayer}
		\node [style=right label] (0) at (-2.25, 3.25) {$A_O$};
		\node [style=right label] (1) at (2.25, 3.25) {$B_O$};
		\node [style=none] (2) at (-2.25, -3.5) {};
		\node [style=right label] (3) at (-2.25, -3.25) {$A_I$};
		\node [style=none] (4) at (2.25, -3.5) {};
		\node [style=right label] (5) at (2.25, -3.25) {$B_I$};
		\node [style=none] (6) at (-2.25, 3.5) {};
		\node [style=none] (7) at (2.25, 3.5) {};
		\node [style=right label] (8) at (0, 3.25) {$E_O$};
		\node [style=none] (9) at (0, -3.5) {};
		\node [style=right label] (10) at (0, -3.25) {$E_I$};
		\node [style=none] (11) at (0, 3.5) {};
		\node [style=map] (13) at (0, 2.25) {$U_4$};
		\node [style=map] (16) at (0, -2.25) {$U_1$};
		\node [style=none] (20) at (-0.25, 2) {};
		\node [style=none] (21) at (0.25, 2) {};
		\node [style=none] (26) at (0.25, -2) {};
		\node [style=none] (27) at (-0.25, -2) {};
		\node [style=map] (28) at (2, 0) {$U_3$};
		\node [style=map] (29) at (-2, 0) {$U_2$};
		\node [style=none] (30) at (2.25, 0) {};
		\node [style=none] (31) at (-2.25, 0) {};
		\node [style=none] (32) at (1.5, 0.25) {};
		\node [style=none] (33) at (1.5, -0.25) {};
		\node [style=none] (34) at (-1.5, 0.25) {};
		\node [style=none] (35) at (-1.5, -0.25) {};
		\node [style=right label] (36) at (1, 1.25) {$R'$};
		\node [style=right label] (37) at (1, -1.25) {$R$};
		\node [style=left label] (38) at (-1.25, -1.25) {$L$};
		\node [style=left label] (39) at (-1, 1.25) {$L'$};
	\end{pgfonlayer}
	\begin{pgfonlayer}{edgelayer}
		\draw (16) to (9.center);
		\draw (13) to (11.center);
		\draw (6.center) to (31.center);
		\draw (31.center) to (2.center);
		\draw (4.center) to (30.center);
		\draw (30.center) to (7.center);
		\draw [in=90, out=-90] (20.center) to (34.center);
		\draw [in=90, out=-90] (35.center) to (27.center);
		\draw [in=-90, out=90] (26.center) to (33.center);
		\draw [in=-90, out=90] (32.center) to (21.center);
	\end{pgfonlayer}
\end{tikzpicture}} \, . \ee

However, if we take this graph to represent a standard unitary quantum circuit (i.e.\ if we interpret each box as a unitary channel), it is not equivalent to the causal structure: there exist unitary channels which obey the two previously mentioned no-influence relations, but which cannot be decomposed in the form of (\ref{Diamond}) \correction{(see Ref.\,\cite{lorenz2020} for an example)}. Yet there exists a slightly different kind of circuit for which the theorem holds \cite{lorenz2020}, namely:

\be \label{eq:DiamondDot} %
\begin{tikzpicture}
	\begin{pgfonlayer}{nodelayer}
		\node [style=none] (0) at (-1.25, 0) {};
		\node [style=right label] (1) at (-1.25, 1.25) {$A_O$};
		\node [style=none] (2) at (1.25, 0) {};
		\node [style=right label] (3) at (1.25, 1.25) {$B_O$};
		\node [style=large map] (4) at (0, 0) {$U$};
		\node [style=none] (5) at (-1.25, -1.5) {};
		\node [style=right label] (6) at (-1.25, -1.25) {$A_I$};
		\node [style=none] (7) at (1.25, -1.5) {};
		\node [style=right label] (8) at (1.25, -1.25) {$B_I$};
		\node [style=none] (9) at (-1.25, 0) {};
		\node [style=none] (10) at (1.25, 0) {};
		\node [style=none] (11) at (-1.25, 1.5) {};
		\node [style=none] (12) at (1.25, 1.5) {};
		\node [style=none] (13) at (0, 0) {};
		\node [style=right label] (14) at (0, 1.25) {$E_O$};
		\node [style=none] (15) at (0, -1.5) {};
		\node [style=right label] (16) at (0, -1.25) {$E_I$};
		\node [style=none] (17) at (0, 0) {};
		\node [style=none] (18) at (0, 1.5) {};
	\end{pgfonlayer}
	\begin{pgfonlayer}{edgelayer}
		\draw (9.center) to (5.center);
		\draw (10.center) to (7.center);
		\draw (11.center) to (0.center);
		\draw (12.center) to (2.center);
		\draw (17.center) to (15.center);
		\draw (18.center) to (13.center);
	\end{pgfonlayer}
\end{tikzpicture}} \,  = \, %
\begin{tikzpicture}
	\begin{pgfonlayer}{nodelayer}
		\node [style=right label] (0) at (-2.25, 3.25) {$A_O$};
		\node [style=right label] (1) at (2.25, 3.25) {$B_O$};
		\node [style=none] (2) at (-2.25, -3.5) {};
		\node [style=right label] (3) at (-2.25, -3.25) {$A_I$};
		\node [style=none] (4) at (2.25, -3.5) {};
		\node [style=right label] (5) at (2.25, -3.25) {$B_I$};
		\node [style=none] (6) at (-2.25, 3.5) {};
		\node [style=none] (7) at (2.25, 3.5) {};
		\node [style=right label] (8) at (0, 3.25) {$E_O$};
		\node [style=none] (9) at (0, -3.5) {};
		\node [style=right label] (10) at (0, -3.25) {$E_I$};
		\node [style=none] (11) at (0, 3.5) {};
		\node [style=map] (12) at (0, 2.25) {$U_4$};
		\node [style=map] (13) at (0, -2.25) {$U_1$};
		\node [style=none] (14) at (-0.25, 2) {};
		\node [style=none] (15) at (0.25, 2) {};
		\node [style=none] (16) at (0.25, -2) {};
		\node [style=none] (17) at (-0.25, -2) {};
		\node [style=map] (18) at (2, 0) {$U_{3}$};
		\node [style=map] (19) at (-2, 0) {$U_2$};
		\node [style=none] (20) at (2.25, 0) {};
		\node [style=none] (21) at (-2.25, 0) {};
		\node [style=none] (22) at (1.5, 0.25) {};
		\node [style=none] (23) at (1.5, -0.25) {};
		\node [style=none] (24) at (-1.5, 0.25) {};
		\node [style=none] (25) at (-1.5, -0.25) {};
		\node [style=right label] (26) at (1, 1.25) {${R'}^k$};
		\node [style=right label] (27) at (1, -1.25) {$R^k$};
		\node [style=left label] (28) at (-1.25, -1.25) {$L^k$};
		\node [style=left label] (29) at (-1, 1.25) {${L'}^k$};
	\end{pgfonlayer}
	\begin{pgfonlayer}{edgelayer}
		\draw (13) to (9.center);
		\draw (12) to (11.center);
		\draw (6.center) to (21.center);
		\draw (21.center) to (2.center);
		\draw (4.center) to (20.center);
		\draw (20.center) to (7.center);
		\draw [in=90, out=-90] (14.center) to (24.center);
		\draw [in=90, out=-90] (25.center) to (17.center);
		\draw [in=-90, out=90] (16.center) to (23.center);
		\draw [in=-90, out=90] (22.center) to (15.center);
	\end{pgfonlayer}
\end{tikzpicture}} \, . \ee

The decomposition on the right-hand side is written in terms of \textit{index-matching quantum circuits}, with distinctive `$k$' superscripts written on some of the wires. Such diagrams were first introduced in Ref.\ \cite{lorenz2020}, where they were called `extended circuit diagrams'. In the present work, we shall adopt the more specific name `index-matching quantum circuits', and use somewhat different notation to that in the original work \cite{lorenz2020}\footnote{In comparison with Ref.\ \cite{lorenz2020}, we drop the practice of writing superscripts in boxes, writing them only on the wires. Our presentation will also not include `nested indices', i.e.\ superscripts of the form $k l_k$. This is because such nested indices are unnecessary for our needs.}. The superscripts can be interpreted in the following way (where this is again slightly different from the presentation in Ref.\ \cite{lorenz2020}). There exists a finite set of indices $K$; the Hilbert space $\ch_R$ admits a partition into orthogonal subspaces $\ch_R^k$, with indices $k \in K$; so do the other wires with `$k$' superscripts. $U_3$ is a unitary map from $\ch_R \otimes \ch_{B_I}$ to $\ch_{R'} \otimes \ch_{B_O}$, which satisfies the additional requirement that it preserves each of the sectors, in the sense that for any $k$, one has $U_3 (\ch_R^k \otimes \ch_{B_I}) = \ch_{R'}^k \otimes \ch_{B_O}$; a similar condition holds for $U_2$ (to put it in simple terms, $U_2$ and $U_3$ are block-diagonal). $U_1$ is a unitary map from $\ch_{E_I}$ to $\bigoplus_{k \in K} \ch_L^k \otimes \ch_R^k$, which is a subspace of $\ch_L \otimes \ch_R = (\bigoplus_{k \in K} \ch_L^k) \otimes (\bigoplus_{l \in K} \ch_R^l) $; similarly, $U_4$ has input space $\bigoplus_{k \in K} \ch_{L'}^k \otimes \ch_{R'}^k$.

Any unitary satisfying $A_I \not\to B_O$ and $B_I \not\to A_O$ admits a decomposition of the form (\ref{eq:DiamondDot}). It is therefore natural to argue that index-matching quantum circuits are the proper tool to study causal decompositions. Indeed, beyond this paradigmatic example, several other cases of causal decompositions have been proven which involve index-matching quantum circuits, and do not hold if one limits oneself to standard quantum circuits (see Ref.\ \cite{lorenz2020} for an overview). As in the previous example, the inability of standard quantum circuits to represent these compositional structures is due to their incapacity to encode sectorial correlations and constraints; the repetition of indices in index-matching quantum circuits is precisely used as a graphical depiction of such correlations and constraints.

In Section \ref{sec:IndexMatching}, we will show how index-matching quantum circuits can be seen as a handy diagrammatic representation of a sub-framework of routed maps: the framework of \textit{index-matching routed maps}. This will allow us to properly formalise such diagrams in full generality, to provide them with rigorous semantics, and to prove that simple rules single out the physically meaningful index-matching quantum circuits, such as the one above.

\section{Routed maps for pure quantum theory} \label{sec:PureRoutedMaps}

\subsection{Routed linear maps} \label{sec:routedlinearmaps}

We will first present \textit{routed linear maps} (or routed maps for short), which can be used to describe pure quantum theory (i.e.\ for pure states and isometric channels). Routed maps come from the introduction of routes, which are mathematical objects encoding sectorial constraints on linear maps. To talk about sectorial constraints, one first has to work with spaces which are formally partitioned into sectors.

We want to define a \textit{partitioned space} as a (finite-dimensional) Hilbert space $\ch_A$ partitioned into orthogonal subspaces, i.e.\ $\ch_A = \bigoplus_{k \in \cz_A} \ch_A^k$, with $\cz_A$ a finite set. A practical way to define partitions is the following: we say that a family $(\pi^k_A)_{k \in \cz}$ of orthogonal projectors on a finite-dimensional Hilbert space $\ch_A$ defines an orthogonal partition of $\ch_A$ if $ \forall k,l, \, \pi^k_A \circ \pi^l_A = \delta^{kl} \pi^k_A$ and $\sum_k \pi^k_A = \tr{id}_A $. The length of this partition is the size of $\cz_A$.

A partitioned space can then be formally defined:
\begin{definition}[Partitioned Hilbert space]
A (finite-dimensional) partitioned Hilbert space is a tuple  $(\ch_A, \cz_A, (\pi^k_A)_{k \in \cz_A})$ where $\ch_A$ is a finite-dimensional Hilbert space, $\cz_A$ is a finite set of indices, and $(\pi^k_A)_{k \in \cz_A}$ is an orthogonal partition of $\ch_A$.
\end{definition}
We will colloquially refer to such a partitioned space by the name $A^k$. It is important to stress that in the framework we are building, partitions are hardcoded: formally, two different partitions of the same space give rise to two different partitioned spaces, which should not be confused with each other. From two partitioned spaces $A^k = (\ch_A, \cz_A, (\pi^k_A)_{k \in \cz_A})$ and $B^l = (\ch_B, \cz_B, (\mu^l_B)_{l \in \cz_B})$, one can form their tensor product $A^k \otimes B^l := (\ch_A \otimes \ch_B, \cz_A \times \cz_B, (\pi^k_A \otimes \mu^l_B)_{(k,l) \in \cz_A \times \cz_B})$.

Given two partitioned spaces $A^k$ and $B^l$, sectorial constraints on linear maps $A^k \to B^l$ can be encoded by \textit{relations}. Relations are a way of modelling connections between elements of two sets; they can be thought of as generalisations of functions, in which a given element of the input set can be mapped to any number of elements of the output set (including the possibility that it is mapped to no element at all) \cite{maclane71}. For example, $\lambda$, as represented by the following graph, is a relation from $\cz_A$ to $\cz_B$.

\be \label{eq:relationGraph1} %
\begin{tikzpicture}
	\begin{pgfonlayer}{nodelayer}
		\node [thick] (0) at (-8, 1.5) {$1$};
		\node [very thick] (1) at (-8, 0.5) {$2$};
		\node [very thick] (2) at (-8, -0.5) {$3$};
		\node [very thick] (3) at (-8, -1.5) {$4$};
		\node [very thick] (4) at (0, 0.5) {b};
		\node [very thick] (5) at (0, -0.5) {c};
		\node [very thick] (6) at (0, -1.5) {d};
		\node [very thick] (7) at (0, 1.5) {a};
		\node [ellipse, minimum height=70pt, minimum width=30pt, dashed, draw] (12) at (-8, 0) {};
		\node [ellipse, minimum height=70pt, minimum width=30pt, dashed, draw] (13) at (0, 0) {};
		\node [thick] (15) at (-8,-3.3) {$\cz_A$};
		\node [thick] (16) at (0,-3.3) {$\cz_B$};
		\node[thick] (17) at (-4,-2) {$\lambda$};
	\end{pgfonlayer}
	\begin{pgfonlayer}{edgelayer}
		\draw [style=arrow plain] (0) to (7);
		\draw [style=arrow plain, bend left=10] (1) to (4);
		\draw [style=arrow plain, bend left=5] (1) to (5);
		\draw [style=arrow plain, bend right=10] (2) to (5);
	\end{pgfonlayer}
\end{tikzpicture}} \ee

The fact that $\lambda$ relates $k \in \cz_A$ to $l \in \cz_B$ is denoted $k \overset{\lambda}{\sim} l$: here, for instance, $2 \overset{\lambda}{\sim} b$, $2 \overset{\lambda}{\sim} c$, etc. Relations can be sequentially composed in a natural way, following the rule that two elements are related if there exists at least one path between them:

\be \begin{split}
\begin{tikzpicture}
	\begin{pgfonlayer}{nodelayer}
		\node [thick] (0) at (-5, 1.5) {$1$};
		\node [very thick] (1) at (-5, 0.5) {$2$};
		\node [very thick] (2) at (-5, -0.5) {$3$};
		\node [very thick] (3) at (-5, -1.5) {$4$};
		\node [very thick] (4) at (0, 0.5) {b};
		\node [very thick] (5) at (0, -0.5) {c};
		\node [very thick] (6) at (0, -1.5) {d};
		\node [very thick] (7) at (0, 1.5) {a};
		\node [very thick] (8) at (5, 0) {y};
		\node [very thick] (9) at (5, -1) {z};
		\node [very thick] (11) at (5, 1) {x};
		\node [ellipse, minimum height=70pt, minimum width=30pt, dashed, draw] (12) at (-5, 0) {};
		\node [ellipse, minimum height=70pt, minimum width=30pt, dashed, draw] (13) at (0, 0) {};
		\node [ellipse, minimum height=50pt, minimum width=30pt, dashed, draw] (14) at (5, 0) {};
		\node [thick] (15) at (-5,-3.3) {$\cz_A$};
		\node [thick] (16) at (0,-3.3) {$\cz_B$};
		\node [thick] (17) at (5,-3.3) {$\cz_C$};
		\node[thick] (18) at (-2.5,-2) {$\lambda$};
		\node[thick] (19) at (2.5,-2) {$\sigma$};
	\end{pgfonlayer}
	\begin{pgfonlayer}{edgelayer}
		\draw [style=arrow plain] (0) to (7);
		\draw [style=arrow plain, bend left=15] (1) to (4);
		\draw [style=arrow plain] (1) to (5);
		\draw [style=arrow plain] (7) to (11);
		\draw [style=arrow plain] (4) to (8);
		\draw [style=arrow plain] (4) to (9);
		\draw [style=arrow plain] (6) to (9);
		\draw [style=arrow plain, bend right=15] (2) to (5);
	\end{pgfonlayer}
\end{tikzpicture}} \\
    = \quad \quad %
\begin{tikzpicture}
	\begin{pgfonlayer}{nodelayer}
		\node [thick] (0) at (-7, 1.5) {$1$};
		\node [very thick] (1) at (-7, 0.5) {$2$};
		\node [very thick] (2) at (-7, -0.5) {$3$};
		\node [very thick] (3) at (-7, -1.5) {$4$};
		\node [very thick] (8) at (0, 0) {y};
		\node [very thick] (9) at (0, -1) {z};
		\node [very thick] (11) at (0, 1) {x};
		\node [ellipse, minimum height=70pt, minimum width=30pt, dashed, draw] (12) at (-7, 0) {};
		\node [ellipse, minimum height=50pt, minimum width=30pt, dashed, draw] (14) at (0, 0) {};
		\node [thick] (15) at (-7,-3.3) {$\cz_A$};
		\node [thick] (17) at (0,-3.3) {$\cz_C$};
		\node[thick] (18) at (-3.5,-2) {$\sigma \circ \lambda$};
	\end{pgfonlayer}
	\begin{pgfonlayer}{edgelayer}
		\draw [style=arrow plain] (0) to (11);
		\draw [style=arrow plain] (1) to (8);
		\draw [style=arrow plain] (1) to (9);
	\end{pgfonlayer}
\end{tikzpicture}} \, .
\end{split} \ee

Relations can be parallelly composed using cartesian products, with the rule: $(k_1,k_2) \overset{\lambda_1 \times \lambda_2}{\sim}(l_1,l_2)$ if and only if $k_1 \overset{\lambda_1}{\sim} l_1$ and $k_2 \overset{\lambda_2}{\sim} l_2$. Finally, from a relation $\lambda : \cz_A \to \cz_B$, one can define the opposite relation $\lambda^\top: \cz_B \to \cz_A$, given by reversing the arrows in $\lambda$'s graph. A relation $\lambda$ can equivalently be represented by a matrix\footnote{To improve clarity, we will follow the convention of writing input indices in subscript and output indices in superscript.} $(\lambda^l_k)_{k \in \cz_A, l \in \cz_B}$ with coefficients in the semiring of booleans: $\lambda^l_k= 1$ if $k \overset{\lambda}{\sim} l$, and 0 otherwise. In this picture, sequential composition is given by matrix products, parallel composition by tensor products of matrices, and taking the opposite relation corresponds to matrix transposition. In the rest of this paper, we will predominantly work with the representation of relations as boolean matrices, and refer to a relation $\lambda$ by its boolean components $\lambda^l_k$.

A route is a relation which represents a set of sectorial constraints: the constraint that a sector $\ch_A^k$ of the input space is forbidden from being connected to a sector $\ch_B^l$ of the output space will be denoted by the fact that $\lambda$ features no arrow from $k$ to $l$, or equivalently, $\lambda_k^l = 0$. For example, if we take the partitioned Hilbert spaces $\ch_A = \ch_A^0 \oplus \ch_A^1$ and $\ch_B = \ch_B^0 \oplus \ch_B^1$, the constraint that a linear map $f: \ch_A \to \ch_B$ satisfies $f(\ch_A^1) \subseteq \ch_B^1$ (i.e.\ $f$ does not connect $\ch_A^1$ to $\ch_B^0$) will be represented by the route $\lambda = \begin{pmatrix} 1 & 0\\ 1 & 1 \end{pmatrix}$. Formally, we have:

\begin{definition}[Routes]
Let $(\ch_A, \cz_A, (\pi^k_A)_{k \in \cz_A})$ and $(\ch_B, \cz_B, (\mu^l_B)_{l \in \cz_B})$ be two partitioned spaces, and $\lambda : \cz_A \to \cz_B$ a relation. A linear map $f : \ch_A \to \ch_B$ \textbf{follows the route} $\lambda$ if

\be \label{eq:RoutedCondition} f = \sum_{lk} \lambda^l_{k} \, \cdot \, \mu^l_B \circ f \circ \pi^k_A \, .\ee
One also says that $\lambda$ \textbf{is a route} for $f$.
\end{definition}
An equivalent condition to (\ref{eq:RoutedCondition}), proven in Appendix \ref{app:proofRoutedMapsSMC}, is
\be \forall k,l, \, \lambda^l_k = 0 \implies  \mu_B^l \circ f \circ \pi_A^k = 0 \,. \ee 

This yields an intuitive interpretation of routes: the 1's in the matrix of a route $\lambda$ can be thought of as designating the blocks that are allowed to be non-zero in the block decomposition of $f$.

On partitioned Hilbert spaces, compatibility of linear maps with routes plays well with sequential composition, parallel composition, and hermitian adjoints: if $f$ follows $\lambda$ and $g$ follows $\sigma$, then $g \circ f$ follows $\sigma \circ \lambda$, $f \otimes g$ follows $\lambda \times \sigma$, and $f^\dagger$ follows $\lambda^\top$ (this is proven in Appendix \ref{app:proofRoutedMapsSMC}).

We are now in a position to define routed maps, which, together  with partitioned spaces, form the basic components of our framework:

\begin{definition}[Routed maps]
A \textbf{routed linear map} (or routed map for short) from $ A^k$ to $B^l$ is a pair $(\lambda, f)$ where $\lambda: \cz_A \to \cz_B$ is a relation, and $f: \ch_A \to \ch_B$ is a linear map which follows $\lambda$.
\end{definition}

For example, if we go back to the scenario described in Section \ref{sec:CommSuperpos} (restricting for now to the pure version, in which all operations are unitary operators), Alice and Bob's wires and operations can be modelled, respectively, as partitioned spaces of the form $\ch = \ch^0 \oplus \ch^1$, and as routed maps of the form $(\delta, V)$, where $\delta = \begin{pmatrix} 1 & 0\\ 0 & 1 \end{pmatrix}$, and $V$ is a unitary map (which, by definition, has to follow the route $\delta$).

Routed maps themselves can be sequentially and parallelly composed, through pairwise composition in both cases\footnote{Note that, because partitions are hardcoded into partitioned spaces, one can sequentially compose two routed maps only if the first map's output space is equal to the second map's input space, \textit{including the partition}.}, and one can take their hermitian adjoint: $(\lambda, f)^\dagger := (\lambda^\top, f^\dagger)$. These features are all encompassed formally by Theorem \ref{th:RoutedMaps=dagSMC}, stated and proven in Appendix \ref{app:proofRoutedMapsSMC}, which, more generally, characterises the framework of partitioned Hilbert spaces and routed linear maps as a $\dagger$-symmetric monoidal category. \correction{This has another important practical consequence: routed maps can be represented graphically in a well-defined way using circuit diagrams.} \correction{Circuit diagrams where the wires are interpreted as partitioned Hilbert spaces and the boxes are interpreted as routed linear maps shall be referred to as routed quantum circuits, or simply, routed  circuits.}

Formally, we have:

\begin{theorem}\label{th:PureLinearMapsSoundForCircuits}
The framework of partitioned Hilbert spaces and routed linear maps admits a sound representation in terms of circuit diagrams. 
\end{theorem}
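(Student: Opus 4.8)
The plan is to deduce the statement from the fact, established in Theorem \ref{th:RoutedMaps=dagSMC}, that partitioned Hilbert spaces and routed maps assemble into a $\dagger$-symmetric monoidal category, together with the standard coherence/soundness result for the graphical calculus of such categories \cite{coecke_kissinger_2017}. First I would make precise what ``sound representation in terms of circuit diagrams'' is to mean: a circuit diagram is a formal expression built from boxes (generators), wires, sequential composition, parallel composition, swaps, and the dagger, taken up to the deformations that the graphical language of symmetric monoidal categories permits; soundness is the assertion that any two diagrams equal as such formal objects are interpreted as the same routed map. Thus the theorem is the claim that the assignment of a routed map to each circuit diagram is well-defined on equivalence classes of diagrams.

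The key steps are then as follows. I would first recall the categorical data supplied by Theorem \ref{th:RoutedMaps=dagSMC}: objects are partitioned spaces $A^k$, morphisms are routed maps $(\lambda, f)$, identities are $(\mathrm{id},\mathrm{id})$, sequential composition is the pairwise composite $(\sigma,g)\circ(\lambda,f) = (\sigma\circ\lambda,\, g\circ f)$, the monoidal product is $(\lambda,f)\otimes(\sigma,g) = (\lambda\times\sigma,\, f\otimes g)$ on the tensor of partitioned spaces, the symmetry is the pair of swaps on index sets and on Hilbert spaces, and the dagger is $(\lambda,f)^\dagger = (\lambda^\top, f^\dagger)$. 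The closure properties already recorded before the definition of routed maps --- that $g\circ f$ follows $\sigma\circ\lambda$, that $f\otimes g$ follows $\lambda\times\sigma$, and that $f^\dagger$ follows $\lambda^\top$ --- guarantee that each composite is again a genuine routed map, so the structure is well-defined. Granting that this data satisfies the axioms of a $\dagger$-SMC (the content of Theorem \ref{th:RoutedMaps=dagSMC}), I would then invoke the coherence theorem for (dagger) symmetric monoidal categories: any such category admits a sound interpretation of circuit diagrams, the dagger being represented by vertical reflection. The desired statement is the instantiation of this general result to the category of routed maps.

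The main obstacle is not in this final reduction, which is essentially formal, but in the categorical verification it rests upon, namely Theorem \ref{th:RoutedMaps=dagSMC} itself (carried out in Appendix \ref{app:proofRoutedMapsSMC}). The delicate point there is that a routed map is a \emph{pair} subject to a compatibility constraint, so one must check that the standard structure of the two ambient dagger SMCs --- that of relations between index sets (with cartesian product and transpose) and that of finite-dimensional Hilbert spaces (with tensor product and adjoint) --- descends to the pairs satisfying ``$f$ follows $\lambda$''. The interchange law, the naturality and self-inverseness of the symmetry, and the functoriality and involutivity of the dagger all hold componentwise, hence for the pairs; what must additionally be verified is precisely that the ``follows a route'' condition is stable under all of these operations, which is what the closure lemmas of Appendix \ref{app:proofRoutedMapsSMC} establish. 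Once that stability is in hand, soundness follows with no further work.
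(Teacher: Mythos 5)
Your proposal is correct and follows exactly the paper's own route: the statement is reduced to Theorem \ref{th:RoutedMaps=dagSMC} (that partitioned spaces and routed maps form a dagger SMC, verified in Appendix \ref{app:proofRoutedMapsSMC} by checking that the ``follows a route'' condition is stable under sequential composition, parallel composition, and adjoints) together with the standard soundness of circuit diagrams for symmetric monoidal categories \cite{joyal1991, coecke_kissinger_2017}. Your additional remarks on what soundness means and where the real work lies match the paper's surrounding discussion.
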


\begin{proof}
This follows from Theorem \ref{th:RoutedMaps=dagSMC}, of Appendix \ref{app:proofRoutedMapsSMC}, and the fact that symmetric monoidal categories are suitable for diagrammatic representation in terms of circuit diagrams \cite{joyal1991, coecke_kissinger_2017}.
\end{proof}

\correction{Soundness in Theorem \ref{th:PureLinearMapsSoundForCircuits} means the following. We take two circuit diagrams whose wires are interpreted as partitioned Hilbert spaces and whose boxes are interpreted as routed linear maps; if these are provably equal as diagrams (i.e. if one can be obtained by simply deforming the other), then the routed linear maps they represent are provably equal \cite{coecke_kissinger_2017}. Essentially, this should be understood as ensuring that circuit diagrams can be used without second thoughts when dealing with routed linear maps, just as they could be used without second thoughts when dealing with standard linear maps.}

\correction{Note that in Appendix \ref{app:conceptual}, we provide additional comments on how the `route' part of a routed map can be understood at the conceptual level.}

\subsection{Practical isometries}

An important task is to single out those routed maps which correspond to physical evolutions. In the standard framework of quantum circuits for pure quantum theory, they are given by isometries; for routed maps, we will instead coin the notion of \textit{practical isometries}. The reasons why this notion is necessary can be understood by considering two examples.

A first example shows that the naive notion of isometricity cannot be directly applied to routed maps. A map $U$ is an isometry if $U^\dag \circ U = \textrm{id}$. But there can be routed maps $(\lambda, U)$ such that $(\lambda, U)^\dagger \circ (\lambda, U) = (\lambda^\top \circ \lambda, U^\dagger \circ U) = (\lambda^\top \circ \lambda, \textrm{id}_{\ch_A}) \neq (\textrm{id}_{\cz_A}, \textrm{id}_{\ch_A})$: i.e., $U$ is an isometry but $\lambda$ is not. This will be, for example, the generic case for routed maps which `delete' an index, i.e.\ when $\lambda$ is a relation with the singleton as output set. As the latter maps are necessary components of our framework, we will have to relax the condition of isometricity, by not imposing it on the `route' part of a routed map.

There is a second fact to take into consideration. For a given routed map $(\lambda,U) : A^k \to B^l$, there can be elements of $\cz_A$ that $\lambda$ relates to no element of $\cz_B$ at all (this is for example the case of `4' in (\ref{eq:relationGraph1})). The fact that $U$ follows $\lambda$ then entails that it is null on the corresponding sectors of $\ch_A$. Thus we should only ask $U$ to be an isometry when restricted to the sectors corresponding to the other elements of $\cz_A$, those that $\lambda$ relates to at least one element of $\cz_B$. This yields the notion of \textit{practical isometries}.

For a routed map $(\lambda,U) : A^k \to B^l$, we define the \textit{practical input set} of $\lambda$, $\cs_\lambda$, as the subset of $\cz_A$ whose elements are related by $\lambda$ to at least one element of $\cz_B$ (for example, for $\lambda$ as defined in (\ref{eq:relationGraph1}), we have $\cs_\lambda = \{1,2,3\})$. We define the corresponding \textit{practical input space} of $(\lambda,U) $ as $\ch_A^{\cs_\lambda}:= \bigoplus_{k \in \cs_\lambda} \ch_A^k$. The symmetric notions of practical output set $\ct_\lambda$ and practical output space $\ch_B^{\ct_\lambda}$ are defined in the same way. This enables us to define practical isometries and practical unitaries:

\begin{definition}
Let $(\lambda, U)$ be a routed map from $A^k$ to $B^l$, with practical input space $\ch^{\cs_\lambda}_A$. $(\lambda, U)$ is a \textbf{practical isometry} if $U$ is an isometry when restricted to $\ch^{\cs_\lambda}_A$.
$(\lambda, U)$ is a \textbf{practical unitary} if both $(\lambda, U)$ and $(\lambda, U)^\dagger$ are practical isometries.
\end{definition}

Equivalently, $(\lambda, U)$ is a {practical isometry} if $U$ is a partial isometry with initial domain $\ch^{\cs_\lambda}_A$. Similarly, $(\lambda, U)$ is a {practical unitary}  if $U$ a partial isometry with initial domain $\ch^{\cs_\lambda}_A$ and with range $\ch_B^{\ct_\lambda}$.

In the routed maps framework for pure quantum theory, the physically meaningful routed maps are the practical isometries. One, however, has to be careful on one point: the sequential composition of two practical isometries is not necessarily a practical isometry itself. The badly-behaved compositions of practical isometries correspond to situations in which we have complete descriptions for two individual gates, but where these descriptions are not sufficient to specify a complete description of the sequential composition of these two gates. Parallel compositions of practical isometries, on the other hand, always yield practical isometries, as proven in Appendix \ref{app:practicalIsos}.

We therefore need to single out the sequential compositions which are well-behaved for practical isometries. We can do so with a condition which, crucially, depends solely on the maps' routes:

\begin{theorem} \label{th:IsoCompos}
Let $\lambda: \cz_A \to \cz_B$ and $\sigma: \cz_B \to \cz_C$  be two routes satisfying 

\be \label{eq:IsoComposition} (\lambda \circ \lambda^\top) [\cs_\sigma] \subseteq \cs_\sigma \, . \ee
Then, for any practical isometries $(\lambda, U): A^k \to B^l$ and $(\sigma, V) : B^l \to C^m$, their composition $(\sigma, V) \circ (\lambda, U)$ is a practical isometry.
\end{theorem}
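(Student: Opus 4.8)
The plan is to show directly that $V \circ U$ acts isometrically on the practical input space $\ch_A^{\cs_{\sigma \circ \lambda}}$ of the composite routed map $(\sigma \circ \lambda, V \circ U)$; this composite is already a well-defined routed map, since $V \circ U$ follows $\sigma \circ \lambda$ by the composition rule for routes stated above. First I would unwind the practical input set of the composite: an index $k \in \cz_A$ lies in $\cs_{\sigma \circ \lambda}$ precisely when there is some $l \in \cz_B$ with $k \overset{\lambda}{\sim} l$ and $l \in \cs_\sigma$. In particular $\cs_{\sigma \circ \lambda} \subseteq \cs_\lambda$, so $U$ is already isometric on $\ch_A^{\cs_{\sigma \circ \lambda}}$, this being a restriction of its isometric action on the practical input space $\ch_A^{\cs_\lambda}$.

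The crux is then to show that $U$ sends $\ch_A^{\cs_{\sigma \circ \lambda}}$ into the practical input space $\ch_B^{\cs_\sigma}$ of $V$, which is exactly where hypothesis (\ref{eq:IsoComposition}) enters. Fix $k \in \cs_{\sigma \circ \lambda}$ and any $l'$ with $k \overset{\lambda}{\sim} l'$. By the characterisation above there is some $l_0 \in \cs_\sigma$ with $k \overset{\lambda}{\sim} l_0$; then $l_0 \overset{\lambda^\top}{\sim} k \overset{\lambda}{\sim} l'$ shows $l' \in (\lambda \circ \lambda^\top)[\cs_\sigma]$, so (\ref{eq:IsoComposition}) forces $l' \in \cs_\sigma$. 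Thus every sector $\ch_B^{l'}$ into which $U$ can send $\ch_A^k$ belongs to $\cs_\sigma$; since $U$ follows $\lambda$, its restriction to $\ch_A^k$ lands in $\bigoplus_{l' : \, k \overset{\lambda}{\sim} l'} \ch_B^{l'} \subseteq \ch_B^{\cs_\sigma}$, and summing over $k \in \cs_{\sigma \circ \lambda}$ gives $U(\ch_A^{\cs_{\sigma \circ \lambda}}) \subseteq \ch_B^{\cs_\sigma}$.

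Putting the pieces together, for any $x \in \ch_A^{\cs_{\sigma \circ \lambda}}$ one has $\|U x\| = \|x\|$, and since $U x \in \ch_B^{\cs_\sigma}$ also $\|V U x\| = \|U x\|$, whence $V \circ U$ is isometric on $\ch_A^{\cs_{\sigma \circ \lambda}}$ and $(\sigma \circ \lambda, V \circ U)$ is a practical isometry. The only genuinely nontrivial step is the middle one: recognising that (\ref{eq:IsoComposition}) says exactly that whenever an input sector of $U$ can reach the isometric region $\cs_\sigma$ of $B^l$, every sector it can reach lies in $\cs_\sigma$, so that no component of $U x$ escapes into a sector where $V$ fails to be isometric. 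Everything else is the routine fact that composing two maps — each isometric on the relevant subspace, with the image of the first contained in the isometric domain of the second — yields an isometry.
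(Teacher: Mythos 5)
Your proof is correct and follows essentially the same route as the paper's: identify $\cs_{\sigma\circ\lambda}=\lambda^\top[\cs_\sigma]\subseteq\cs_\lambda$, use condition \eqref{eq:IsoComposition} together with the fact that $U$ follows $\lambda$ to get $U(\ch_A^{\cs_{\sigma\circ\lambda}})\subseteq\ch_B^{(\lambda\circ\lambda^\top)[\cs_\sigma]}\subseteq\ch_B^{\cs_\sigma}$, and then compose the two restricted isometries. Your elementwise unpacking of the inclusion is just a more explicit rendering of the same key step.
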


\begin{proof}
See Appendix \ref{app:practicalIsos}.
\end{proof}

When the condition \eqref{eq:IsoComposition} is satisfied, we say that the sequential composition of $\lambda$ and $\sigma$ is \textit{proper for practical isometries}.

Theorem \ref{th:IsoCompos} and its forthcoming generalisation to quantum channels are the crucial consistency theorems for routed quantum circuits.  In routed quantum circuits made of practical isometries, all sequential compositions have to satisfy (\ref{eq:IsoComposition}), in order to ensure that the global map they form is also a practical isometry. For the case of practical unitaries, sequential composition is well-behaved if the routes satisfy both (\ref{eq:IsoComposition}) and a symmetric condition:

\begin{theorem} \label{th:UniCompos}
Let $\lambda: \cz_A \to \cz_B$ and $\sigma: \cz_B \to \cz_C$  be two routes satisfying 

\begin{subequations}
\be (\lambda \circ \lambda^\top) [\cs_\sigma] \subseteq \cs_\sigma \, , \ee
\be (\sigma^\top \circ \sigma) [\ct_\lambda] \subseteq \ct_\lambda \, . \ee
\end{subequations}
Then, for any practical unitaries $(\lambda, U): A^k \to B^l$ and $(\sigma, V) : B^l \to C^m$, their composition $(\sigma, V) \circ (\lambda, U)$ is a practical unitary.
\end{theorem}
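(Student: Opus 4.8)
The plan is to reduce the statement to two applications of Theorem \ref{th:IsoCompos}, exploiting the fact that a practical unitary is precisely a routed map $(\lambda, U)$ for which \emph{both} $(\lambda, U)$ and its adjoint $(\lambda, U)^\dagger = (\lambda^\top, U^\dagger)$ are practical isometries. So to show that the composite $(\sigma, V) \circ (\lambda, U)$ is a practical unitary, I would establish separately that it is a practical isometry and that its adjoint is a practical isometry.

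First I would observe that since $(\lambda, U)$ and $(\sigma, V)$ are practical unitaries, they are in particular practical isometries; hence, by condition (a) together with Theorem \ref{th:IsoCompos}, the composite $(\sigma, V) \circ (\lambda, U)$ is a practical isometry. This disposes of the first half. For the second half I would use the $\dagger$-SMC structure (Theorem \ref{th:RoutedMaps=dagSMC}) to write $\big((\sigma, V) \circ (\lambda, U)\big)^\dagger = (\lambda, U)^\dagger \circ (\sigma, V)^\dagger = (\lambda^\top, U^\dagger) \circ (\sigma^\top, V^\dagger)$, where the dagger reverses the order of composition. Since $(\lambda, U)$ and $(\sigma, V)$ are practical unitaries, their adjoints $(\lambda^\top, U^\dagger)$ and $(\sigma^\top, V^\dagger)$ are practical isometries, so I would apply Theorem \ref{th:IsoCompos} to this reversed composition, whose first route is $\sigma^\top$ and whose second route is $\lambda^\top$. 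Its hypothesis reads $(\sigma^\top \circ (\sigma^\top)^\top)[\cs_{\lambda^\top}] \subseteq \cs_{\lambda^\top}$.

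The key translation step, and essentially the only place any care is needed, is the observation that the practical input set of a transposed relation equals the practical output set of the original: reversing all arrows in the graph of $\lambda$ shows that $\cs_{\lambda^\top} = \ct_\lambda$. Using this together with $(\sigma^\top)^\top = \sigma$, the hypothesis above becomes exactly $(\sigma^\top \circ \sigma)[\ct_\lambda] \subseteq \ct_\lambda$, i.e.\ condition (b). Thus Theorem \ref{th:IsoCompos} applies and shows that $(\lambda^\top, U^\dagger) \circ (\sigma^\top, V^\dagger)$, the adjoint of our composite, is a practical isometry.

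Having shown that both $(\sigma, V) \circ (\lambda, U)$ and its adjoint are practical isometries, I would conclude directly from the definition that $(\sigma, V) \circ (\lambda, U)$ is a practical unitary. The main (and only minor) obstacle is the bookkeeping: correctly matching the reversed composition onto the \emph{asymmetric} hypothesis of Theorem \ref{th:IsoCompos}, and recognising that condition (b) is nothing but condition (a) applied to the adjoint composition under the identification $\cs_{\lambda^\top} = \ct_\lambda$ (and symmetrically $\ct_{\sigma^\top} = \cs_\sigma$).
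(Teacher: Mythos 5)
Your proof is correct, and it is essentially the argument the paper has in mind: the paper only remarks (at the end of Appendix E) that ``the rest of the proof for the case of practical unitaries is similar'', meaning precisely the symmetric argument you carry out, which you package cleanly as a second application of Theorem \ref{th:IsoCompos} to the adjoint composition $(\lambda^\top, U^\dagger) \circ (\sigma^\top, V^\dagger)$ using the $\dagger$-SMC structure and the identification $\cs_{\lambda^\top} = \ct_\lambda$. No gaps.
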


Under these conditions, we say that the sequential composition of $\lambda$ and $\sigma$ is \textit{proper for unitaries}.

\subsection{An example: superposition of two trajectories} \label{sec:example2Superp}

Before we move on to general quantum channels, let us provide a first didactic example of a routed circuit, by showing how our framework allows to properly formalise the scenario of a superposition of two trajectories described in Section $\ref{sec:CommSuperpos}$ -- restricting for now to the case of unitary channels. Diagrammatically, we represent partitioned spaces as wires, and routed maps as boxes: we write the linear map within the box, and depict the matrix elements of its route floating next to it. Spaces with a trivial partition are denoted without superscripts. The scenario is then formalised as the following \correction{routed} circuit:

\be \label{eq:ChannelSuperposition} %
\begin{tikzpicture}
	\begin{pgfonlayer}{nodelayer}
		\node [style=map] (0) at (0.5, 1) {$V_1$};
		\node [style=none] (1) at (0.5, -1.5) {};
		\node [style=none] (2) at (0.5, 3.5) {};
		\node [style=right label] (3) at (0.5, 2.25) {$A^l$};
		\node [style=right label] (4) at (0.5, -0.25) {$A^k$};
		\node [style=map] (5) at (3, 1) {$V_2$};
		\node [style=none] (6) at (3, -1.5) {};
		\node [style=none] (7) at (3, 3.5) {};
		\node [style=right label] (8) at (3, 2.25) {$B^n$};
		\node [style=right label] (9) at (3, -0.25) {$B^m$};
		\node [style=large map] (10) at (1.75, 3.5) {$U^\dagger$};
		\node [style=large map] (11) at (1.75, -1.5) {$U$};
		\node [style=none] (13) at (0.75, -3) {};
		\node [style=right label] (14) at (0.75, -2.75) {$M$};
		\node [style=none] (16) at (2.75, -3) {};
		\node [style=right label] (17) at (2.75, -2.75) {$C$};
		\node [style=none] (18) at (0.75, -1.5) {};
		\node [style=none] (19) at (2.75, -1.5) {};
		\node [style=none] (20) at (-1, -1.5) {$\omega^{km}$};
		\node [style=none] (21) at (-1, 1) {$\delta^l_k$};
		\node [style=none] (22) at (4.5, 1) {$\delta^n_m$};
		\node [style=none] (23) at (-1, 3.5) {$\omega_{ln}$};
		\node [style=none] (24) at (0.75, 5) {};
		\node [style=right label] (25) at (0.75, 4.75) {$M$};
		\node [style=none] (26) at (2.75, 5) {};
		\node [style=right label] (27) at (2.75, 4.75) {$C$};
		\node [style=none] (28) at (0.75, 3.5) {};
		\node [style=none] (29) at (2.75, 3.5) {};
	\end{pgfonlayer}
	\begin{pgfonlayer}{edgelayer}
		\draw (2.center) to (0);
		\draw (1.center) to (0);
		\draw (7.center) to (5);
		\draw (6.center) to (5);
		\draw (18.center) to (13.center);
		\draw (19.center) to (16.center);
		\draw (28.center) to (24.center);
		\draw (29.center) to (26.center);
	\end{pgfonlayer}
\end{tikzpicture}}\ee
where $\omega = \begin{pmatrix} 0 & 1 &  1 & 0 \end{pmatrix}$, and $\delta =  \begin{pmatrix} 1 & 0\\ 0 & 1 \end{pmatrix}$.

As we can see, Alice and Bob's channels are now represented as routed maps of the form ($\delta, V$), in which the sectorial constraints inherent to the scenario have been included. Let us also carefully break down the meaning of the encoding operation, ($\omega, U$). As the input wires of this operation are not partitioned, $\omega$ is a relation from a trivial set of indices -- represented by the singleton $\{*\}$ -- to the set $\cz_A \times \cz_B = \{(0,0),(0,1),(1,0),(1,1)\}$. $\omega$'s graph is

\be %
\begin{tikzpicture}
	\begin{pgfonlayer}{nodelayer}
		\node [very thick] (1) at (-8, 0) {$*$};
		\node [very thick] (4) at (0, 1) {$(0,1)$};
		\node [very thick] (5) at (0, -1) {$(1,0)$};
		\node [very thick] (6) at (0, -3) {$(1,1)$};
		\node [very thick] (7) at (0, 3) {$(0,0)$};
		\node [thick] (15) at (-8, -2) {$\{*\}$};
		\node [thick] (16) at (0, -5) {$\cz_A \times \cz_B$};
		\node [thick] (17) at (-4, -1.5) {$\omega$};
		\node [ellipse, minimum height=30pt, minimum width=30pt, dashed, draw] (18) at (-8, 0) {};
		\node [ellipse, minimum height=120pt, minimum width=50pt, dashed, draw] (19) at (0, 0) {};
	\end{pgfonlayer}
	\begin{pgfonlayer}{edgelayer}
		\draw [style=arrow plain] (1) to (4);
		\draw [style=arrow plain] (1) to (5);
	\end{pgfonlayer}
\end{tikzpicture}} \quad . \ee

As we can see, $\omega$ is characterised by its practical output set $\ct_\omega = \{(0,1),(1,0)\}$. $(\omega, U)$'s practical output space, $\ch_{AB}^{\ct_\omega}$, is thus precisely the $\widetilde{\ch}_{AB}$ introduced in section \ref{sec:CommSuperpos}: this enforces the sectorial correlations present in the scenario. It is important to stress that within the framework of routed maps, these sectorial correlations are obtained as a contextual feature \correction{(see footnote \ref{foot:contextual} about our use of the word `contextual')}: they are specified by the routes present in the global \correction{routed} circuit. This point is elaborated upon in Section \ref{sec:accessiblespace}.

All the routed maps in this \correction{routed} circuit are practical unitaries, and it is easy to check that all the compositions are proper for practical unitaries. This means that the framework of routed maps allowed us to provide a suitable unitary description of this unitary scenario -- something which, as we argued in section \ref{sec:CommSuperpos}, is not possible within the sole framework of \correction{standard} quantum circuits. More generally, it is straightforward to see that the use of routed maps solves all the points of dissatisfaction we had with the description of this scenario in standard quantum circuits.

One can reduce graphical clutter in the above \correction{routed} circuit by using a handy graphical convention, that of contracting Kronecker deltas. The idea is, not to write explicitly the $\delta$-routes $\delta^l_k$ between a wire $A^k$ and a wire $A^l$, and to instead just write down these wires with the same superscript\footnote{Contraction of Kronecker deltas can sometimes lead to ambiguities about the routes in a given diagram (this will be elaborated upon in Section \ref{sec:IndexMatching}); one should therefore keep in mind that it is simply a graphical shorthand, and that the rigorous diagrammatic representation of routed quantum circuits is properly done through explicitly writing down all the routes.}. The above \correction{routed} circuit, for example, then becomes:

\be %
\begin{tikzpicture}
	\begin{pgfonlayer}{nodelayer}
		\node [style=map] (0) at (0.5, 1) {$V_1$};
		\node [style=none] (1) at (0.5, -1.5) {};
		\node [style=none] (2) at (0.5, 3.5) {};
		\node [style=right label] (3) at (0.5, 2.25) {$A^k$};
		\node [style=right label] (4) at (0.5, -0.25) {$A^k$};
		\node [style=map] (5) at (3, 1) {$V_2$};
		\node [style=none] (6) at (3, -1.5) {};
		\node [style=none] (7) at (3, 3.5) {};
		\node [style=right label] (8) at (3, 2.25) {$B^m$};
		\node [style=right label] (9) at (3, -0.25) {$B^m$};
		\node [style=large map] (10) at (1.75, 3.5) {$U^\dagger$};
		\node [style=large map] (11) at (1.75, -1.5) {$U$};
		\node [style=none] (13) at (0.75, -3) {};
		\node [style=right label] (14) at (0.75, -2.75) {$M$};
		\node [style=none] (16) at (2.75, -3) {};
		\node [style=right label] (17) at (2.75, -2.75) {$C$};
		\node [style=none] (18) at (0.75, -1.5) {};
		\node [style=none] (19) at (2.75, -1.5) {};
		\node [style=none] (20) at (-1, -1.5) {$\omega^{km}$};

		\node [style=none] (23) at (-1, 3.5) {$\omega_{km}$};
		\node [style=none] (24) at (0.75, 5) {};
		\node [style=right label] (25) at (0.75, 4.75) {$M$};
		\node [style=none] (26) at (2.75, 5) {};
		\node [style=right label] (27) at (2.75, 4.75) {$C$};
		\node [style=none] (28) at (0.75, 3.5) {};
		\node [style=none] (29) at (2.75, 3.5) {};
	\end{pgfonlayer}
	\begin{pgfonlayer}{edgelayer}
		\draw (2.center) to (0);
		\draw (1.center) to (0);
		\draw (7.center) to (5);
		\draw (6.center) to (5);
		\draw (18.center) to (13.center);
		\draw (19.center) to (16.center);
		\draw (28.center) to (24.center);
		\draw (29.center) to (26.center);
	\end{pgfonlayer}
\end{tikzpicture}}\ee

\section{Diagrammatic representation: routed \correction{circuits}} \label{sec:RoutedDiagrams}

Let us provide a thorough description of the diagrammatic representation that we just introduced in a simple example. Indeed, one of the objectives of the introduction of routed maps is to use them as a mathematical basis for a faithful and systematic diagrammatic representation, in which the routes can be read in an intuitive way. We provide such a representation in the form of so-called \textit{routed \correction{circuits}}. The well-definition of these diagrams as a faithful representation of the mathematical framework is guaranteed by Theorem 1. A specific challenge in this context is to give clear rules on the physical interpretation of slices in a given routed \correction{circuit}: we will do so by establishing a distinction between \textit{formal space} and \textit{accessible space}.

\subsection{Diagrams for routed linear maps}\label{sec:routediagrams}

An example of a routed \correction{circuit} is (\ref{eq:ChannelSuperposition}). We represent the objects $X^k := (\ch_X, \cz_X,(\pi^k_X)_{k \in \cz_X})$ by wires. We represent the morphisms $(\lambda,f)$ by boxes: we write the linear maps $f$ inside the boxes, whereas the matrix elements of the route, $\lambda_{k,\dots, l}^{m,\dots, n}$, are drawn as numbers floating next to the boxes. In general, we ask that no two wires bear the same superscript (except in the case of the shorthand notation given by contractions of Kronecker deltas).
When  $|\cz_X|=1$, we simply write $X$ in place of $X^k$.

Remember that a diagram composed of practical isometries represents a practical isometry itself if and only if the sequential compositions of routes in this diagram are suitable for isometries, as per Theorem \ref{th:IsoCompos}. In this case, the diagram is called an iso-diagram. In the same way, a diagram in which sequential compositions of routes are suitable for unitaries is called a uni-diagram.

\subsection{How to interpret slices}

An important question is that of the interpretation of slices in a diagram. By slices, we mean horizontal combinations of wires. For example, if we take the slice comprising wires $A^k$ and $B^m$ in (\ref{eq:ChannelSuperposition}), a simple formulation of the question at hand would be: `What is the Hilbert space corresponding to this slice?'. As we will show, the answer depends on whether one is asking from a mathematical or physical perspective. This will lead us to distinguishing two spaces corresponding to a slice: the formal space, and the accessible space.

A first possible answer comes from strictly sticking to the mathematical formalism. As is clear from its definition in Section \ref{sec:routedlinearmaps}, the tensor product $A^k \otimes B^m$ has Hilbert space $\ch_A \otimes \ch_B = \bigoplus_{k,m} \ch_A^k \otimes \ch_B^m$. We will define this as the \textit{formal space} corresponding to the slice. In contrast to what will come later, the formal space corresponding to a slice is non-contextual\footnote{\label{foot:contextual}Here, we use the word `contextual' in a colloquial sense; this should not be confused with its use in discussions of non-contextuality as a quantum feature, in which `contextuality' has a different, more technical meaning.}, in the sense that it only depends on the slice itself. As we can see, the formal space is the `big' Hilbert space: it contains all the sectors of the partition.

It is clear, however, that some of these sectors are in general forbidden from being populated, due to the sectorial constraints imposed by the routes. Thus, in the context of the diagram at hand, one can give a more refined meaning to the slice. This is formalised by the notion of the \textit{accessible space} corresponding to a slice: we define it as the subspace of the formal space in which states will be constrained to lie due to the routes. The accessible space corresponds to a more physical understanding of the situation, and encodes physical correlations between sectors in each of the wires which compose it. For example, the accessible space corresponding to the slice comprising wires $A^k$ and $B^m$ in (\ref{eq:ChannelSuperposition}) is $\ch_A^1 \otimes \ch_B^0 \oplus \ch_A^0 \otimes \ch_B^1$, a strict subspace of its formal space. We give the general recipe for computing the accessible Hilbert space corresponding to a slice in section \ref{sec:accessiblespace}.

Before that, it is important to emphasise that the accessible space is a \textit{contextual} notion: it depends on the whole diagram (more specifically, on the routes thereof) and not only on the slice itself. This somewhat counter-intuitive feature should not come as a surprise. To see why, it is enlightening to take the view in which routed \correction{circuits} are understood as representing a physical setup and the sectorial constraints this setup implies (this is for example the case in (\ref{eq:ChannelSuperposition})). In this context, the whole point of the notion of accessible space is to take into account the fact that some setups lead to only a subspace of a given formal space being populated.  It is thus natural that the whole setup should be taken into account when computing the accessible space.

That the notion of accessible space is a contextual one entails another important consequence: the accessible space of a given slice can get  modified (and, more specifically, reduced) when additional maps are adjoined to a diagram. For example, if one considers a diagram containing only the middle layer of (\ref{eq:ChannelSuperposition}), then the accessible space of the slice comprising wires $A^k$ and $B^m$ is equal to its formal space; but when one adjoins the other layers to recover the diagram above, this accessible space gets reduced to $\ch_A^1 \otimes \ch_B^0 \oplus \ch_A^0 \otimes \ch_B^1$. Once again, this is in fact natural: in the physical interpretation of routed maps, adding more routed maps means specifying a setup further - which could mean that we are adding new constraints on the possible physical correlations in a given slice\footnote{There is a specific case of interest, however, in which the previous comment will not apply: when one is considering a diagram whose global input and output wires (the inputs and outputs of the whole diagram) are not partitioned (i.e.\ bear no indices).
It is easy to see that adjoining more  routed maps to such a diagram will not modify the accessible space of a given slice inside it; one can thus consider the accessible space to be the exact subspace which will be populated. 
For example, this is the case for the slice comprising wires $A^k$ and $B^m$ in (\ref{eq:ChannelSuperposition}), as the global inputs and outputs in (\ref{eq:ChannelSuperposition}) bear no indices.}.

\subsection{Computing the accessible space corresponding to a slice}  \label{sec:accessiblespace}

An example of a more elaborate routed \correction{circuit}, with slices explicitly drawn out, is given below; we will use this as an example to illustrate the general procedure which yields the accessible space corresponding to a slice. Two possible slices are drawn in blue and magenta, and their respective accessible Hilbert spaces are written out on the side:

\be %
\begin{tikzpicture}
	\begin{pgfonlayer}{nodelayer}
		\node [style=map] (0) at (1.75, 1) {$~~f~~$};
		\node [style=none] (1) at (1, -1.5) {};
		\node [style=none] (2) at (1, 3.5) {};
		\node [style=none] (6) at (3.5, -1.5) {};
		\node [style=none] (7) at (2.5, 3.5) {};
		\node [style=right label] (9) at (3.5, -0.25) {$B$};
		\node [style=none] (13) at (0.5, -3) {};
		\node [style=none] (16) at (3.5, -3) {};
		\node [style=right label] (17) at (3.5, -2.75) {$Z$};
		\node [style=none] (18) at (0.5, -1.5) {};
		\node [style=none] (19) at (3.5, -1.5) {};
		\node [style=none] (20) at (-0.5, -1.5) {$\alpha^{k}_p$};
		\node [style=none] (23) at (-0.5, 3.5) {$\delta_{m}^i$};
		\node [style=none] (24) at (1, 5) {};
		\node [style=right label] (25) at (1, 4.75) {$E^i$};
		\node [style=none] (26) at (3, 5) {};
		\node [style=right label] (27) at (3, 4.75) {$F^j$};
		\node [style=none] (28) at (1, 3.5) {};
		\node [style=none] (29) at (3, 3.5) {};
		\node [style=map] (30) at (1, 3.5) {$h$};
		\node [style=map] (31) at (3, 3.5) {$~g~$};
		\node [style=none] (32) at (4.5, 3.5) {$\gamma_{n}^j$};
		\node [style=map] (33) at (1, -1.5) {$~d~$};
		\node [style=none] (35) at (3.5, 3.5) {};
		\node [style=none] (37) at (1.25, -3) {};
		\node [style=right label] (38) at (1.25, -2.75) {$Y^p$};
		\node [style=none] (39) at (1.25, -1.5) {};
		\node [style=map] (40) at (3.5, -1.5) {$e$};
		\node [style=none] (41) at (-0.5, 1) {$\lambda_{k}^{mn}$};
		\node [style=none] (43) at (1, 1) {};
		\node [style=none] (44) at (2.5, 1) {};
		\node [style=none] (45) at (1, 1) {};
		\node [style=left label] (46) at (1, 2.25) {$C^m~$};
		\node [style=left label] (47) at (2.5, 2.25) {$D^n~$};
		\node [style=left label] (48) at (1, -0.25) {$A^k~$};
		\node [style=left label] (49) at (0.5, -2.75) {};
		\node [style=left label] (50) at (0.5, -2.75) {$X~$};
		\node [style=none] (51) at (8.25, 4.5) {$\bigoplus_{ij} \eta^{ij} \ch^i_E \otimes \ch^j_F$};
		\node [style=none] (54) at (9.5, 3.25) {$\eta^{ij}:=\sum_{mnkp} \delta_m^i \gamma_n^j \lambda_k^{mn} \alpha_p^k$};
		\node [style=none] (56) at (9.5, 1.25) {$\bigoplus_{in} \eta'^{in} \ch^i_E \otimes \ch^n_D \otimes \ch_B$};
		\node [style=none] (57) at (9.75, 0) {$\eta'^{in}:=\sum_{mkjp} \delta_m^i \lambda_k^{mn} \gamma_n^j \alpha_p^k$};
		\node [style=none] (58) at (-1.25, 4.5) {};
		\node [style=none] (59) at (5.25, 4.5) {};
		\node [style=none] (60) at (-1.25, 4.25) {};
		\node [style=none] (61) at (5.25, 2.25) {};
	\end{pgfonlayer}
	\begin{pgfonlayer}{edgelayer}
		\draw (18.center) to (13.center);
		\draw (19.center) to (16.center);
		\draw (28.center) to (24.center);
		\draw (29.center) to (26.center);
		\draw (39.center) to (37.center);
		\draw (35.center) to (40);
		\draw (43.center) to (30);
		\draw (45.center) to (33);
		\draw (7.center) to (44.center);
		\draw [color=blue,style=dashed](58.center) to (59.center);
		\draw [color=magenta,style=dashed,in=-180, out=0, looseness=2.25] (60.center) to (61.center);
	\end{pgfonlayer}
\end{tikzpicture}
} \ee

One can compute the accessible space corresponding to a given slice with the following formal procedure (using the blue slice above as an example). We call $\ck$ the set of indices present in this slice [e.g.\ $\ck = \{ i,j \}$]. \correction{This procedure is justified more formally in Appendix \ref{app:accessibleSpace}.}

\begin{enumerate}
	\item write down the matrix components of all the routes featuring an index in $\ck$ [e.g.\ $\delta^i_m \gamma^j_n$];
	\item write down the matrix components of all the routes (both above and below the slice under consideration\footnote{It might sound surprising that the routes \textit{above} (i.e.\ after) the slice should be taken into account as well, but this is in fact necessary: the consistency of the process formed by the whole diagram forces one to restrict the states which can populate a given slice so that they do not lie out of the practical input space of a subsequent routed map.}) featuring an index among those already present [e.g.\ $\lambda^{mn}_k$];
	\item iterate until there are no matrix components left to add under the previous rule;
	\item sum over all indices present, except the ones in $\ck$; this yields the components of a boolean matrix $\eta$ with indices in $\ck$ [e.g.\ $\eta^{ij} := \sum_{mnkp} \delta_m^i \gamma_n^j \lambda_{k}^{mn} \alpha^k_p$]
	\item write down the explicit direct sum of all the sectors in the slice with $\eta$ [e.g.\ $ \bigoplus_{i,j}
	\eta^{ij} \ch_E^i \otimes \ch_F^j$]: this yields the accessible Hilbert space corresponding to the slice.
\end{enumerate}

As one can see, this displays formal similarities with the Einstein summation convention of linear algebra. One should not take these similarities too seriously, however, as some conventions are different. In particular, in the formula for the accessible space, the indices in $\ck$ appear three times, and indices born by input or output wires of the diagram appear one time yet are still summed over.

\section{Routed maps for mixed quantum theory} \label{sec:MixedRoutedMaps}

\subsection{Routed quantum channels}

Let us now show how to extend the theory of routed maps to encompass \correction{mixed states and general quantum channels}. In the same way that one goes from linear maps to completely positive linear maps, we will be going from relations to completely positive relations. These will be used to encode sectorial constraints which can not only forbid connections between some sectors, but also forbid some of the allowed connections to be coherent with one another.

The broad idea is to generalise (\ref{eq:RoutedCondition}) to the case where we take a completely positive linear map $\cc: \cl\left(\bigoplus_{k \in \cz_A} \ch_A^k \right) \to \cl\left(\bigoplus_{l \in \cz_B} \ch_B^l\right)$ between linear operators on partitioned Hilbert spaces. A natural way to do so is to use relations of the form $\Lambda: \cz_A \times \cz_A \to \cz_B \times \cz_B$ as routes, leading to the following definition:

\begin{definition}[Routes for CPMs]
Let $(\ch_A, \cz_A, (\pi^k_A)_{k \in \cz_A})$ and $(\ch_B, \cz_B, (\mu^l_B)_{l \in \cz_B})$ be two partitioned spaces, and $\Lambda: \cz_A \times \cz_A \to \cz_B \times \cz_B$ a relation. 
A completely positive map $\cc: \cl\left(\ch_A \right) \to \cl\left( \ch_B\right)$  \textbf{follows} $\Lambda$ if

\be \label{eq:FollowingCPM} \forall \rho, \quad  \mathcal{C}(\rho) =  \sum_{ll'kk'} \Lambda^{ll'}_{kk'}    \,\cdot\, \mu_B^l \circ \mathcal{C} \left( \pi_A^k \circ \rho \circ \pi_A^{k'} \right) \circ \mu_B^{l'} \, .\ee
One also says that $\Lambda$ \textbf{is a route} for $\cc$.
\end{definition}

However, we need not use all of the possible $\Lambda$'s; some are superfluous. Take, for instance, a $\Lambda$ which is not symmetric, in the sense that, for some $k,k',l,l'$, $1 = \Lambda_{kk'}^{ll'} \neq \Lambda_{k'k}^{l'l} = 0$. As completely positive maps are symmetric, any $\cc$ following $\Lambda$ will also follow $\tilde{\Lambda}$ defined from $\Lambda$ by setting $\tilde{\Lambda}_{kk'}^{ll'}$ to 0. In other words, a non-symmetric $\Lambda$ expresses a set of constraints which could be expressed just as suitably by a symmetric one.

In addition, let us define the diagonal of $\Lambda$ as the relation $\dot{\Lambda} : \cz_A \to \cz_B$ defined by $\dot{\Lambda}_k^l := \Lambda_{kk}^{ll}$. It is easy to see that if there exist $k,l$ such that $\dot{\Lambda}_k^l = 0$, then any completely positive $\cc$ that follows $\Lambda$ will also follow the route $\tilde{\Lambda}$ obtained from the former by setting  $\forall k',l', \, \tilde{\Lambda}_{kk'}^{ll'} = \tilde{\Lambda}_{k'k}^{l'l} = 0$. Let us define diagonally dominant relations $\Lambda$ as the ones satisfying for any $k,l$, $\dot{\Lambda}_{k}^{l} = 0 \implies \forall k', l', \, \,\Lambda_{kk'}^{ll'} = \Lambda_{k'k}^{l'l} = 0$. This entails that a non diagonally dominant $\Lambda$ expresses a set of constraints which could be expressed just as suitably by a diagonally dominant one. One can thus, without loss of generality, work only with symmetric and diagonally dominant $\Lambda$'s.

Remarkably, the symmetric and diagonally dominant $\Lambda$'s can be recovered in another way: they are exactly the \textit{completely positive relations} that one can obtain by mimicking, on relations, the procedure that leads from linear maps to completely positive linear maps, through `doubling then tracing out'. Indeed, one of the several equivalent definitions of completely positive linear maps is the following: $\cc : \cl(\ch_A) \to \cl(\ch_B)$ is a completely positive linear map if and only if it is of the form $\cc: \rho \mapsto \Tr_E (M \rho M^\dagger)$, where $\ch_E$ is an auxiliary Hilbert space and $M: \ch_A \to \ch_B \otimes \ch_E$ is a linear map \cite{coecke_kissinger_2017}. If, in an analogous way, we say that $\Lambda: \cz_A \times \cz_A \to \cz_B \times \cz_B$ is completely positive if there exists a set $\cz_E$ and a relation $\lambda: \cz_A \to \cz_B \times \cz_E$ such that $\Lambda^{ll'}_{kk'}  = \sum_m \lambda^{lm}_k \lambda^{l'm}_{k'}$, then it can be found that a given $\Lambda$ is a completely positive relation if and only if it is symmetric and diagonally dominant\footnote{For a proof, see Ref.\ \cite{Mohindru2015}, Proposition 3.1.}. Sequential and parallel compositions of completely positive relations are completely positive relations\footnote{This follows directly from the universal CPM construction of Selinger \cite{Selinger2007}.}.

Completely positive relations will thus be used to express sectorial constraints for quantum channels, providing completely positive routes. The diagonal $\dot{\Lambda}_k^l$ of a completely positive route encodes constraints on whether a channel is allowed to connect the sectors $k$ and $l$; and the off-diagonal coefficients $\Lambda_{kk'}^{ll'}$ encode constraints on whether the connections between sectors $k$ and $l$ on the one hand, and sectors $k'$ and $l'$ on the other hand, are allowed to be coherent with each other (these will be called \textit{coherence constraints}). 

In analogy with the construction for the pure case, routed CPMs are defined as follows, with partitioned spaces colloquially written as $\cl\left(\bigoplus_{k \in \cz_A} \ch_A^k \right) = \bigoplus_{k,k' \in \cz_A} \cl\left( \ch_A^k \to \ch_A^{k'} \right) =: A^{kk'}$:

\begin{definition}[Routed CPMs]
A \textbf{routed completely positive map (CPM)} from $ A^{kk'}$ to $B^{ll'}$ is a pair $(\Lambda, \cc)$ where $\Lambda: \cz_A \times \cz_A \to \cz_B \times \cz_B$ is a completely positive relation, and $\cc: \cl\left(\bigoplus_{k \in \cz_A} \ch_A^k \right) \to \cl\left(\bigoplus_{l \in \cz_B} \ch_B^l\right)$ is a completely positive map which follows $\Lambda$.
\end{definition}

The framework of partitioned Hilbert spaces and routed CPMs satisfies the exact analogue of Theorem \ref{th:PureLinearMapsSoundForCircuits}:

\begin{theorem}\label{th:MixedLinearMapsSoundForCircuits}
The framework of partitioned Hilbert spaces and routed CPMs admits a sound representation in terms of circuit diagrams.
\end{theorem}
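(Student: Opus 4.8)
The plan is to mirror the proof of Theorem~\ref{th:PureLinearMapsSoundForCircuits}: I would first establish that partitioned Hilbert spaces and routed CPMs organise into a $\dagger$-symmetric monoidal category, and then invoke the general fact that (dagger) symmetric monoidal categories admit a sound diagrammatic calculus in terms of circuit diagrams \cite{joyal1991, coecke_kissinger_2017}. All the genuine work therefore lies in establishing the categorical structure; diagrammatic soundness is then automatic.

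The most economical way to obtain this structure is to piggyback on the pure case via Selinger's universal CPM construction \cite{Selinger2007}. By Theorem~\ref{th:RoutedMaps=dagSMC}, partitioned Hilbert spaces and routed linear maps already form a $\dagger$-SMC, which I will call $\cat{RLinMap}$; applying the CPM construction to any $\dagger$-SMC yields a new $\dagger$-SMC, so $\mathrm{CPM}[\cat{RLinMap}]$ is a $\dagger$-SMC with no further axiom-chasing needed. What remains is to identify $\mathrm{CPM}[\cat{RLinMap}]$ with the framework of partitioned spaces and routed CPMs. On objects this is immediate. On morphisms, the CPM construction represents a map by a routed linear map $(\lambda, M)\colon A^k \to B^l \otimes E^m$ and sends it to the doubled-then-traced CPM $\cc\colon \rho \mapsto \Tr_E(M \rho M^\dagger)$ paired with the doubled relation $\Lambda^{ll'}_{kk'} = \sum_m \lambda^{lm}_k \lambda^{l'm}_{k'}$. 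By the characterisation recalled in the main text (via \cite{Mohindru2015}), the relations arising in this way are exactly the symmetric, diagonally dominant ones, i.e.\ exactly the completely positive relations admitted as routes for CPMs; and $\cc$ follows $\Lambda$ by construction. Hence the morphisms produced by the CPM construction are precisely the routed CPMs of the definition above.

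The step I expect to carry the real weight is the surjectivity of this identification: showing that \emph{every} routed CPM $(\Lambda, \cc)$ arises from some routed linear map. Concretely, given a completely positive $\cc$ following a completely positive $\Lambda$, one must produce a dilation $M$ with $\cc = \Tr_E(M(\cdot) M^\dagger)$ whose route $\lambda$ both doubles to $\Lambda$ and is genuinely followed by $M$. Existence of a dilation $M$ and of a decomposition $\Lambda^{ll'}_{kk'} = \sum_m \lambda^{lm}_k \lambda^{l'm}_{k'}$ are each standard in isolation; the content is that they can be chosen \emph{compatibly}, so that $M$ follows $\lambda$ as a routed linear map. This is a routed refinement of the Stinespring dilation, and it is the single place where the partition structure genuinely interacts with the positivity of $\cc$.

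As a self-contained alternative that does not treat the CPM construction as a black box, I would instead verify the $\dagger$-SMC axioms directly, paralleling Theorem~\ref{th:RoutedMaps=dagSMC}. The two lemmas to establish are then: (i) that completely positive relations are closed under sequential composition, tensor product and transposition --- which holds because these operations preserve symmetry and diagonal dominance, as already noted in the text; and (ii) the mixed-state analogue of the pure compatibility lemma, namely that if $\cc$ follows $\Lambda$ and $\cd$ follows $\Sigma$ then $\cd \circ \cc$ follows $\Sigma \circ \Lambda$, $\cc \otimes \cd$ follows $\Lambda \times \Sigma$, and $\cc^\dagger$ follows $\Lambda^\top$. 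Item~(ii) is the counterpart of the lemma proven in Appendix~\ref{app:proofRoutedMapsSMC}, and here too the sequential-composition case is the delicate one; the coherence, unit and symmetry axioms are inherited essentially verbatim from the underlying category of CPMs, with the route bookkeeping tracked in parallel. Either route then concludes by appeal to \cite{joyal1991, coecke_kissinger_2017}.
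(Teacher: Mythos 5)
Your proposal contains two routes, and they fare very differently against the paper. The ``self-contained alternative'' at the end is essentially the paper's own proof: Appendix~\ref{app:proofRoutedCPMsSMC} defines $\cat{RoutedCPMFHilb}$ directly, with morphisms the pairs $(\Lambda,\cc)$ of a completely positive relation and a completely positive map following it, and verifies the $\dagger$-SMC axioms by combining (a) the pure-case compatibility lemma of Appendix~\ref{app:proofRoutedMapsSMC} --- which applies verbatim because a routed CPM is in particular a routed linear map between the doubled partitioned operator spaces $A^{kk'}$ with projectors $\tilde\pi^{k,k'}$ --- with (b) closure of completely positive maps and completely positive relations under sequential and parallel composition, which the paper gets from Selinger's CPM construction applied to $\FHilb$ and $\FRel$ (your direct check that composition preserves symmetry and diagonal dominance is an equivalent way to get the relational half). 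Theorem~\ref{th:MixedLinearMapsSoundForCircuits} then follows by the same appeal to \cite{joyal1991, coecke_kissinger_2017} as in the pure case.

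Your first route, which you present as the most economical, has a genuine gap that you flag but do not close, and it is worth being precise about why it is not free. Identifying the framework of routed CPMs with $\mathrm{CPM}[\cat{RoutedFHilb}]$ requires exactly the essential-surjectivity claim you isolate: that every pair $(\Lambda,\cc)$, with $\cc$ completely positive and following the completely positive relation $\Lambda$, arises as the double-and-trace of a single routed linear map $(\lambda,M)$ with $\Lambda^{ll'}_{kk'}=\sum_m\lambda^{lm}_k\lambda^{l'm}_{k'}$ \emph{and} $M$ following $\lambda$. This does not follow from the existence of a Stinespring dilation of $\cc$ together with the existence of a decomposition of $\Lambda$ taken separately; one must construct them compatibly, and for a general (partially coherent) $\Lambda$ this is a nontrivial strengthening of the Kraus-representation results (Theorems~\ref{th:KrausFullCoherence} and~\ref{th:KrausFullDecoherence}), which the paper only establishes in the two extremal cases of full coherence and full decoherence. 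The paper deliberately sidesteps this question: it never claims that routed CPMs are the image of the CPM construction on $\cat{RoutedFHilb}$, and its definition of routed CPMs is the direct one precisely so that no routed dilation theorem is needed. So as written, route one does not prove the theorem; route two does, and coincides with the paper.
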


\begin{proof}
This follows from Theorem \ref{th:RoutedCPMs=dagSMC}, of Appendix \ref{app:proofRoutedCPMsSMC}, and from the fact that symmetric monoidal categories are suitable for diagrammatic representation in terms of circuit diagrams \cite{joyal1991, coecke_kissinger_2017}.
\end{proof}

\correction{Just as Theorem \ref{th:PureLinearMapsSoundForCircuits}, this should be understood as ensuring that circuit diagrams can be used without second thoughts when dealing with routed CPMs.}

\textit{Practically trace-preserving} routed CPMs are defined in the same way as practical isometries: 

\begin{definition}[Routed quantum channels]
A routed CPM $(\Lambda, \cc): A^{kk'} \to B^{ll'}$ is \textbf{practically trace-preserving} if it is trace-preserving when restricted to act on its practical input space $\cl(\ch_A^{\cs_{\dot{\Lambda}}})$, defined by the practical input set of $\Lambda$'s diagonal, $\cs_{\dot{\Lambda}}$. $(\Lambda, \cc)$ is then called a \textbf{routed quantum channel}.
\end{definition}

Finally, the condition for a composition of routed quantum channels to be proper (i.e., to always yield a routed quantum channel) is similar to that for practical isometries, and solely depends on their routes' diagonals: 

\begin{theorem} \label{th:IsoCompos_cpm}
Let $\Lambda: \cz_A \times \cz_A \to \cz_B \times \cz_B$ and $\Sigma: \cz_B \times \cz_B \to \cz_C \times \cz_C$ be two routes satisfying 

\be \label{eq:RQChannelCompos} (\dot{\Lambda} \circ \dot{\Lambda}^\top) [\cs_{\dot{\Lambda}}] \subseteq \cs_{\dot{\Sigma}} \, . \ee
Then, for any routed quantum channels $(\Lambda, \cc): A^{kk'} \to B^{ll'}$ and $(\Sigma, \ce) : B^{ll'} \to C^{mm'}$, their composition $(\Sigma, \ce) \circ (\Lambda, \cc)$ is a routed quantum channel.
\end{theorem}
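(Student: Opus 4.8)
The plan is to establish the two defining properties of a routed quantum channel separately. That the composite $(\Sigma,\ce)\circ(\Lambda,\cc)=(\Sigma\circ\Lambda,\ce\circ\cc)$ is a routed CPM is immediate, since it is a sequential composite of morphisms in the dagger-SMC of Theorem \ref{th:RoutedCPMs=dagSMC}; hence $\ce\circ\cc$ automatically follows $\Sigma\circ\Lambda$. The real content is to show that $\ce\circ\cc$ is \emph{practically} trace-preserving, i.e.\ that $\Tr[\ce(\cc(\rho))]=\Tr[\rho]$ for every $\rho$ supported on the practical input space $\cl(\ch_A^{\cs_{\dot{(\Sigma\circ\Lambda)}}})$ of the composite route. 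Since trace-preservation is a linear condition and positive operators span, it suffices to check it on positive $\rho$. The idea is to route such a $\rho$ through the two practical input spaces in turn: first argue that $\cc$ already acts trace-preservingly on it, and then that its image $\cc(\rho)$ lands inside the practical input space $\cl(\ch_B^{\cs_{\dot{\Sigma}}})$ of $\ce$, where $\ce$ in turn acts trace-preservingly.

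First I would record the combinatorial input, all of which comes from the fact that $\Lambda$ and $\Sigma$, being completely positive relations, are symmetric and diagonally dominant. Diagonal dominance gives the implication $\Lambda_{kk'}^{ll'}=1 \implies \dot{\Lambda}_k^l=1$ and $\dot{\Lambda}_{k'}^{l'}=1$ (and likewise for $\Sigma$); in particular this forces $\dot{(\Sigma\circ\Lambda)}=\dot{\Sigma}\circ\dot{\Lambda}$, so that $\cs_{\dot{(\Sigma\circ\Lambda)}}$ is exactly the practical input set of the composite diagonal and the hypothesis \eqref{eq:RQChannelCompos} is precisely the condition of the pure-case Theorem \ref{th:IsoCompos} applied to the diagonals $\dot{\Lambda},\dot{\Sigma}$. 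From this I would extract the inclusion $\cs_{\dot{(\Sigma\circ\Lambda)}}\subseteq\cs_{\dot{\Lambda}}$: if $k\in\cs_{\dot{(\Sigma\circ\Lambda)}}$ then $(\Sigma\circ\Lambda)_{kk}^{mm}=1$ for some $m$, so $\Lambda_{kk}^{ll'}=1$ for some $l,l'$, and diagonal dominance yields $\dot{\Lambda}_k^l=1$, i.e.\ $k\in\cs_{\dot{\Lambda}}$. Hence $\rho$ is supported on $\cl(\ch_A^{\cs_{\dot{\Lambda}}})$, and practical trace-preservation of $\cc$ gives $\Tr[\cc(\rho)]=\Tr[\rho]$.

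The key step is the support-containment claim: for positive $\rho$ supported on $\ch_A^{\cs_{\dot{(\Sigma\circ\Lambda)}}}$, the output $\cc(\rho)$ is supported on $\ch_B^{\cs_{\dot{\Sigma}}}$. Here I would use positivity of $\cc(\rho)$: for a positive operator $X$, $\mu_B^l X \mu_B^l=0$ implies $X\mu_B^l=\mu_B^l X=0$, so it is enough to show that the diagonal block $\mu_B^l\,\cc(\rho)\,\mu_B^l$ vanishes for every $l\notin\cs_{\dot{\Sigma}}$. By the defining relation \eqref{eq:FollowingCPM}, this block only receives contributions from indices $k,k'$ with $\Lambda_{kk'}^{ll}=1$, so by diagonal dominance only from $k$ with $\dot{\Lambda}_k^l=1$ lying in the support of $\rho$, i.e.\ in $\cs_{\dot{(\Sigma\circ\Lambda)}}$. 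I would then show no such $k$ exists: any $k\in\cs_{\dot{(\Sigma\circ\Lambda)}}$ admits some $l_0\in\cs_{\dot{\Sigma}}$ with $\dot{\Lambda}_k^{l_0}=1$, and combining $\dot{\Lambda}_k^{l_0}=1$ with $\dot{\Lambda}_k^l=1$ places $l$ in $(\dot{\Lambda}\circ\dot{\Lambda}^\top)[\cs_{\dot{\Sigma}}]$, which \eqref{eq:RQChannelCompos} confines to $\cs_{\dot{\Sigma}}$, contradicting $l\notin\cs_{\dot{\Sigma}}$. Thus $\cc(\rho)\in\cl(\ch_B^{\cs_{\dot{\Sigma}}})$, and practical trace-preservation of $\ce$ gives $\Tr[\ce(\cc(\rho))]=\Tr[\cc(\rho)]$. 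Chaining the two equalities yields $\Tr[\ce(\cc(\rho))]=\Tr[\rho]$, so $\ce\circ\cc$ is practically trace-preserving and the composite is a routed quantum channel.

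I expect the main obstacle to be precisely this support-containment step, and in particular the interplay between positivity and the route's diagonal. What makes the mixed case genuinely more delicate than Theorem \ref{th:IsoCompos} is that a completely positive route $\Lambda$ also carries off-diagonal coherence data $\Lambda_{kk'}^{ll'}$, whereas one must show that control of the \emph{support} of outputs is governed entirely by the diagonal $\dot{\Lambda}$. The positive-semidefiniteness of $\cc(\rho)$ is exactly what collapses any potential dependence on the off-diagonal data back onto the diagonal, after which the reasoning becomes a faithful mixed analogue of the pure-theory argument. I would therefore take care to isolate the positivity fact (a vanishing diagonal block of a positive operator annihilates its whole row and column of blocks) as the conceptual crux; the remaining relational combinatorics are routine once it is in place.
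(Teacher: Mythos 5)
Your proof is correct and takes essentially the approach the paper intends: the paper's only stated proof is that it is ``similar to the one for practical isometries'' (Appendix~\ref{app:practicalIsos}), and your argument is precisely that support-chasing proof transported to the mixed setting through the routes' diagonals, with the inclusion $\cs_{\dot{(\Sigma\circ\Lambda)}}\subseteq\cs_{\dot{\Lambda}}$ and the containment $\cc\bigl(\cl(\ch_A^{\cs_{\dot{(\Sigma\circ\Lambda)}}})\bigr)\subseteq\cl(\ch_B^{\cs_{\dot{\Sigma}}})$ playing the roles of the two steps in the pure case. Two minor remarks: the positivity lemma you single out as the crux is actually dispensable, because diagonal dominance of $\Lambda$ already forces \emph{every} block $\mu_B^l\circ\cc(\pi_A^k\circ\rho\circ\pi_A^{k'})\circ\mu_B^{l'}$ (off-diagonal ones included) to vanish unless $\dot{\Lambda}_k^l=\dot{\Lambda}_{k'}^{l'}=1$, which by your relational argument places both $l$ and $l'$ in $\cs_{\dot{\Sigma}}$, so no restriction to positive $\rho$ is needed; and you have silently, but correctly, read the hypothesis \eqref{eq:RQChannelCompos} as $(\dot{\Lambda}\circ\dot{\Lambda}^\top)[\cs_{\dot{\Sigma}}]\subseteq\cs_{\dot{\Sigma}}$, which must be what is intended, since as printed the expression is not even type-correct ($\dot{\Lambda}\circ\dot{\Lambda}^\top$ acts on subsets of $\cz_B$, while $\cs_{\dot{\Lambda}}\subseteq\cz_A$).
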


The proof of this is similar to the one for practical isometries. When the condition \eqref{eq:RQChannelCompos} is satisfied, we say that the sequential composition of $\Lambda$ and $\Sigma$ is \emph{proper for routed quantum channels}.

\subsection{Link with Kraus representations}

A question of interest is whether the condition (\ref{eq:FollowingCPM}), expressing that a CP map $\cc$ follows a completely positive route $\Lambda$, can be translated in terms of the Kraus representations $\{K_i\}_i$ of $\cc$. A first answer is that the sectorial constraints expressed by $\Lambda$'s diagonal $\dot{\Lambda}$ have to be satisfied by each of the Kraus operators in any Kraus decomposition of $\cc$. This is also a sufficient condition when $\Lambda$ has \textit{full coherence}, i.e.\ when it is a route of the form $\Lambda_{kk'}^{ll'} = \dot{\Lambda}_k^l \dot{\Lambda}_{k'}^{l'}$ which includes no constraints on coherence:

\begin{theorem}\label{th:KrausFullCoherence}
Let $\Lambda: \cz_A \times \cz_A \to \cz_B \times \cz_B$ be a completely positive route, and $\cc: \cl \left( \bigoplus_{k \in \cz_A} \ch_A^k \right) \to \cl \left( \bigoplus_{l \in \cz_B} \ch_B^l \right)$ a completely positive linear map, with a Kraus representation given by the set of operators $\{K_i\}_i$, where $\forall i, K_i: \ch_A \to \ch_B$.

If $\cc$ follows $\Lambda$, then each of the $K_i$'s follow its diagonal $\dot{\Lambda}$. For a $\Lambda$ with full coherence, the reverse implication holds as well.
\end{theorem}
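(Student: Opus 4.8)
The plan is to translate both the CPM-following condition \eqref{eq:FollowingCPM} and the Kraus structure into a common block-decomposition language, in which the statement becomes transparent. First I would record the block-level reformulation of the following condition, entirely analogous to the equivalent condition given for the pure case: sandwiching \eqref{eq:FollowingCPM} between $\mu_B^m \circ (\cdot) \circ \mu_B^{m'}$, using $\mu_B^m\mu_B^l=\delta^{ml}\mu_B^l$, and feeding in inputs of the form $\rho=\pi_A^a\sigma\pi_A^b$, one finds that $\cc$ follows $\Lambda$ if and only if
\be \forall k,k',l,l',\quad \Lambda^{ll'}_{kk'}=0 \implies \mu_B^l\circ\cc\bigl(\pi_A^k\circ\sigma\circ\pi_A^{k'}\bigr)\circ\mu_B^{l'}=0 \ \ \forall\sigma. \ee
The converse direction of this equivalence just uses $\sum_k\pi_A^k=\tr{id}_A$ and $\sum_l\mu_B^l=\tr{id}_B$ to reassemble $\cc(\rho)$ out of its blocks.

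Next I would expand these blocks in the Kraus representation. Writing $K_i^{(l,k)}:=\mu_B^l\circ K_i\circ\pi_A^k$ and using that the projectors are self-adjoint, each block of $\cc$ becomes
\be \mu_B^l\circ\cc\bigl(\pi_A^k\circ\sigma\circ\pi_A^{k'}\bigr)\circ\mu_B^{l'} = \sum_i K_i^{(l,k)}\,\sigma\,\bigl(K_i^{(l',k')}\bigr)^\dagger, \ee
so the whole problem is reduced to understanding when such sums of sandwiched Kraus blocks vanish. Note also that, by the pure-case equivalent condition, ``$K_i$ follows $\dot\Lambda$'' is precisely the statement that $K_i^{(l,k)}=0$ whenever $\dot\Lambda^l_k=0$.

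For the forward implication I would specialise to the \emph{diagonal} blocks $l=l'$, $k=k'$. If $\dot\Lambda^l_k=\Lambda^{ll}_{kk}=0$, the block reformulation forces $\sum_i K_i^{(l,k)}\,\sigma\,(K_i^{(l,k)})^\dagger=0$ for all $\sigma$; taking $\sigma=\ketbra{\psi}{\psi}$ makes each summand $K_i^{(l,k)}\ketbra{\psi}{\psi}(K_i^{(l,k)})^\dagger$ positive semidefinite, so their sum vanishing forces each to vanish, whence $K_i^{(l,k)}\ket{\psi}=0$ for all $\psi$ and all $i$, i.e.\ $K_i^{(l,k)}=0$. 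This shows that every $K_i$ follows $\dot\Lambda$. The crux here — and the reason only $\dot\Lambda$, and not more of $\Lambda$, is recovered — is that the positivity argument applies \emph{solely} to diagonal blocks: an off-diagonal vanishing $\sum_i K_i^{(l,k)}\,\sigma\,(K_i^{(l',k')})^\dagger=0$ is not a sum of positive operators and yields no such conclusion.

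For the reverse implication under full coherence, $\Lambda^{ll'}_{kk'}=\dot\Lambda^l_k\,\dot\Lambda^{l'}_{k'}$, so $\Lambda^{ll'}_{kk'}=0$ holds exactly when $\dot\Lambda^l_k=0$ or $\dot\Lambda^{l'}_{k'}=0$. Assuming each $K_i$ follows $\dot\Lambda$, in the first case every $K_i^{(l,k)}=0$ and in the second every $K_i^{(l',k')}=0$; either way every summand of $\sum_i K_i^{(l,k)}\,\sigma\,(K_i^{(l',k')})^\dagger$ dies, the corresponding $\cc$-block vanishes, and the block reformulation gives that $\cc$ follows $\Lambda$. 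I expect the main obstacle to be conceptual rather than computational: pinpointing \emph{why} full coherence is exactly what closes the reverse direction. Without it, a vanishing off-diagonal coefficient $\Lambda^{ll'}_{kk'}$ encodes a genuine coherence constraint linking two distinct blocks, which per-Kraus-operator compliance with $\dot\Lambda$ cannot by itself enforce; full coherence is precisely the hypothesis that makes every off-diagonal vanishing factor through a diagonal one, so that diagonal compliance of the $K_i$ suffices.
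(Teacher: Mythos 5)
Your proposal is correct and follows essentially the same route as the paper's proof in Appendix F: the same block-level reformulation of the following condition, the same positivity argument on the diagonal blocks for the forward direction (the paper phrases it via the non-negative scalars $\bra{\phi}K_i\ket{\psi}\bra{\psi}K_i^\dagger\ket{\phi}$ rather than vanishing sums of positive semidefinite operators, but this is the same argument), and the identical factorisation $\Lambda^{ll'}_{kk'}=\dot\Lambda^l_k\dot\Lambda^{l'}_{k'}$ for the converse under full coherence.
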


The other situation in which one can give conditions equivalent to (\ref{eq:FollowingCPM}) in terms of Kraus representations is the opposite extremal case: the one in which $\Lambda$ is a route with \textit{full decoherence}, i.e.\ is of the form $\Lambda_{kk'}^{ll'} = \delta_{kk'} \delta^{ll'} \dot{\Lambda}_k^l$. We say that a given Kraus decomposition $\{K_i\}_i$ is adapted to a completely positive route $\Lambda$ with full decoherence if for each $i$, there exists a unique pair $(k,l)$ such that $K_i = \mu_B^l \circ K_i \circ \pi_A^k$, i.e.\ $K_i$ only maps from $\ch_A^k$ to $\ch_B^l$.

\begin{theorem}\label{th:KrausFullDecoherence}
Let $\Lambda: \cz_A \times \cz_A \to \cz_B \times \cz_B$ be a completely positive route with full decoherence, and $\cc: \cl \left( \bigoplus_{k \in \cz_A} \ch_A^k \right) \to \cl \left( \bigoplus_{l \in \cz_B} \ch_B^l \right)$ a completely positive linear map. Then $\cc$ follows $\Lambda$ if and only if there exists a Kraus representation of $\cc$ adapted to $\Lambda$.
\end{theorem}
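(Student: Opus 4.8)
The plan is to work directly with the full-decoherence form of the following condition. Substituting $\Lambda_{kk'}^{ll'} = \delta_{kk'}\delta^{ll'}\dot{\Lambda}_k^l$ into \eqref{eq:FollowingCPM}, the statement that $\cc$ follows $\Lambda$ collapses, since every genuinely off-diagonal term of the doubled relation drops out, to
\be \cc(\rho) = \sum_{kl} \dot{\Lambda}_k^l \, \mu_B^l \circ \cc\!\left(\pi_A^k \circ \rho \circ \pi_A^k\right)\circ \mu_B^l \qquad \forall \rho. \ee
Everything then follows by inserting an arbitrary Kraus representation and exploiting the projector orthogonality relations $\mu_B^l \circ \mu_B^{l'} = \delta^{ll'}\mu_B^l$ and $\pi_A^k \circ \pi_A^{k'} = \delta_{kk'}\pi_A^k$.

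For the forward implication I would start from \emph{any} Kraus representation $\{K_j\}_j$ of $\cc$ and define its block components $K_j^{lk} := \mu_B^l \circ K_j \circ \pi_A^k$. Plugging $\cc(\sigma) = \sum_j K_j \sigma K_j^\dagger$ into the right-hand side above, each summand becomes $\mu_B^l K_j \pi_A^k \rho\, \pi_A^k K_j^\dagger \mu_B^l = K_j^{lk}\,\rho\,(K_j^{lk})^\dagger$, so the condition reads $\cc(\rho) = \sum_{j}\sum_{kl}\dot{\Lambda}_k^l\,K_j^{lk}\rho(K_j^{lk})^\dagger$. This exhibits the family $\{K_j^{lk}\}$, ranging over all $j$ and all pairs $(k,l)$ with $\dot\Lambda_k^l = 1$, as a Kraus representation of $\cc$. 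It remains to note that each nonzero $K_j^{lk}$ is block-supported on a \emph{unique} pair: for $(k',l') \neq (k,l)$ one computes $\mu_B^{l'} K_j^{lk} \pi_A^{k'} = \delta^{l'l}\delta_{k'k}\,K_j^{lk} = 0$, so $K_j^{lk}$ maps only from $\ch_A^k$ to $\ch_B^l$. Hence this representation is adapted to $\Lambda$.

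For the converse I would take an adapted representation $\{K_i\}_i$, so that each $K_i = \mu_B^{l(i)}\circ K_i \circ \pi_A^{k(i)}$ for a unique allowed pair $(k(i),l(i))$, i.e.\ one with $\dot\Lambda_{k(i)}^{l(i)} = 1$. Evaluating the right-hand side of the displayed equation on $\cc(\sigma) = \sum_i K_i \sigma K_i^\dagger$ and using $\mu_B^l K_i = \delta^{l\,l(i)}K_i$ together with $K_i \pi_A^k = \delta_{k\,k(i)}K_i$, the double sum over $(k,l)$ collapses to $\sum_i \dot\Lambda_{k(i)}^{l(i)}\, K_i \rho K_i^\dagger = \sum_i K_i \rho K_i^\dagger = \cc(\rho)$, which is exactly the following condition.

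I expect the only delicate point to be bookkeeping rather than mathematics: one must read ``adapted'' as requiring each Kraus operator to be supported on an \emph{allowed} pair ($\dot\Lambda_{k(i)}^{l(i)} = 1$), since this is precisely what makes the factor $\dot\Lambda_{k(i)}^{l(i)}$ equal to $1$ in the converse computation; without it the equivalence would already fail in the easy direction. The forward direction automatically produces only allowed pairs, so the two readings are consistent. Notably, no deeper structural input (such as uniqueness of Kraus representations up to an isometry) is needed, because the full-decoherence condition hands us the block decomposition of $\cc$ directly.
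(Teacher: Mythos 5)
Your proof is correct and takes essentially the same route as the paper: both rest on the collapsed identity $\cc(\rho)=\sum_{k,l}\dot{\Lambda}_k^l\,\mu_B^l\circ\cc(\pi_A^k\circ\rho\circ\pi_A^k)\circ\mu_B^l$, the only difference being that you block an arbitrary Kraus representation of $\cc$ into the operators $K_j^{lk}$, whereas the paper Kraus-decomposes each block map $\cc_k^l:\rho\mapsto\mu_B^l\circ\cc(\pi_A^k\circ\rho\circ\pi_A^k)\circ\mu_B^l$ separately and takes the union of the resulting families. Your closing remark is also well taken: as literally stated, the paper's definition of ``adapted'' does not require the supporting pair $(k,l)$ of each Kraus operator to satisfy $\dot{\Lambda}_k^l=1$, and the converse (which the paper dismisses as ``direct'') does need that reading.
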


Theorems \ref{th:KrausFullCoherence} and \ref{th:KrausFullDecoherence} are proven in Appendix \ref{app:Kraus}.

\subsection{Diagrammatic representation}

Let us quickly elaborate on how the diagrammatic constructions and notions of Section \ref{sec:RoutedDiagrams} generalise to the case of mixed quantum theory. As exemplified in the next subsection, routed diagrams for \correction{mixed states and general} quantum channels are in simple analogy with those for \correction{pure states and isometries}: one simply switches to writing wires with doubled superscripts, of the form $A^{kk'}$, and to writing the completely positive routes with these same doubled indices. All compositions in such diagrams have to be suitable for routed quantum channels. The well-definition of these diagrams as a faithful representation of the mathematical framework is guaranteed by Theorem \ref{th:MixedLinearMapsSoundForCircuits}.

\subsection{Two examples: Superposition of three trajectories and decoherence of copied information}

We will exemplify the routed maps framework for general quantum channels \correction{with two examples} to show how completely positive routes can include constraints on the coherence between sectors, and how this leads to easy decoherence computations.

In order to also present a somewhat more involved use of the routed \correction{circuits} framework, let us extend the scenario we already formalised in Section \ref{sec:example2Superp}, and consider now the superposition of three trajectories. This scenario is the same as before, except that the control system is now a qutrit, and the message can now go in a superposition of three different channels $\ca$, $\cb$, and $\cc$, which once again preserve the number of particles \cite{chiribella2019shannon}.

Now that we are working with general quantum channels, there are in fact two different routes that one could attribute to $\ca$ (and, in the same way, to $\cb$ and $\cc$); the choice between them depends on the features of the physical scenario we want to describe. On the one hand, we could be asking only that $\ca$ preserves the number of particles, without setting constraints on the coherence between the vacuum sector and the one-particle sector. In this case, the route constraining $\ca$ will be $\delta^l_k \delta^{l'}_{k'}$. But one could also be considering a more restrictive situation, in which $\ca$ not only acts separately on the two sectors, but also acts incoherently on each \cite{chiribella2019shannon,kristjansson2020resources}. The choice of route corresponding to this situation is then $\delta^{ll'}_{kk'}$\footnote{The Kronecker delta here means that $\delta^{ll'}_{kk'}=1$ if and only if $l=l'=k=k'$, else $\delta^{ll'}_{kk'}=0$.}. The use of completely positive routes, therefore, allows to neatly distinguish between the two different scenarios.

Let us, for example, look at the scenario in which each channel is allowed to preserve coherence between the sectors. The routed circuit representing such a scenario (using contractions of Kronecker deltas) is then:

\be \label{eq:3ChannelSuperposition} %
\begin{tikzpicture}
	\begin{pgfonlayer}{nodelayer}
		\node [style=map] (0) at (-1.5, 1) {$\ca$};
		\node [style=none] (1) at (0.5, -2.5) {};
		\node [style=none] (2) at (0.5, 4.5) {};
		\node [style=right label] (3) at (-2.75, 2.75) {$A^{kk'}$};
		\node [style=right label] (4) at (-2.75, -0.75) {$A^{kk'}$};
		\node [style=map] (5) at (5, 1) {$\cc$};
		\node [style=none] (6) at (2.5, -2.5) {};
		\node [style=none] (7) at (2.25, 4.5) {};
		\node [style=right label] (8) at (4.5, 2.75) {$C^{pp'}$};
		\node [style=right label] (9) at (4.5, -0.75) {$C^{pp'}$};
		\node [style=large map] (10) at (1.25, 4.5) {$\cu^\dagger$};
		\node [style=large map] (11) at (1.25, -2.5) {$\cu$};
		\node [style=none] (13) at (0.25, -4) {};
		\node [style=right label] (14) at (0.25, -3.75) {$M$};
		\node [style=none] (16) at (2.25, -4) {};
		\node [style=right label] (17) at (2.25, -3.75) {$C$};
		\node [style=none] (18) at (0.25, -2.5) {};
		\node [style=none] (19) at (2.25, -2.5) {};
		\node [style=none] (20) at (-2.5, -2.5) {$\omega^{kmp} \omega^{k'm'p'}$};
		\node [style=none] (23) at (-2.5, 4.5) {$\omega_{kmp} \omega_{k'm'p'}$};
		\node [style=none] (24) at (0.25, 6) {};
		\node [style=right label] (25) at (0.25, 5.75) {$M$};
		\node [style=none] (26) at (2.25, 6) {};
		\node [style=right label] (27) at (2.25, 5.75) {$C$};
		\node [style=none] (28) at (0.25, 4.5) {};
		\node [style=none] (29) at (2.25, 4.5) {};
		\node [style=map] (30) at (1, 1) {$\cb$};
		\node [style=none] (31) at (1, -2.5) {};
		\node [style=none] (32) at (1, 4.5) {};
		\node [style=right label] (33) at (1, 2.75) {$B^{mm'}$};
		\node [style=right label] (34) at (1, -0.75) {$B^{mm'}$};
	\end{pgfonlayer}
	\begin{pgfonlayer}{edgelayer}
		\draw [in=90, out=-90, looseness=1.25] (2.center) to (0);
		\draw [in=-90, out=90] (1.center) to (0);
		\draw [in=90, out=-90] (7.center) to (5);
		\draw [in=-90, out=90] (6.center) to (5);
		\draw (18.center) to (13.center);
		\draw (19.center) to (16.center);
		\draw (28.center) to (24.center);
		\draw (29.center) to (26.center);
		\draw (32.center) to (30);
		\draw (31.center) to (30);
	\end{pgfonlayer}
\end{tikzpicture}} \ee
where, for given $k$, $m$ and $p$, $\omega^{kmp} = 1$ if and only if $k+m+p = 1$. Remember that, because of the convention of contracting Kronecker deltas, writing the names of $\ca$'s input and output wires with the same superscripts implicitly means that we are considering the routed quantum channel $(\delta^l_k \delta^{l'}_{k'}, \ca)$; the same goes with the routed quantum channels corresponding to Bob's and Charlie's actions.

On the other hand, in the scenario where the one-particle and vacuum sectors evolve incoherently with each other, the systems corresponding to Alice, Bob and Charlie would have repeated indices $A^{kk}, B^{mm}$ and $C^{pp}$.

Another example shows how routes can help derive some immediate consequences of discardings on the coherence between sectors. Suppose we have a routed channel from one system $A$ to two partitioned systems $B^{kk'}$ and $C^{ll'}$, which features perfect (possibly coherent) sectorial correlations between $B^{kk'}$ and $C^{ll'}$ -- i.e.\ this routed channel is of the form $(\delta^{kl} \delta^{k'l'}, \cc)$. This can be understood as a channel which, in particular, sends copies of the same information to two agents, Bob and Charlie. Indeed, if Bob measures in which of the sectors $B^k$ his system is, and Charlie does the same with the sectors $C^l$, they will find the same result.

Let us now look at what happens if Bob discards his system (or, more generally, if Bob's system never reaches  Charlie, as the latter's description of his own system is then the one obtained by discarding Bob's part). The discarding on $B^{kk'}$ is given by the routed quantum channel $(\delta_{kk'}, \disc_B)$, where $\disc_B$ is the trace on $\cl(\ch_B)$. The quantum channel this yields is therefore

\be %
\begin{tikzpicture}
	\begin{pgfonlayer}{nodelayer}
		\node [style=right label] (9) at (1, 1.75) {$C^{ll'}$};
		\node [style=large map] (11) at (0, 0) {$\cc$};
		\node [style=none] (13) at (0, -2.5) {};
		\node [style=right label] (14) at (0, -2) {$A$};
		\node [style=none] (18) at (0, 0) {};
		\node [style=none] (19) at (1, 0) {};
		\node [style=none] (20) at (-3, 0) {$\delta^{kl} \delta^{k'l'}$};
		\node [style=upground] (30) at (-0.75, 3) {};
		\node [style=none] (31) at (-0.75, 0) {};
		\node [style=left label] (34) at (-1, 1.75) {$B^{kk'}$};
		\node [style=none] (35) at (-2, 3) {$\delta_{kk'}$};
		\node [style=none] (36) at (1, 3) {};
	\end{pgfonlayer}
	\begin{pgfonlayer}{edgelayer}
		\draw (18.center) to (13.center);
		\draw (31.center) to (30);
		\draw (36.center) to (19.center);
	\end{pgfonlayer}
\end{tikzpicture}} \quad = \quad  %
\begin{tikzpicture}
	\begin{pgfonlayer}{nodelayer}
		\node [style=right label] (9) at (1.75, 2.25) {$C^{ll'}$};
		\node [style=large map] (11) at (1.5, 0) {$\cc'$};
		\node [style=none] (13) at (1.5, -2.5) {};
		\node [style=right label] (14) at (1.5, -2) {$A$};
		\node [style=none] (18) at (1.5, 0) {};
		\node [style=none] (20) at (-1, 0) {$\delta^{ll'}$};
		\node [style=none] (36) at (1.5, 3) {};
	\end{pgfonlayer}
	\begin{pgfonlayer}{edgelayer}
		\draw (18.center) to (13.center);
		\draw (18.center) to (36.center);
	\end{pgfonlayer}
\end{tikzpicture}} \, .\ee

In the equation above, some simple calculus on routes alone yielded an important physical theorem. Indeed, the routed quantum channel obtained by discarding Bob's system is of the form $(\delta^{ll'}, \cc')$: its route means that it yields states that are completely decohered with respect to the partition $l$. Thus we proved in a natural way a well-known feature of quantum theory: copying information and then discarding one of the copies necessarily leads to the loss of any coherence in the other copy, between the sectors which encoded this information.

What is remarkable is that the use of completely positive routes allows to derive such a theorem from very simple calculus on boolean matrices, and without having to know anything specific about the channel $\cc$, except its crucial structural features. Moreover, the systematic nature of our framework means that one will be able to scale up such proofs easily: in any scenario in which information is copied in some way between any number of subsystems, for any number of subsystems being discarded, calculus on routes will yield direct consequences on the coherence between sectors for the other subsystems.

\subsection{Computing the accessible space corresponding to a slice}

The discussion of interpretations of slices in a routed diagram can also be generalised to routed quantum channels. The formal space corresponding to a slice will, once again, be the `big' space of linear operators corresponding to it: for example, the one corresponding to the slice comprising wires $A^{kk'}$, $B^{mm'}$ and $C^{pp'}$ in (\ref{eq:3ChannelSuperposition}) is $\cl(\ch_A \otimes \ch_B \otimes \ch_C)$. Accessible spaces will be defined as solely depending on the routes' diagonals: indeed, including the information on coherence encoded by the routes' off-diagonal elements would not yield satisfactory operator spaces. The accessible space corresponding to the previously mentioned slice, for example, is $\cl [ ( \ch_A^1 \otimes \ch_B^0 \otimes \ch_C^0) \oplus (\ch_A^0 \otimes \ch_B^1 \otimes \ch_C^0) \oplus (\ch_A^0 \otimes \ch_B^0 \otimes \ch_C^1) ]$. Since we are only using the routes' diagonals, the accessible space corresponding to the same slice in the incoherent case will be the same.  The general procedure of Section \ref{sec:accessiblespace} for computing the accessible space can easily be accommodated to general quantum channels: one follows it using the routes' diagonals, thus ending up with a Hilbert space $\ch^\tr{acc}$; the accessible space of linear operators corresponding to the slice is  then $\cl(\ch^\tr{acc})$.

\section{Index-matching quantum circuits} \label{sec:IndexMatching}

A drawback of general routed quantum circuits is that the sectorial correlations and constraints they feature are not represented in a completely graphical way; the routes are denoted by abstract symbols which do not depict graphically the possible connections between sectors\footnote{A possible systematic way to depict such connections would be to add a third dimension to diagrams, in order to represent direct sums\correction{, as is done with the sheet diagrams of Ref.\ \cite{comfort2020}. Sheet diagrams, however, can quickly grow unwieldy to write down and to decipher as the number of summands in a direct sum increases, in particular because their representations will necessarily be 2d projections of their 3d structures. Note also that 1) they can only be used to represent a subset of routed maps (for instance, a routed map with route $\lambda = \begin{pmatrix} 1 & 0\\ 1 & 1 \end{pmatrix}$ cannot be represented using a sheet diagram, as it is not a direct sum of maps on the sectors); and 2) in a theory of completely positive maps, they are unable to encompass routed maps which feature coherence between the sectors.}}.

Nevertheless, there is a sub-framework of routed quantum circuits which encompasses a fair share of scenarios (though not all of them), and in which sectorial correlations and constraints can be represented in a fully graphical way: this is the framework of \textit{index-matching quantum circuits}. As we will see, index-matching quantum circuits correspond to the `extended quantum circuits' introduced for the study of causal decompositions in Refs.\ \cite{Allen2017,barrett2019, lorenz2020, barrett2020cyclic}, and presented in Section \ref{sec:CausalDecs}. The theory of index-matching quantum circuits will thus also serve to provide a sound and systematic mathematical foundation to the use of the diagrams introduced in these earlier works.

The simple idea behind index-matching circuits is to make the most out of the graphical trick of Kronecker delta contraction, which we described earlier in a simple example. Thus, in this framework, one restricts the partitions to be indexed by a rigid combination of several indices, and only considers routes built from Kronecker deltas between such indices. This allows to represent these routes directly on a diagram, by repeating indices to denote the Kronecker deltas. The conditions for suitable composition also take a particularly simple form in this context, making it easier to ensure that \correction{an index-matching} circuit is suitable for practical isometries, practical unitaries, or practically trace-preserving maps. A typical example of an index-matching circuit is the causal decomposition (\ref{eq:DiamondDot}).

We formalise thoroughly the framework of index-matching routed maps in Appendix \ref{app:IndexMatching}; here, we will present it in a more accessible way. At the level of pure states and operators, it has two major components. The first one is partitioned Hilbert spaces whose partitions are labelled by several indices, i.e.\ which are of the form $A^{k_1 \ldots k_m}$. Each index $k_i$ has a length $\abs{k_i}$, denoting the number of values it can take. The second components is index-matching routed maps, which are routed maps whose route is solely written in terms of Kronecker deltas (in order to make sense, these Kronecker deltas necessarily have to relate indices of the same length). Examples of possible index-matching routes from $A^{k_1 k_2}$ to $B^{l_1 l_2}$ are $\delta^{l_1}_{k_1} \delta^{l_2}_{k_2}$, $\delta^{l_1 l_2}_{k_1 k_2}$, $\delta^{l_1 l_2}$, $\delta^{l_2}_{k_1 k_2}$, $1$, etc.

Using graphical Kronecker delta contractions, the routes of index-matching routed maps can therefore be represented in a fully graphical way. For example, if we look at maps of type $A^{k_1} \otimes B^{k_2} \to C^{l_1} \otimes D^{l_2}$, the maps $(\delta^{l_1 l_2}_{k_1 k_2},U_1)$, $(\delta^{l_1 l_2},U_2)$ and $(1, U_3)$ will respectively be represented as

\be %
\begin{tikzpicture}
	\begin{pgfonlayer}{nodelayer}
		\node [style=large map] (0) at (0, 0) {$U_1$};
		\node [style=none] (1) at (-0.75, 0) {};
		\node [style=none] (2) at (0.75, 0) {};
		\node [style=none] (3) at (-0.75, 2) {};
		\node [style=none] (4) at (-0.75, -2) {};
		\node [style=none] (5) at (0.75, -2) {};
		\node [style=none] (6) at (0.75, 2) {};
		\node [style=right label] (7) at (1, 1.75) {$D^k$};
		\node [style=left label] (8) at (-1, 1.75) {$C^k$};
		\node [style=left label] (9) at (-1, -1.75) {$A^k$};
		\node [style=right label] (10) at (1, -1.75) {$B^k$};
	\end{pgfonlayer}
	\begin{pgfonlayer}{edgelayer}
		\draw (1.center) to (3.center);
		\draw (1.center) to (4.center);
		\draw [in=270, out=90] (2.center) to (6.center);
		\draw (2.center) to (5.center);
	\end{pgfonlayer}
\end{tikzpicture}} , %
\begin{tikzpicture}
	\begin{pgfonlayer}{nodelayer}
		\node [style=large map] (0) at (0, 0) {$U_2$};
		\node [style=none] (1) at (-0.75, 0) {};
		\node [style=none] (2) at (0.75, 0) {};
		\node [style=none] (3) at (-0.75, 2) {};
		\node [style=none] (4) at (-0.75, -2) {};
		\node [style=none] (5) at (0.75, -2) {};
		\node [style=none] (6) at (0.75, 2) {};
		\node [style=right label] (7) at (1, 1.75) {$D^{l}$};
		\node [style=left label] (8) at (-1, 1.75) {$C^{l}$};
		\node [style=left label] (9) at (-1, -1.75) {$A^{k_1}$};
		\node [style=right label] (10) at (1, -1.75) {$B^{k_2}$};
	\end{pgfonlayer}
	\begin{pgfonlayer}{edgelayer}
		\draw (1.center) to (3.center);
		\draw (1.center) to (4.center);
		\draw [in=270, out=90] (2.center) to (6.center);
		\draw (2.center) to (5.center);
	\end{pgfonlayer}
\end{tikzpicture}} , %
\begin{tikzpicture}
	\begin{pgfonlayer}{nodelayer}
		\node [style=large map] (0) at (0, 0) {$U_3$};
		\node [style=none] (1) at (-0.75, 0) {};
		\node [style=none] (2) at (0.75, 0) {};
		\node [style=none] (3) at (-0.75, 2) {};
		\node [style=none] (4) at (-0.75, -2) {};
		\node [style=none] (5) at (0.75, -2) {};
		\node [style=none] (6) at (0.75, 2) {};
		\node [style=right label] (7) at (1, 1.75) {$D^{l_2}$};
		\node [style=left label] (8) at (-1, 1.75) {$C^{l_1}$};
		\node [style=left label] (9) at (-1, -1.75) {$A^{k_1}$};
		\node [style=right label] (10) at (1, -1.75) {$B^{k_2}$};
	\end{pgfonlayer}
	\begin{pgfonlayer}{edgelayer}
		\draw (1.center) to (3.center);
		\draw (1.center) to (4.center);
		\draw [in=270, out=90] (2.center) to (6.center);
		\draw (2.center) to (5.center);
	\end{pgfonlayer}
\end{tikzpicture}} \, . \ee

Contractions of Kronecker deltas, however, can sometimes lead to ambiguities about the routes which are associated to each map in an index-matching circuit. For example, if we composed the channels $(\delta^{l_1}_{k_1} \delta^{l_2}_{k_2}, U): A^{k_1} \otimes B^{k_2} \to C^{l_1} \otimes D^{l_2}$ and  $(\delta^{m_1 m_2}_{l_1 l_2}, V) : C^{l_1} \otimes D^{l_2} \to E^{m_1} \otimes F^{m_2}$, this would lead to the index-matching circuit

\be %
\begin{tikzpicture}
	\begin{pgfonlayer}{nodelayer}
		\node [style=large map] (0) at (0, -1.5) {$U$};
		\node [style=none] (1) at (-0.75, -1.5) {};
		\node [style=none] (2) at (0.75, -1.5) {};
		\node [style=none] (3) at (-0.75, 1.5) {};
		\node [style=none] (4) at (-0.75, -3) {};
		\node [style=none] (5) at (0.75, -3) {};
		\node [style=none] (6) at (0.75, 1.5) {};
		\node [style=right label] (7) at (1, 0) {$D^k$};
		\node [style=left label] (8) at (-1, 0) {$C^k$};
		\node [style=left label] (9) at (-1, -2.75) {$A^k$};
		\node [style=right label] (10) at (1, -2.75) {$B^k$};
		\node [style=large map] (11) at (0, 1.5) {$V$};
		\node [style=none] (12) at (-0.75, 1.5) {};
		\node [style=none] (13) at (0.75, 1.5) {};
		\node [style=none] (14) at (-0.75, 3) {};
		\node [style=none] (15) at (0.75, 3) {};
		\node [style=right label] (16) at (1, 3) {$F^k$};
		\node [style=left label] (17) at (-1, 3) {$E^k$};
	\end{pgfonlayer}
	\begin{pgfonlayer}{edgelayer}
		\draw (1.center) to (3.center);
		\draw (1.center) to (4.center);
		\draw [in=270, out=90] (2.center) to (6.center);
		\draw (2.center) to (5.center);
		\draw (12.center) to (14.center);
		\draw [in=270, out=90] (13.center) to (15.center);
	\end{pgfonlayer}
\end{tikzpicture}} \, , \ee
in which one now cannot properly read the first channel's route anymore, as it has been \correction{`overwritten'} by the route of the second channel. Such issues entail that, if we want to make sure we will be able to give an unambiguous meaning to index-matching circuits, we will need to define the theory of such circuits in a more restrictive way. This is done formally in Appendix \ref{app:IndexMatchingDiagrams}; here, we will stress the main features of the theory thus obtained.

The idea is to go in the opposite direction: instead of starting with maps and defining graphs to represent their compositions, we shall start with abstract graphs, then interpret their wires and nodes as spaces and maps (this idea was loosely inspired by the approach developed in Ref.\ \cite{pinzani2020} for the formalisation of superpositions of causal order). We thus define indexed open directed acyclic graphs (IODAGs) as abstract open graphs made of nodes and directed wires, with the wires additionally bearing indices, and with an equivalence relation on indices (indicating which ones are `the same index'). Figure \ref{fig:ODAGexamples} shows some examples of IODAGs.

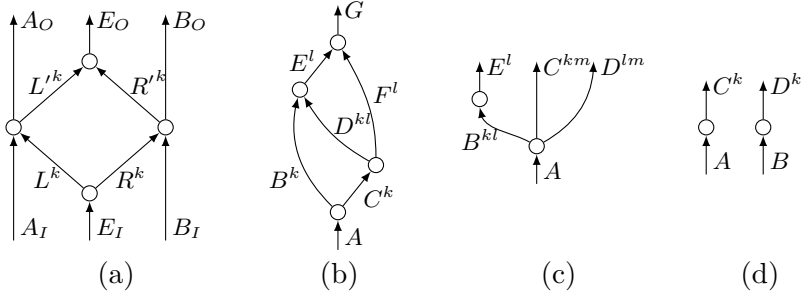
\begin{figure*}
    \centering
    \begin{tabular}{c c c c}
\begin{tikzpicture}
	\begin{pgfonlayer}{nodelayer}
		\node [style=right label] (0) at (-2, 2.75) {$A_O$};
		\node [style=right label] (1) at (2, 2.75) {$B_O$};
		\node [style=none] (2) at (-2, -3) {};
		\node [style=right label] (3) at (-2, -2.75) {$A_I$};
		\node [style=none] (4) at (2, -3) {};
		\node [style=right label] (5) at (2, -2.75) {$B_I$};
		\node [style=none] (6) at (-2, 3) {};
		\node [style=none] (7) at (2, 3) {};
		\node [style=right label] (8) at (0, 2.75) {$E_O$};
		\node [style=none] (9) at (0, -3) {};
		\node [style=right label] (10) at (0, -2.75) {$E_I$};
		\node [style=none] (11) at (0, 3) {};
		\node [style=white dot] (12) at (0, 1.75) {};
		\node [style=white dot] (13) at (0, -1.75) {};
		\node [style=white dot] (18) at (2, 0) {};
		\node [style=white dot] (19) at (-2, 0) {};
		\node [style=right label] (26) at (0.875, 1) {${R'}^k$};
		\node [style=right label] (27) at (0.5, -1.5) {$R^k$};
		\node [style=left label] (28) at (-0.75, -1.5) {$L^k$};
		\node [style=left label] (29) at (-0.7, 1) {${L'}^k$};
	\end{pgfonlayer}
	\begin{pgfonlayer}{edgelayer}
		\draw [style=arrow plain] (9.center) to (13);
		\draw [style=arrow plain] (2.center) to (19);
		\draw [style=arrow plain] (4.center) to (18);
		\draw [style=arrow plain] (13) to (18);
		\draw [style=arrow plain] (13) to (19);
		\draw [style=arrow plain] (19) to (6.center);
		\draw [style=arrow plain] (19) to (12);
		\draw [style=arrow plain] (18) to (12);
		\draw [style=arrow plain] (12) to (11.center);
		\draw [style=arrow plain] (18) to (7.center);
	\end{pgfonlayer}
\end{tikzpicture}} & %
\begin{tikzpicture}
	\begin{pgfonlayer}{nodelayer}
		\node [style=white dot] (0) at (0, -2.25) {};
		\node [style=white dot] (2) at (-1, 1) {};
		\node [style=white dot] (3) at (1, -1) {};
		\node [style=white dot] (4) at (0, 2.25) {};
		\node [style=none] (5) at (0, -3.25) {};
		\node [style=none] (6) at (0, 3.25) {};
		\node [style=left label] (7) at (-1, -1.5) {$B^k$};
		\node [style=right label] (8) at (0, -3) {$A$};
		\node [style=right label] (9) at (0.75, 0.75) {$F^l$};
		\node [style=right label] (10) at (-0.3, -0.075) {$D^{kl}$};
		\node [style=right label] (11) at (0, 3) {$G$};
		\node [style=left label] (12) at (-0.6, 1.6) {$E^l$};
		\node [style=right label] (13) at (0.5, -2) {$C^k$};
	\end{pgfonlayer}
	\begin{pgfonlayer}{edgelayer}
		\draw [style=arrow plain] (5.center) to (0);
		\draw [style=arrow plain] (4) to (6.center);
		\draw [style=arrow plain] (0) to (3);
		\draw [style=arrow plain, in=-60, out=150] (3) to (2);
		\draw [style=arrow plain, in=-105, out=135] (0) to (2);
		\draw [style=arrow plain] (2) to (4);
		\draw [style=arrow plain, in=-60, out=90] (3) to (4);
	\end{pgfonlayer}
\end{tikzpicture}} & %
\begin{tikzpicture}
	\begin{pgfonlayer}{nodelayer}
		\node [style=white dot] (0) at (0, -0.5) {};
		\node [style=white dot] (1) at (-1.5, 0.75) {};
		\node [style=none] (2) at (-1.5, 1.75) {};
		\node [style=none] (3) at (0, 1.75) {};
		\node [style=none] (4) at (1.5, 1.75) {};
		\node [style=none] (5) at (0, -1.5) {};
		\node [style=right label] (6) at (0, -1.25) {$A$};
		\node [style=right label] (7) at (-1.5, 1.5) {$E^{l}$};
		\node [style=right label] (8) at (0, 1.5) {$C^{km}$};
		\node [style=right label] (9) at (1.5, 1.5) {$D^{lm}$};
		\node [style=left label] (10) at (-1, -0.5) {$B^{kl}$};
	\end{pgfonlayer}
	\begin{pgfonlayer}{edgelayer}
		\draw [style=arrow plain, in=-90, out=150] (0) to (1);
		\draw [style=arrow plain] (1) to (2.center);
		\draw [style=arrow plain] (0) to (3.center);
		\draw [style=arrow plain, in=-90, out=30] (0) to (4.center);
		\draw [style=arrow plain] (5.center) to (0);
	\end{pgfonlayer}
\end{tikzpicture}} & %
\begin{tikzpicture}
	\begin{pgfonlayer}{nodelayer}
		\node [style=white dot] (0) at (-0.75, 0) {};
		\node [style=white dot] (1) at (0.75, 0) {};
		\node [style=none] (2) at (-0.75, 1.25) {};
		\node [style=none] (3) at (0.75, 1.25) {};
		\node [style=none] (4) at (-0.75, -1.25) {};
		\node [style=none] (5) at (0.75, -1.25) {};
		\node [style=right label] (6) at (-0.75, 1) {$C^k$};
		\node [style=right label] (7) at (0.75, 1) {$D^k$};
		\node [style=right label] (9) at (-0.75, -1) {$A$};
		\node [style=right label] (10) at (0.75, -1) {$B$};
	\end{pgfonlayer}
	\begin{pgfonlayer}{edgelayer}
		\draw [style=arrow plain] (4.center) to (0);
		\draw [style=arrow plain] (5.center) to (1);
		\draw [style=arrow plain] (0) to (2.center);
		\draw [style=arrow plain] (1) to (3.center);
	\end{pgfonlayer}
\end{tikzpicture}} \\
(a) & (b) & (c) & (d) \end{tabular}
    \caption{Examples of IODAGS.}
    \label{fig:ODAGexamples}
\end{figure*}

One can, as a second step, \textit{interpret} an IODAG by assigning partitioned Hilbert spaces to wires and maps to nodes, where the maps follow the routes specified by the index-matching in the diagram; this provides an index-matching circuit. This procedure allows to attribute proper semantics to index-matching quantum circuits.

IODAGs can be composed sequentially and in parallel. To prevent the appearance, in their interpretations, of ambiguities such as the ones described above, the possibility of sequential composition will be restricted: sequentially composing $\Gamma_1$ and $\Gamma_2$ is allowed only if $\Gamma_1$'s output wires are the same as $\Gamma_2$'s input wires, and if the equivalence classes among $\Gamma_1$'s outputs' indices are the same as those among $\Gamma_2$'s inputs' indices. For instance, taking the following IODAGs:
$$ \begin{tabular}{c c c}
\begin{tikzpicture}
	\begin{pgfonlayer}{nodelayer}
		\node [style=white dot] (0) at (0, 0) {};
		\node [style=none] (1) at (-1, 1.25) {};
		\node [style=none] (2) at (1, 1.25) {};
		\node [style=none] (3) at (0, -1.25) {};
		\node [style=right label] (4) at (1, 1) {$C^l$};
		\node [style=right label] (5) at (-1, 1) {$B^l$};
		\node [style=right label] (6) at (0, -1) {$A^k$};
	\end{pgfonlayer}
	\begin{pgfonlayer}{edgelayer}
		\draw [style=arrow plain, in=-90, out=150, looseness=0.75] (0) to (1.center);
		\draw [style=arrow plain, in=-90, out=30] (0) to (2.center);
		\draw [style=arrow plain] (3.center) to (0);
	\end{pgfonlayer}
\end{tikzpicture}} & \quad \quad \quad \quad & %
\begin{tikzpicture}
	\begin{pgfonlayer}{nodelayer}
		\node [style=none] (1) at (-1.25, -1) {};
		\node [style=none] (2) at (0.75, -1) {};
		\node [style=right label] (4) at (0.75, -0.75) {$C$};
		\node [style=right label] (5) at (-1.25, -0.75) {$B$};
		\node [style=white dot] (6) at (0.75, 0) {};
		\node [style=white dot] (7) at (-1.25, 0.25) {};
		\node [style=none] (8) at (0, 1.25) {};
		\node [style=none] (9) at (1.5, 1.25) {};
		\node [style=right label] (10) at (1.5, 1) {$F^k$};
		\node [style=right label] (11) at (0, 1) {$E^k$};
		\node [style=none] (12) at (-1.25, 1.25) {};
		\node [style=right label] (13) at (-1.25, 1) {$D$};
	\end{pgfonlayer}
	\begin{pgfonlayer}{edgelayer}
		\draw [style=arrow plain] (2.center) to (6);
		\draw [style=arrow plain] (1.center) to (7);
		\draw [style=arrow plain, in=-90, out=135] (6) to (8.center);
		\draw [style=arrow plain, in=-90, out=45] (6) to (9.center);
		\draw [style=arrow plain, in=270, out=90] (7) to (12.center);
	\end{pgfonlayer}
\end{tikzpicture}} \\ (e) & \quad \quad \quad \quad & (f1) \\ %
\begin{tikzpicture}
	\begin{pgfonlayer}{nodelayer}
		\node [style=none] (0) at (-1.25, -1) {};
		\node [style=none] (1) at (0.75, -1) {};
		\node [style=right label] (2) at (0.75, -0.75) {$C^{l'}$};
		\node [style=right label] (3) at (-1.25, -0.75) {$B^{l}$};
		\node [style=white dot] (4) at (0.75, 0) {};
		\node [style=white dot] (5) at (-1.25, 0.25) {};
		\node [style=none] (6) at (0, 1.25) {};
		\node [style=none] (7) at (1.5, 1.25) {};
		\node [style=right label] (8) at (1.5, 1) {$F^k$};
		\node [style=right label] (9) at (0, 1) {$E^k$};
		\node [style=none] (10) at (-1.25, 1.25) {};
		\node [style=right label] (11) at (-1.25, 1) {$D^l$};
	\end{pgfonlayer}
	\begin{pgfonlayer}{edgelayer}
		\draw [style=arrow plain] (1.center) to (4);
		\draw [style=arrow plain] (0.center) to (5);
		\draw [style=arrow plain, in=-90, out=135] (4) to (6.center);
		\draw [style=arrow plain, in=-90, out=45] (4) to (7.center);
		\draw [style=arrow plain, in=270, out=90] (5) to (10.center);
	\end{pgfonlayer}
\end{tikzpicture}} & \quad \quad \quad \quad & %
\begin{tikzpicture}
	\begin{pgfonlayer}{nodelayer}
		\node [style=none] (0) at (-1, -1) {};
		\node [style=none] (1) at (1, -1) {};
		\node [style=right label] (2) at (1, -0.75) {$C^{l'}$};
		\node [style=right label] (3) at (-1, -0.75) {$B^{l'}$};
		\node [style=white dot] (4) at (1, 0) {};
		\node [style=white dot] (5) at (-1, 0.25) {};
		\node [style=none] (6) at (0.25, 1.25) {};
		\node [style=none] (7) at (1.75, 1.25) {};
		\node [style=right label] (8) at (1.75, 1) {$F^k$};
		\node [style=right label] (9) at (0.25, 1) {$E^k$};
		\node [style=none] (10) at (-1, 1.25) {};
		\node [style=right label] (11) at (-1, 1) {$D^{l'}$};
	\end{pgfonlayer}
	\begin{pgfonlayer}{edgelayer}
		\draw [style=arrow plain] (1.center) to (4);
		\draw [style=arrow plain] (0.center) to (5);
		\draw [style=arrow plain, in=-90, out=135] (4) to (6.center);
		\draw [style=arrow plain, in=-90, out=45] (4) to (7.center);
		\draw [style=arrow plain, in=270, out=90] (5) to (10.center);
	\end{pgfonlayer}
\end{tikzpicture}}\\
 (f2) & \quad \quad \quad \quad & (f3) \end{tabular} \, ,$$
(e) cannot be composed with (f1) or with (f2), but it can be composed with (f3). In a sequential composition, equivalence classes which appear in the interface wires are merged; at the graphical level, this can lead to some relabelling. For instance, the composition of (e) and (f3) defined above yields
$$ %
\begin{tikzpicture}
	\begin{pgfonlayer}{nodelayer}
		\node [style=white dot] (0) at (0, -0.75) {};
		\node [style=none] (3) at (0, -2) {};
		\node [style=right label] (6) at (0, -1.75) {$A^k$};
		\node [style=right label] (9) at (0.75, -0.5) {$C^l$};
		\node [style=left label] (10) at (-0.975, -0.5) {$B^l$};
		\node [style=white dot] (11) at (1, 0.5) {};
		\node [style=white dot] (12) at (-1, 0.5) {};
		\node [style=none] (13) at (0.25, 1.75) {};
		\node [style=none] (14) at (1.75, 1.75) {};
		\node [style=right label] (15) at (1.75, 1.5) {$F^{k'}$};
		\node [style=right label] (16) at (0.25, 1.5) {$E^{k'}$};
		\node [style=none] (17) at (-1, 1.75) {};
		\node [style=right label] (18) at (-1, 1.5) {$D^l$};
	\end{pgfonlayer}
	\begin{pgfonlayer}{edgelayer}
		\draw [style=arrow plain] (3.center) to (0);
		\draw [style=arrow plain, in=-90, out=135] (11) to (13.center);
		\draw [style=arrow plain, in=-90, out=45] (11) to (14.center);
		\draw [style=arrow plain, in=270, out=90] (12) to (17.center);
		\draw [style=arrow plain, in=-90, out=165] (0) to (12);
		\draw [style=arrow plain, in=-90, out=15] (0) to (11);
	\end{pgfonlayer}
\end{tikzpicture}} \, . $$
Parallel composition, on which there are no restrictions, can lead to some relabelling as well; for instance, the parallel composition of (e) and (f2) is
$$ %
\begin{tikzpicture}
	\begin{pgfonlayer}{nodelayer}
		\node [style=none] (0) at (0.5, -1) {};
		\node [style=none] (1) at (2.5, -1) {};
		\node [style=right label] (2) at (2.5, -0.75) {$C^{l'}$};
		\node [style=right label] (3) at (0.5, -0.75) {$B^{l''}$};
		\node [style=white dot] (4) at (2.5, 0) {};
		\node [style=white dot] (5) at (0.5, 0.25) {};
		\node [style=none] (6) at (1.75, 1.25) {};
		\node [style=none] (7) at (3.25, 1.25) {};
		\node [style=right label] (8) at (3.25, 1) {$F^{k'}$};
		\node [style=right label] (9) at (1.75, 1) {$E^{k'}$};
		\node [style=none] (10) at (0.5, 1.25) {};
		\node [style=right label] (11) at (0.5, 1) {$D^{l''}$};
		\node [style=white dot] (12) at (-1.75, 0) {};
		\node [style=none] (13) at (-2.75, 1.25) {};
		\node [style=none] (14) at (-0.75, 1.25) {};
		\node [style=none] (15) at (-1.75, -1) {};
		\node [style=right label] (16) at (-0.75, 1) {$C^l$};
		\node [style=right label] (17) at (-2.75, 1) {$B^l$};
		\node [style=right label] (18) at (-1.75, -0.75) {$A^k$};
	\end{pgfonlayer}
	\begin{pgfonlayer}{edgelayer}
		\draw [style=arrow plain] (1.center) to (4);
		\draw [style=arrow plain] (0.center) to (5);
		\draw [style=arrow plain, in=-90, out=135] (4) to (6.center);
		\draw [style=arrow plain, in=-90, out=45] (4) to (7.center);
		\draw [style=arrow plain, in=270, out=90] (5) to (10.center);
		\draw [style=arrow plain, in=-90, out=150, looseness=0.75] (12) to (13.center);
		\draw [style=arrow plain, in=-90, out=30] (12) to (14.center);
		\draw [style=arrow plain] (15.center) to (12);
	\end{pgfonlayer}
\end{tikzpicture}} \, . $$

The rules for suitable composition of practical isometries take a particularly simple form in index-matching diagrams. Let us define a `starting point' for an index as a node which features this index in its outputs but not in its inputs. An IODAG is proper for practical isometries if, for any index appearing in it, there is at most one starting point in the circuit for this index, and no starting point if this index appears in the global input wires of the diagram. It is proper for practical unitaries if it satisfies both this rule and a symmetric one for endpoints. For example, if we consider the diagrams of Figure \ref{fig:ODAGexamples}, (a), (b) and (c) are suitable for isometries, but not (d), as it features two starting points for the index $k$. (a) and (b) are also suitable for unitaries, but not (c), as the index $k$ is present in the global outputs of the diagram and has an endpoint.

Finally, as we have said, interpretations of IODAGs are given by assigning partitioned Hilbert spaces to wires and maps to nodes, where the maps follow the routes specified by the index-matching in the diagram. An interpretation of an IODAG thus yields a global index-matching routed map, called the \textit{meaning} of this interpretation. The meaning is obtained by composing the maps for each node in accordance with the graph, then composing with a pre-processing map, which serves to match input indices: the meaning of an interpretation of (f3), for instance, needs to include a pre-processing with a projector which will match the indices of its two input wires. Theorems \ref{th:InterpretationsMeaning}, \ref{th:SeqCompInterpretation} and \ref{th:ParrCompInterpretatiom} in Appendix \ref{app:IndexMatchingDiagrams} ensure that interpreting is a well-defined protocol, playing well with sequential and parallel compositions of IODAGS. Interpreting (a) in practical unitaries, for example, yields a circuit of the form (\ref{eq:DiamondDot}), which will therefore have a proper and unambiguous signification as a circuit of index-matching routed maps, and whose meaning will automatically be a unitary map. This ensures that our paradigmatic example of an index-matching quantum circuit is completely sound.

IODAGs could be applied to general quantum channels as well; the only difference is that each index will then become a pair of indices, to be able to denote constraints on coherence.

\correction{ Finally, let us note that index-matching quantum circuits only form a sub-framework of routed quantum circuits. A first example is that a route $\lambda = \begin{pmatrix} 1 & 0\\ 1 & 1 \end{pmatrix}$ cannot be written in terms of Kronecker-deltas; thus a routed map with this route would not be describable in the sole framework of index-matching quantum circuits. Another, more physically grounded example is that of the superposition of three trajectories, as depicted in (\ref{eq:3ChannelSuperposition}): it can be shown that (even in the unitary case) the sectorial correlations among the three wires in this diagram cannot be described using only Kronecker deltas.}

\section{Conclusion}

In this paper, we argued for the necessity of an extension to the framework of standard quantum circuits, and introduced such an extension, given by routed linear maps and routed quantum circuits. We proved that routed maps form a consistent framework, suitable for a sound and faithful diagrammatic expression in terms of circuits, and applicable to both pure and mixed quantum theory; we exemplified its use and interpretation in several cases. We showed how a recently introduced extension of quantum circuits \cite{lorenz2020} can be seen as arising naturally from a sub-framework of routed maps, and made use of this fact to provide a sound and consistent semantics to the use of this previous extension.

We believe that the tighter and more flexible description of quantum theory unlocked by the framework of routed maps could be the basis for significant advances in the understanding of the structure and possibilities of quantum theory. For example, in the two lines of research which we discussed to motivate the introduction of this framework, taking into consideration sectorial constraints and correlations led to a variety of novel results. In Ref.\ \cite{chiribella2019shannon}, this allowed to rigorously formalise the use of a superposition of two communication channels, which in turn led to the creation of a new communication \correction{paradigm exhibiting various communication advantages over what is possible in standard quantum Shannon theory, including a new generalised definition of the capacity of a quantum channel \cite{kristjansson2020resources}}. In Refs.\ \cite{Allen2017, barrett2019, lorenz2020, barrett2020cyclic}, taking sectorial constraints and correlations into account proved an unavoidable step in order to unravel the full scope of causal decompositions. The systematic and self-contained nature of the framework of routed quantum circuits could make it easier, neater and more natural to derive further such advances in the future, both in these lines of research and in others.

In the specific case of the study of causal decompositions, such decompositions were so far always found to be encompassed by what we called (practically unitary) `index-matching routed quantum circuits' in this paper. As the latter are a sub-framework of general practically unitary routed circuits, this raises an interesting question: is it possible that, in some more involved and still unproven cases, causal decompositions (of a unitary into a routed quantum circuit of practical unitaries) might not be encompassed by index-matching routed quantum circuits, but only by the more general ones? A positive answer to this question would prove that the full scope of routed quantum circuits is required in order to describe faithfully the causal structure of quantum theory.

Following on the results of Ref.\ \cite{chiribella2019shannon}, routed quantum circuits could also be used to provide clarifications on the possibility of implementing coherent control of quantum channels in more general scenarios, and to describe and prove the communication advantages which could be yielded by such a control. The possibility of formalising and harnessing quantum superpositions of \correction{quantum operations could also prove useful in various algorithms of quantum computation \cite{zhou2011adding, thompson2018, dong2019controlled}.}

Finally, the example of Ref.\ \cite{barrett2020cyclic}, in which causal decompositions involving index-matching routed quantum circuits were used to describe scenarios featuring indefinite causal order (ICO), and to prove  theorems about the structure of some of these scenarios, demonstrates that the framework of routed quantum maps could also find fruitful applications in the study of ICO -- another research direction which goes beyond the framework of standard quantum circuits \cite{chiribella2013quantum, oreshkov2012quantum} -- which gathered significant interest in the past decade. It is reasonable to think that a framework which encompasses the full scope of quantum-theoretical scenarios should be able to describe both ICO and sectorial correlations and constraints.

\begin{acknowledgments}
It is a pleasure to thank \"Amin Baumeler, Giulio Chiribella, Bob Coecke, James Hefford, Robin Lorenz and Matt Wilson for helpful discussions, advice and comments. AV is supported by the EPSRC Centre for Doctoral Training in Controlled Quantum Dynamics. HK acknowledges funding from the UK Engineering and Physical Sciences Research Council (EPSRC) through grant EP/R513295/1. This publication was made possible through the
support of the grant 61466 `The Quantum Information Structure of Spacetime (QISS)' (qiss.fr) from the John Templeton
Foundation. The opinions expressed in this publication
are those of the authors and do not necessarily reflect the views of the John Templeton Foundation.
\end{acknowledgments}

\bibliographystyle{utphys}
\bibliography{references}

\appendix

\section{Comments on the conceptual role of routes} \label{app:conceptual}

In this Appendix, we will provide a few comments on the conceptual role of the route in a routed map. This conceptual role can be understood by appealing to the notion of \textit{type}. Usually, specifying the type of a linear map $f$ means declaring its input space and its ouput space. Thus the type of a map is a structural piece of data which is prior to the specification of this map itself, with which the map itself is consistent, and which gives information about the ways in which this map can be composed: for instance, $f$ and $g$ can be composed only if $f$'s output space matches $g$'s input space. In a routed map $(\lambda, f)$, $\lambda$ should be morally understood as having the role of an \textit{additional type} for $f$, that comes as a supplement to the declaration of its partitioned input and output spaces\footnote{Let us stress, however, that this is a \textit{moral} account of routes, aimed at clarifying their conceptual role. From a purely formal point of view, the types of routed maps solely consist of their partitioned input and output spaces.}. This is indeed the case once one restricts to the theory of practical isometries or to that of routed quantum channels: routes -- and not solely input and output spaces -- have to be taken into consideration to determine which maps can be meaningfully composed.

That routes play the part of an additional type should also shed light on the seemingly disturbing fact that a given linear map $f$ can be compatible with several different routes\footnote{If $f$ follows $\lambda$, then $f$ also follows any $\tl{\lambda}$ such that $ \forall k,l, \, \tl{\lambda}^l_k \geq \lambda_k^l$.}, and on the natural question one can then ask: `what is the difference between $(\lambda, f)$ and $(\tl{\lambda},f)$ ?'. Our comments entail that this is essentially a question about the meaning that is to be ascribed to a modification of the type assigned to what is, morally, the same map. In fact, similar questions about the meaning of a type change can also arise for non-routed linear maps. Indeed, a same given linear map can also, while morally staying the same map, be ascribed a variety of output spaces (basically any space of which its `actual' output space is a subspace). In both this case and the case of a route change, the type change does not essentially modify the map, but it does modify our capacity to hold structural statements about it, and, in particular, to state what it is meaningful to compose it with.

\section{Connection with the CP* construction and other strategies} \label{app:CP*}

\correction{
In Section \ref{sec:motivation}, we showed that standard quantum circuits (i.e.\ those interpreted in either \FHilb\, or \CPM[\FHilb]) could not be used to provide an adequate description of superpositions of trajectories or causal decompositions. Here, we extend this discussion to the use of the CP* construction \cite{Coecke_2014}, a standard categorical construction yielding a theory CP*[\FHilb] that contains both quantum and classical channels: i.e., we show that neither circuits interpreted in CP*[\FHilb], nor simple constructions relying on CP*[\FHilb], can adequately model superpositions of trajectories or causal decompositions. These considerations also apply to the use of the Karoubi envelope of \CPM[\FHilb], a category slightly larger than CP*[\FHilb].

Let us first explain why it might be hoped that CP*[\FHilb] would provide a sound basis for a representation of these scenarios\footnote{We thank an anonymous QPL reviewer for raising this idea.}. The idea is that the objects of CP*[\FHilb], being defined as C* algebras, can equivalently be thought of as arising from the choice of a preferred partition of a Hilbert space; therefore, the use of CP*[\FHilb] would bypass the need for a definition of partitions `by hand' as is done in the present paper. From there, routes could be defined, not as a structure on maps, but rather as a purely diagrammatic piece of data, and it would suffice to impose that the interpretation of the diagrams in terms of morphisms in CP*[\FHilb]  be consistent with this diagrammatic information. 

We will first explain why this strategy cannot in fact be implemented using CP*[\FHilb]; then we will explain why we believe that the idea of defining routes as diagrammatic data to construct a suitable theory is at least as difficult as the approach taken in this paper.

First, CP*[\FHilb] is too restrictive to be used for the encoding of preferred partitions of Hilbert spaces. Indeed, if we define partitioned Hilbert spaces $A^k$ and $B^l$ as objects in CP*[\FHilb], then the morphisms $A^k \to B^l$ in CP*[\FHilb] can be defined as the CP linear maps from $\cl(\ch_A)$ to $\cl(\ch_B)$ that destroy coherence between the sectors of $A$ and do not allow for any coherence between those of $B$. However, we want the channels in our theory to possibly feature coherence between the sectors. This is in particular crucial in both examples of Section \ref{sec:motivation}: in superpositions of trajectories, the non-coherent case is a trivial and uninteresting one \cite{kristjansson2020resources}; and in causal decompositions, all channels have to be unitary and thus perfectly coherent. CP*[\FHilb] therefore cannot be used to model the structures we want to model.

We now comment on the more general idea of defining routes as mere diagrammatic data, rather than full-fledged morphisms, and simply asking for interpretations of diagrams to be consistent with that data. While this strategy has some advantages, it also presents significant drawbacks. First, it does not allow for the rewriting of a diagram in which, for instance, the composition of a box $f$ and of a box $g$ is replaced with a box $g \circ f$, as the `route' diagrammatic data associated to this box will be undefined -- unless compositions of routes are defined as well, which would bring one back towards a theory of routed maps as defined in this paper.

Second, it makes the problem of defining physical maps more difficult. Indeed, to obtain such a definition, we have a crucial need for a notion of practical input and output spaces, one that can only be defined from the data given by the routes themselves\footnote{The alternative would be to hardcode these `practical input and output spaces' into objects, using, for instance, Karoubi envelopes. However, going this way would come at the cost of defining a very complex pseudo-tensor product structure. Indeed, one would for instance have to express the way in which $L^k$ and $R^k$ in (\ref{eq:DiamondDot}) can be tensored in such a way as to yield not `$L^k \otimes R^{k'}$', but `$L^k \otimes R^k $'; and more generally there should be pseudo-tensor products defined for every possible case of sectorial correlations. In the general case of non-sector-preserving routed maps, these tensor products would present very exotic features: for instance, the codomain of the pseudo-tensor product of two maps would in general depend on the routes they follow. Therefore 1) doing things in this way would still require to define routes as more than diagrammatic pieces of data, and 2) even though conceivable, it would require mathematical constructions which are more involved than those in this paper.}. Here again, in the absence of the structural handles provided by the acknowledgement that routes are not just graphical objects but morphisms with their own compositions, it would be difficult to express the conditions for the composition of such physical maps to be well-defined;  expressing them would once again essentially amount to going back to a theory of routed maps.
}

\section{Categorical perspective} \label{app:categoricalperspective}

\subsection{Dagger symmetric monoidal categories} \label{app:DagSMCs}
Let us introduce, in a non-technical way, the mathematical concepts which can be used to characterise the properties of the frameworks built in the present paper. These concepts encapsulate the fact that a framework is suited for a diagrammatic representation of its maps in terms of circuits. The  structure necessary for this mimics the basic structure of quantum theory: existence of sequential and parallel compositions, of identity maps, of trivial spaces, and of hermitian conjugates, all interacting in a natural way. Any theory with these features accepts sound and intuitive diagrammatic representations of its maps in terms of circuits \cite{joyal1991, coecke_kissinger_2017}. These concepts originate from \textit{category theory}, a mathematical theory which has been at the centre of a recent re-formalisation of quantum theory \cite{abramsky2004categorical,coecke_kissinger_2017,selby2018reconstructing,tull2018categorical}. Our point here is not to present them in depth, but to provide the reader with an intuition of the simple structures that they express.

Categorical frameworks adopt the perspective of \textit{process theories}: this means a theory is described not through its states, but through its \textit{processes}, i.e.\ its dynamical transformations -- states will be recovered as special cases of processes \cite{coecke_kissinger_2017}. In the context of process theories, the main questions are about how processes can be composed together. A simple mathematical framework to describe sequential composition of processes is that of categories. A category contains two kinds of components: \textit{objects}, corresponding to what would usually be called a space; and \textit{morphisms} (or maps), with a specified domain (i.e.\ input space) and a specified codomain (i.e.\ output space), both chosen among the objects of the category. Two morphisms can be sequentially composed if the codomain of the first matches the domain of the second: i.e.\ if $f:A \to B$ and $g: B \to C$ are maps, they can be composed to form a map $g \circ f : A \to C$. In a category, sequential composition is associative, and for any object $A$ there exists an identity morphism $\textrm{id}_A : A \to A$.

Some categories, called \textit{symmetric monoidal categories} (SMC), also feature the structure for parallel composition of morphisms, in the form of an operation called the \textit{tensor product}\footnote{Note that what we call the `tensor product' in this context is not necessarily the tensor product of linear maps.}, $\otimes$. The tensor product of two objects $A$ and $B$ yields an object $A \otimes B$, and the tensor product of two morphisms $f: A \to B$ and $g: C \to D$ yields a morphism $f \otimes g: A \otimes C \to B \otimes D$. The tensor product is associative. SMCs also feature a `unit object' $I$ satisfying $A \otimes I = I \otimes A = A$, which can be thought of as the trivial space of the theory; and swap morphisms, which are, for any pair of objects $A$ and $B$, involutions from $A \otimes B$ to $B \otimes A$. These structures satisfy a set of various coherence conditions which ensure that they interplay appropriately (for instance, that sequential composition distributes over parallel composition, and so on). In an SMC, states on an object $A$ are morphisms from the unit object $I$ to $A$.

Finally, a dagger SMC is an SMC featuring an involution, called the adjoint, which, to any morphism $f: A \to B$, associates a morphism $f^\dagger: B \to A$. The adjoint satisfies various coherence conditions ensuring that it interplays consistently with the rest of the symmetric monoidal structure. Combinations of parallel and sequential compositions of maps in dagger SMCs can always be faithfully represented by so-called \textit{circuit diagrams}, in which maps are represented by boxes, and objects are represented by wires. For instance, the theory of linear maps between finite-dimensional Hilbert spaces forms a dagger SMC \cat{FHilb}. 
We refer the interested reader to Refs.\ \cite{coecke2010categories, coecke_kissinger_2017, fong2019invitation} for accessible introductions to symmetric monoidal categories.

All the theories we will be considering in this paper are dagger SMCs\footnote{In fact, all the theories presented here are dagger compact categories: on top of the dagger symmetric monoidal structure, they feature some additional structure, which can be roughly described as corresponding to the existence of a Choi-Jamiołkowski isomorphism. In particular, this entails that they can be faithfully represented by \textit{string diagrams}, an extension of circuit diagrams \cite{coecke_kissinger_2017}. As the presence of this structure is not essential to the purposes of the present work, we leave our discussion of it to another paper \cite{vanrietvelde2020categorical}.}, which makes them suitable for diagrammatic representation in terms of circuit diagrams \cite{joyal1991, coecke_kissinger_2017}.
In the following appendices we shall give the main elements of the proofs that the theories discussed in this paper form dagger SMCs; more refined proofs will be available in another paper focusing on the categorical structures which lead to the existence of routed categories \cite{vanrietvelde2020categorical}.

\subsection{Routed maps form a dagger SMC} \label{app:proofRoutedMapsSMC}

In this appendix, we prove the following theorem, which can be thought of as a rigorous version of Theorem \ref{th:PureLinearMapsSoundForCircuits}.

\begin{theorem} \label{th:RoutedMaps=dagSMC}
Partitioned spaces and routed maps form, respectively, the objects and morphisms of a dagger SMC $\cat{RoutedFHilb}$, in which:

\begin{itemize} 
\item composition is given by pairwise composition;
\item parallel composition is given on objects by $A^k \otimes B^l := (\ch_A \otimes \ch_B, \cz_A \times \cz_B, (\pi^k_A \otimes \mu^l_B)_{(k,l) \in \cz_A \times \cz_B})$, and on morphisms by the cartesian product on the routes together with the tensor product on the linear maps;
\item the trivial space is the trivial partition of $\mathbb{C}$: $1_\cat{RoutedFHilb} := (\mathbb{C}, \{ * \}, (1))$;
\item the adjoint of $(\lambda, f) : A^k \to B^l$ is $(\lambda, f)^\dagger := (\lambda^\top, f^\dagger): B^l \to A^k$;
\end{itemize}
\end{theorem}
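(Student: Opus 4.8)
The plan is to observe that a routed map $(\lambda,f)$ has its two components living in categories that are \emph{already} known to be dagger SMCs: the route $\lambda$ is a morphism of the dagger SMC $\FRel$ of finite sets and relations (with relational composition, cartesian product, and transpose as its dagger), while $f$ is a morphism of $\FHilb$. The objects of $\cat{RoutedFHilb}$ refine those of $\FRel \times \FHilb$ by carrying the partition data $(\pi^k_A)_k$, but the morphisms are exactly pairs subject to the single constraint that $f$ follows $\lambda$. Hence the whole proof reduces to two tasks: (i) showing that this constraint is stable under all the categorical operations, and that the structural morphisms themselves satisfy it; and (ii) noting that, once (i) guarantees everything is well-defined, every dagger-SMC axiom is an \emph{equation between morphisms}, which holds in $\cat{RoutedFHilb}$ iff it holds separately in each component — and it does, precisely because $\FRel$ and $\FHilb$ are dagger SMCs.

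For task (i), I would first prove the equivalent formulation of ``following'', namely that $f$ follows $\lambda$ iff $\mu^l_B \circ f \circ \pi^k_A = 0$ whenever $\lambda^l_k = 0$; this is immediate from (\ref{eq:RoutedCondition}) by expanding $f$ against the resolutions of the identity $\sum_k \pi^k_A = \mathrm{id}_A$ and $\sum_l \mu^l_B = \mathrm{id}_B$ and using orthogonality of the projectors. Working with this block-vanishing form, the closure lemmas (stability under $\circ$, $\otimes$, $\dagger$, and the identity case) become short computations. For composition, given $f$ follows $\lambda$ and $g$ follows $\sigma$, to show $g\circ f$ follows $\sigma\circ\lambda$ I would insert $\mathrm{id}_B = \sum_l \mu^l_B$ between $g$ and $f$ and use idempotence $\mu^l_B = \mu^l_B \circ \mu^l_B$ to factor each term as $(\nu^m_C \circ g \circ \mu^l_B)\circ(\mu^l_B \circ f \circ \pi^k_A)$; since $(\sigma\circ\lambda)^m_k=\bigvee_l \sigma^m_l\lambda^l_k$, vanishing of this boolean quantity forces, for every $l$, one of the two factors to vanish.

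Tensor and dagger are even more direct: $(\mu^l_B\otimes\tau^{l'}_D)\circ(f\otimes g)\circ(\pi^k_A\otimes\rho^{k'}_C)$ factors as a tensor product of two blocks, and $\pi^k_A\circ f^\dagger\circ\mu^l_B=(\mu^l_B\circ f\circ\pi^k_A)^\dagger$ by self-adjointness of the projectors, matching $(\lambda\times\sigma)$ and $\lambda^\top$ respectively; moreover $\mathrm{id}_A$ follows the diagonal relation $\mathrm{id}_{\cz_A}$ since distinct projectors are orthogonal. I would then check that the associator, left/right unitors, and symmetry of $\FHilb$ each follow the corresponding structural relation of $\FRel$: each only permutes or re-brackets tensor factors, so it intertwines the partition projectors with their permuted counterparts, and block-vanishing is preserved.

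For task (ii), the conclusion is then formal. Pairwise composition and tensor are functorial, the dagger is an identity-on-objects contravariant involution, and all coherence diagrams (pentagon, triangle, hexagons, naturality squares, and the dagger-compatibility equations $(g\circ f)^\dagger = f^\dagger\circ g^\dagger$, $(\lambda,f)^{\dagger\dagger}=(\lambda,f)$, $(f\otimes g)^\dagger=f^\dagger\otimes g^\dagger$) commute, because an equation of routed maps reduces to the conjunction of the corresponding equation of relations in $\FRel$ and of linear maps in $\FHilb$, both of which already hold. The main — indeed essentially the only nontrivial — obstacle is the composition lemma, whose subtlety is that relational composition is a boolean \emph{disjunction} over the intermediate index whereas composition of linear maps is a genuine sum; the resolution-of-identity and idempotence argument above is exactly what reconciles the two, ensuring that a linear block can be nonzero only along a path that the composite relation permits.
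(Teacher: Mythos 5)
Your proposal is correct and follows essentially the same route as the paper's proof in Appendix C.2: establish the block-vanishing reformulation of ``$f$ follows $\lambda$'', then verify closure under sequential composition (via the resolution of the identity $\sum_l \mu^l_B$ and the boolean-versus-linear-sum reconciliation), parallel composition, and adjoints, with the remaining coherence axioms holding componentwise. Your treatment is in fact slightly more explicit than the paper's on the final step, since you spell out why the structural morphisms follow the corresponding relations and why every dagger-SMC equation reduces to its $\FRel$ and $\FHilb$ components, whereas the paper dismisses this as an easy check.
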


We shall prove here the main features of \cat{RoutedFHilb}'s dagger symmetric monoidal structure: that it is closed under sequential and parallel compositions and under taking adjoints, that sequential and parallel compositions are associative, and that parallel composition distributes over sequential composition. That the various coherence conditions are also satisfied can be proven easily. In the following we will freely use some partitioned Hilbert spaces 
$A^k = (\ch_A, \cz_A, (\pi^k_A)_{k \in \cz_A})$, $B^l = (\ch_B, \cz_B, (\mu^l_B)_{l \in \cz_B})$, $C^m = (\ch_C, \cz_C, (\nu^m_C)_{m \in \cz_C})$,
$D^n = (\ch_D, \cz_D, (\eta^n_D)_{n \in \cz_D})$.

Let us first prove a useful equivalent definition of the fact that a linear map follows a route: a map $f: \ch_A \to \ch_B$ follows a route $\lambda$ if and only if $\forall k,l, \lambda^l_k = 0 \implies \mu^l_B \circ f \circ \pi^k_A = 0$. Considering first the direct implication and supposing that $f$ follows $\lambda$, one then has

\be \begin{split}
    0 &= f - f \\
    &= \left( \sum_l \mu^l_B \right) \circ f \circ \left( \sum_k \pi^k_A \right) - \sum_{k,l} \lambda^l_k \cdot \mu^l_B \circ f \circ \pi^k_A\\
    &= \sum_{k,l|\lambda^l_k = 0} \mu^l_B \circ f \circ \pi^k_A \, .
\end{split}  \ee

Hitting this equation with the $\mu^l$'s on the left and the $\pi^k$'s on the left yields: $\forall k,l: ~ \lambda^l_k = 0 \implies \mu^l_B \circ f \circ \pi^k_A = 0$. Reciprocally, supposing the latter, one has

\be \begin{split}
    f &= \left( \sum_l \mu^l_B \right) \circ f \circ \left( \sum_k \pi^k_A \right) \\
    &= \sum_{k,l|\lambda^l_k = 1} \mu^l_B \circ f \circ \pi^k_A\\
    &= \sum_{k,l} \lambda^l_k \cdot \mu^l_B \circ f \circ \pi^k_A \, .
\end{split}  \ee

We now prove that routed maps are closed under sequential composition. If we take two routed maps $(\lambda,f): A^k \to B^l$ and $(\sigma,g):B^l \to C^m$ and take indices $k,m$ such that $(\sigma \circ \lambda)^m_k = \sum_l \sigma^m_l \lambda^l_k = 0$, we have $\nu_C^m \circ g \circ f \circ \pi_A^k = \nu_C^m \circ g \circ (\sum_l \mu^l_B) \circ f \circ \pi_A^k$. Yet the fact that $\sum_l \sigma^m_l \lambda^l_k = 0$ implies that for any given $l$, one has either $\sigma^m_l = 0$ or $\lambda^l_k = 0$. The first case implies that $\nu_C^m \circ g \circ \mu^l_B =0$, and the second that $\mu_B^l \circ f \circ \pi^k_A =0$. Thus all the terms in this sum are null and $\nu_C^m \circ g \circ f \circ \pi_A^k = 0$, so $g \circ f$ follows $\sigma \circ \lambda$, so $(\sigma \circ \lambda, g \circ f)$ is a routed map.

For parallel composition, taking  $(\lambda,f): A^k \to C^m$ and $(\sigma,g):B^l \to D^n$, we have that $(\lambda \times \sigma)^{mn}_{kl} = \lambda^m_k \sigma^n_l$. So $(\lambda \times \sigma)^{mn}_{kl}=0$ implies that either $\lambda^m_k$ or $\sigma^n_l$ is null. In the first case, $\nu_C^m \circ f \circ \pi^k_A = 0$, in the second one, $\eta_D^n \circ g \circ \mu^l_B = 0$; so in both cases, $(\nu_C^m \otimes \eta_D^n) \circ (f \otimes g) \circ (\pi^k_A \otimes \mu^l_B) = (\nu_C^m \circ f \circ \pi^k_A) \otimes (\eta^n_D \circ g \circ \mu_B^l) = 0$. Therefore, $(\lambda \times \sigma, f \otimes g)$ is a routed map.

For closure under taking adjoints, take $(\lambda,f): A^k \to B^l$. Then for given $k$ and $l$, $(\lambda^\top)^k_l = 0 \implies \lambda^l_k = 0 \implies \mu^l_B \circ f \circ \pi^k_A = 0 \implies \left(\mu^l_B \circ f \circ \pi^k_A\right)^\dagger = 0 \implies \pi^k_A \circ f^\dagger \circ \mu^l_B = 0$, where in the last implication we used the fact that orthogonal projectors are self-adjoint. $(\lambda^\top,f^\dagger)$ is thus a routed map.

Finally, that parallel and sequential compositions are associative (both on objects and on morphisms) and that the former distributes over the latter is direct as these were defined pairwise from sequential and parallel compositions which possess all these properties.

\subsection{Routed CPMs form a dagger SMC} \label{app:proofRoutedCPMsSMC}

Here, we prove the analogue of Theorem \ref{th:RoutedMaps=dagSMC} for the case of routed completely positive maps. Let us start with a formal characterisation of how an orthogonal partition of a Hilbert space $\ch_A$ induces an orthogonal partition of the space $\cl(\ch_A)$ of linear operators on $\ch_A$.

\begin{theorem} \label{th:OrthPartCPM}
If $(\pi^k_A)_{k \in \cz_A}$ is an orthogonal partition of $\ch_A$, then, defining the following linear operators on $\cl(\ch_A)$,

\be 
\forall k,k', \, \,
\tilde{\pi}^{k,k'} : \rho \mapsto \pi^{k}_A \circ \rho \circ \pi^{k'}_A \, ,
\ee
$(\tilde{\pi}^{k,k'} )_{(k,k') \in \cz_A \times \cz_A}$ is an orthogonal partition of $\cl(\ch_A)$ (with respect to the Hilbert-Schmidt inner product).
\end{theorem}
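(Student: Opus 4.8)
The plan is to verify directly the two defining conditions of an orthogonal partition from Section~\ref{sec:routedlinearmaps}, now applied to the family $(\tilde{\pi}^{k,k'})_{(k,k')\in\cz_A\times\cz_A}$ regarded as operators on the Hilbert space $\cl(\ch_A)$ equipped with the Hilbert--Schmidt inner product $\langle\rho,\sigma\rangle := \Tr(\rho^\dagger\circ\sigma)$. Concretely, three things must be checked: (i) each $\tilde{\pi}^{k,k'}$ is an orthogonal (i.e.\ self-adjoint) projector on $\cl(\ch_A)$; (ii) the family is idempotent and pairwise orthogonal, in the sense that $\tilde{\pi}^{k,k'}\circ\tilde{\pi}^{l,l'} = \delta^{kl}\delta^{k'l'}\,\tilde{\pi}^{k,k'}$; and (iii) it is complete, $\sum_{k,k'}\tilde{\pi}^{k,k'} = \tr{id}_{\cl(\ch_A)}$.

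First I would establish (ii), which also yields the idempotence half of (i), by a one-line computation: for any $\rho$, $(\tilde{\pi}^{k,k'}\circ\tilde{\pi}^{l,l'})(\rho) = \pi^k_A\circ\pi^l_A\circ\rho\circ\pi^{l'}_A\circ\pi^{k'}_A$, and applying the defining relation $\pi^k_A\circ\pi^l_A = \delta^{kl}\pi^k_A$ on the left and $\pi^{l'}_A\circ\pi^{k'}_A = \delta^{k'l'}\pi^{k'}_A$ on the right collapses this to $\delta^{kl}\delta^{k'l'}\,\tilde{\pi}^{k,k'}(\rho)$. Setting $(l,l')=(k,k')$ then gives idempotence. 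Condition (iii) is equally direct: $\sum_{k,k'}\tilde{\pi}^{k,k'}(\rho) = (\sum_k\pi^k_A)\circ\rho\circ(\sum_{k'}\pi^{k'}_A) = \rho$, using $\sum_k\pi^k_A = \tr{id}_A$ on each side.

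The remaining ingredient of (i) is self-adjointness with respect to the Hilbert--Schmidt inner product, which is the only step that uses more than the projector algebra. I would compute $\langle\tilde{\pi}^{k,k'}(\rho),\sigma\rangle = \Tr((\pi^k_A\circ\rho\circ\pi^{k'}_A)^\dagger\circ\sigma) = \Tr(\pi^{k'}_A\circ\rho^\dagger\circ\pi^k_A\circ\sigma)$, invoking the self-adjointness $(\pi^k_A)^\dagger = \pi^k_A$ of orthogonal projectors; then cyclicity of the trace moves $\pi^{k'}_A$ to the end, giving $\Tr(\rho^\dagger\circ\pi^k_A\circ\sigma\circ\pi^{k'}_A) = \langle\rho,\tilde{\pi}^{k,k'}(\sigma)\rangle$. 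Hence $\tilde{\pi}^{k,k'}$ equals its own Hilbert--Schmidt adjoint, and together with idempotence this confirms it is an orthogonal projector, completing (i).

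I do not expect a genuine obstacle here: the content is essentially that the maps $\rho\mapsto\pi^k_A\circ\rho\circ\pi^{k'}_A$ cut $\cl(\ch_A)$ into its block components between $\pi^k_A\ch_A$ and $\pi^{k'}_A\ch_A$. The only point needing a little care is the index bookkeeping in the self-adjointness step, where one must track which projector sits on the left and which on the right under the dagger and the trace's cyclicity, so that the pairing comes out symmetric in $(k,k')$ rather than transposed; all else is a routine double application of the two partition identities for $(\pi^k_A)$.
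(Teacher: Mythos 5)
Your proposal is correct and is exactly the direct verification the paper intends (the paper simply states ``The proof is direct'' and omits the details). All three checks --- the pairwise orthogonality $\tilde{\pi}^{k,k'}\circ\tilde{\pi}^{l,l'}=\delta^{kl}\delta^{k'l'}\tilde{\pi}^{k,k'}$, the completeness $\sum_{k,k'}\tilde{\pi}^{k,k'}=\tr{id}$, and the Hilbert--Schmidt self-adjointness via $(\pi^k_A)^\dagger=\pi^k_A$ and cyclicity of the trace --- are carried out correctly.
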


The proof is direct.

We can then characterise the dagger SMC formed by routed CPMs.

\begin{theorem} \label{th:RoutedCPMs=dagSMC}
There exists a dagger SMC $\cat{RoutedCPMFHilb}$, in which:

\begin{itemize} 
\item objects are partitioned Hilbert spaces $A^{kk'} := (\cl(\ch_A), \cz_A \times \cz_A, (\tilde{\pi}^{k,k'})_{(k,k') \in \cz_A \times \cz_A})$, whose underlying space is a space of linear operators on a given Hilbert space, and whose orthogonal partition is one obtained from an orthogonal partition of this Hilbert space through the procedure of Theorem \ref{th:OrthPartCPM};
\item morphisms $A^{kk'} \to B^{ll'}$ are routed maps $(\Lambda, \cc)$, with $\Lambda$ a completely positive relation and $\cc$ a completely positive linear map;
\item composition is given by pairwise composition;
\item parallel composition is given on objects by $A^{kk'} \otimes B^{ll'} := (\cl(\ch_A \otimes \ch_B), \cz_A \times \cz_B \times \cz_A \times \cz_B, (\tilde{\pi}^{k,k'}_A \otimes \tilde{\mu}^{l,l'}_B)_{k,l,k',l'})$, and on morphisms by the cartesian product on the routes together with the tensor product on the linear maps;
\item the trivial space is the trivial partition of $\cl(\mathbb{C})$: $1_\cat{RoutedCPMFHilb} := (\cl(\mathbb{C}), \{ * \}, (1))$;
\item the adjoint of $(\Lambda, \cc)$ is $(\Lambda, \cc)^\dagger := (\Lambda^\top, \cc^\dagger)$.
\end{itemize}
\end{theorem}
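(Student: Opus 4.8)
The plan is to deduce this theorem from its pure-case analogue, Theorem \ref{th:RoutedMaps=dagSMC}, by exhibiting $\cat{RoutedCPMFHilb}$ as a structure-preserving restriction of $\cat{RoutedFHilb}$. The crucial observation is supplied by Theorem \ref{th:OrthPartCPM}: each object $A^{kk'}$ is a bona fide partitioned Hilbert space, namely the space $\cl(\ch_A)$ equipped with the Hilbert--Schmidt inner product and the orthogonal partition $(\tilde{\pi}^{k,k'}_A)_{(k,k')}$. With this identification in hand, I would first check that the defining condition \eqref{eq:FollowingCPM} is nothing but the routed-map condition \eqref{eq:RoutedCondition} for the linear map $\cc$ on the partitioned space $\cl(\ch_A)$ and the relation $\Lambda$ on the index set $\cz_A \times \cz_A$. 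Indeed, since
\be \tilde{\mu}^{l,l'}_B \circ \cc \circ \tilde{\pi}^{k,k'}_A : \rho \mapsto \mu^l_B \circ \cc\left( \pi^k_A \circ \rho \circ \pi^{k'}_A \right) \circ \mu^{l'}_B \, , \ee
the two conditions coincide term by term. Hence every routed CPM is, after forgetting positivity, a routed map, and this assignment is a faithful embedding into $\cat{RoutedFHilb}$.

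Granting this, all of the equational content of the theorem comes for free. The associativity of sequential and parallel composition, the interchange law (parallel distributing over sequential), the unit laws, and the various coherence conditions all hold for routed CPMs simply because they hold in $\cat{RoutedFHilb}$ by Theorem \ref{th:RoutedMaps=dagSMC}; one only needs to record that the monoidal data match up, i.e.\ that under the canonical isomorphism $\cl(\ch_A) \otimes \cl(\ch_B) \cong \cl(\ch_A \otimes \ch_B)$ the partition $(\tilde{\pi}^{k,k'}_A \otimes \tilde{\mu}^{l,l'}_B)$ is carried to the partition induced by $(\pi^k_A \otimes \mu^l_B)$, and that the unit $\cl(\mathbb{C})$ and the adjoint agree. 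The only genuinely new work, then, is to verify \emph{closure}: that the class of routed CPMs, carved out of routed maps by the requirement that both $\Lambda$ and $\cc$ be completely positive, is stable under sequential composition, parallel composition, and the dagger.

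Closure splits into a claim about the linear-map part and a claim about the route part, handled in tandem. For the maps $\cc$, complete positivity is preserved under composition, tensor product, and adjoint --- this is exactly the content of the dagger symmetric monoidal CPM construction of Selinger, so the $\cc$-component never leaves $\mathrm{CPM}[\FHilb]$. For the routes $\Lambda$, I would invoke that completely positive relations are closed under sequential and parallel composition (already recorded in the main text, as a consequence of the universal CPM construction on relations), which handles the two compositions; for the dagger I must check separately that $\Lambda^\top$ is completely positive whenever $\Lambda$ is. Combining these with the fact (Theorem \ref{th:RoutedMaps=dagSMC}) that the corresponding routed map is automatically well-defined then yields that $(\Sigma, \ce) \circ (\Lambda, \cc)$, $(\Lambda, \cc) \otimes (\Sigma, \ce)$, and $(\Lambda, \cc)^\dagger = (\Lambda^\top, \cc^\dagger)$ are again routed CPMs.

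I expect the one step requiring care --- the only place this proof does more than cite the pure case --- to be the stability of completely positive relations under transposition. The clean way is to use the characterisation that a relation is completely positive iff it is symmetric and diagonally dominant, and to observe that transposition preserves both properties: symmetry is manifestly transpose-invariant, and since taking the diagonal commutes with transposition (the diagonal of $\Lambda^{\top}$ equals $\dot{\Lambda}^{\top}$), diagonal dominance transfers directly. Alternatively, from a witness $\Lambda^{ll'}_{kk'} = \sum_m \lambda^{lm}_k \lambda^{l'm}_{k'}$ one reads off a witness of the same shape for $\Lambda^\top$. Everything else --- the tensor-product identification and the coherence conditions --- is routine once the embedding into $\cat{RoutedFHilb}$ is set up.
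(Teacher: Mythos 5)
Your proposal is correct and follows essentially the same route as the paper: reduce everything to Theorem \ref{th:RoutedMaps=dagSMC} via the identification of $A^{kk'}$ as a partitioned Hilbert space (Theorem \ref{th:OrthPartCPM}), and then verify closure of the CP conditions under composition, tensor, and dagger, citing Selinger's CPM construction for the linear-map part and the closure of completely positive relations for the route part. The one place you go beyond the paper is the explicit check that $\Lambda^\top$ is completely positive (via preservation of symmetry and diagonal dominance, or by transposing the witness), which the paper subsumes under the dagger-compactness of the universal CPM construction; your argument is a valid and slightly more self-contained filling-in of that detail.
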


The proof is very similar to \correction{the proof of Theorem} \ref{th:RoutedMaps=dagSMC}. Closure under sequential and parallel composition come from the combination of two facts: that sequential and parallel compositions of routed maps are themselves routed maps, and that sequential and parallel compositions of completely positive linear maps and relations are themselves completely positive. The first fact was proven in Appendix \ref{app:proofRoutedMapsSMC}; the second comes from the fact that completely positive morphisms can be obtained from the universal construction of Selinger \cite{Selinger2007} for any $\dagger$-compact category, and therefore form a $\dagger$-compact category themselves. The same facts entail that routed completely positive maps are closed under taking adjoints. Finally, that parallel and sequential compositions are associative (both on objects and on morphisms) and that the former distributes over the latter is direct as these were defined pairwise from sequential and parallel compositions which possess all these properties.

\section{Compositions of practical isometries} \label{app:practicalIsos}

We first prove Theorem \ref{th:IsoCompos}. Let $(\lambda, U): A^k \to B^l$ and $(\sigma, V) : B^l \to C^m$ be practical isometries, such that $\lambda$ and $\sigma$ satisfy (\ref{eq:IsoComposition}). Then the practical input set of $\sigma \circ \lambda$ is $\cs_{\sigma \circ \lambda} = \lambda^\top [\cs_\sigma] \subseteq \cs_\lambda$. The practical input space of $(\sigma, V) \circ (\lambda, U)$ is therefore $\ch_A^{\cs_{\sigma \circ \lambda}} \subseteq \ch_A^{\cs_\lambda}$. As $U$ is a partial isometry with initial subspace $\ch_A^{\cs_\lambda}$, it is in particular an isometry when restricted to $\ch_A^{\cs_{\sigma \circ \lambda}}$. Moreover, condition (\ref{eq:IsoComposition}) and the fact that $U$ follows $\lambda$ imply that $U(\ch_A^{\cs_{\sigma \circ \lambda}}) = U(\ch_A^{\lambda^\top [\cs_\sigma]}) \subseteq \ch_B^{(\lambda \circ \lambda^\top) [\cs_\sigma]} \subseteq \ch_B^{\cs_\sigma}$. Therefore, as $V$ is a partial isometry with initial subspace $\ch_B^{\cs_\sigma}$, it is in particular an isometry when restricted to $U(\ch_A^{\cs_{\sigma \circ \lambda}})$. It follows that $V \circ U$ is an isometry when restricted to $\ch_A^{\cs_{\sigma \circ \lambda}}$; so $(\sigma, V) \circ (\lambda, U)$ is a practical isometry.

We now prove that parallel composition of practical isometries always yields practical isometries. Let $(\lambda, U) : A^k \to B^l$ and $(\sigma, V) : C^m \to D^n$ be two practical isometries. Then $\cs_{\lambda \times \sigma} = \cs_\lambda \times \cs_\sigma$, so $(\ch_A \otimes \ch_B)^{\cs_{\lambda \times \sigma}} = \ch_A^{\cs_\lambda} \otimes \ch_B^{\cs_\sigma}$. The restriction  of $U \otimes V$ to $(\ch_A \otimes \ch_B)^{\cs_{\lambda \times \sigma}}$ is thus the tensor product of the restriction of $U$ to $\ch_A^{\cs_\lambda}$ with the restriction of $V$ to $\ch_B^{\cs_\sigma}$. As both of these are isometries, $(\lambda, U) \otimes (\sigma, V)$ is a practical isometry. The rest of the proof for the case of practical unitaries is similar.

\section{Computing the accessible space}\label{app:accessibleSpace}
\correction{

In this Appendix, we provide a more formal justification of the method presented in Section \ref{sec:accessiblespace} to compute the accessible space corresponding to a slice in a given routed diagram.

Let us start with a formal definition of the accessible space. We take a routed circuit and pick a slice in it. We restrict ourselves to considering only the interpretation of this diagram in the theory of relations; considering the `linear map' part of the interpretations is not important here, as we will only use the route information. We call $\cz$ the set of possible values of the indices in our slice, $\calp$ the set of possible values of the indices of the open wires at the bottom of the diagram and $\cf$ that of the open wires at the top of the diagram. We will consider what our diagram yields if we insert a given relation at this slice; given a relation $\lambda : \cz \to \cz$, we call $\ce(\lambda) : \calp \to \cf$ the interpretation of the diagram when $\lambda$ is inserted at the slice.

For any $k \in \cz$, let us define the relation $\varpi[k] : \cz \to \cz$ by  $\varpi[k]_{k'}^{k''} = \delta_{kk'}^{kk''}$. We say that $k \in \cz^\tr{non-acc}$ if $\ce(\varpi[k]) = 0$, and define $\cz^\tr{acc}$ as the complement of $\cz^\tr{non-acc}$ in $\cz$. $\cz^\tr{acc}$ corresponds to the index values that will form the accessible Hilbert space, i.e.\ we can define $\ch^\tr{acc} := \oplus_{k \in \cz^\tr{acc}} \ch^k$; indeed, the complementary set $\cz^\tr{non-acc}$ is defined as containing those values of $k$ which we know will be `killed' by the routes.

Let us now introduce a useful lemma: if, for a finite set $\cz$, we note as $\varsigma: \{*\} \to \cz$ the `full' relation defined by $\forall k \in \cz, \varsigma^k = 1$, then one has: $\forall \tau: \calp \to \cf, \tau = 0 \iff \varsigma_\cf^\top \circ \tau \circ \varsigma_\calp = 0$. The non trivial part here is the reverse implication; we can prove it by noting that the unique component of $\varsigma_\cf^\top \circ \tau \circ \varsigma_\calp$ is $\sum_{kk'} \tau^{k'}_k$; the rules of boolean calculus therefore yield: $\varsigma_\cf^\top \circ \tau \circ \varsigma_\calp = 0 \implies \forall k,k', \tau^{k'}_k =0 \implies \tau = 0$.

We thus have: $k \in \cz^\tr{acc} \iff \varsigma_\cf^\top \circ \ce(\varpi[k]) \circ \varsigma_\calp = 1 $. One can see that the steps 1, 2, 3 and 4 in the procedure of Section \ref{sec:accessiblespace} correspond to the computation of $\varsigma_\cf^\top \circ \ce(\varpi[k]) \circ \varsigma_\calp$: in particular, the absence of summation on the indices of the slice corresponds to the insertion of $\varpi[k]$, and the summation over the indices of the input and output open wires corresponds to the composition with the $\varsigma$'s. Step 5 thus recovers our formal definition of the accessible space.

}

\section{Sectorial constraints and Kraus representations} \label{app:Kraus}

We first prove Theorem \ref{th:KrausFullCoherence}. We take a routed CPM $(\Lambda,\cc): A^{kk'} \to B^{ll'}$. One can prove, in a similar way to how it was done for linear maps in Appendix \ref{app:proofRoutedMapsSMC}, that condition (\ref{eq:FollowingCPM}) is equivalent to the fact that for all $k,k',l,l'$, $\Lambda_{kk'}^{ll'} = 0 \implies \forall \rho, \mu_B^l \circ \cc ( \pi_A^k \circ \rho \circ \pi_A^{k'} ) \circ \mu_B^{l'} = 0$. Let us take $k,k',l,l'$ such that $\dot{\Lambda}_{k}^{l} = 0$, and a Kraus representation $\{K_i\}_i$ of $\cc$. If we take any states $\ket{\psi} \in \ch_A^k, \ket{\phi} \in \ch_B^l$, we have $\forall i, \bra{\phi} K_i \ket{\psi} \bra{\psi} K_i^\dagger \ket{\phi} \in \mathbb{R}^+$ and $\sum_i \bra{\phi} K_i \ket{\psi} \bra{\psi} K_i^\dagger \ket{\phi} = \bra{\phi} \cc( \ket{\psi} \bra{\psi}) \ket{\phi} = 0$, which implies $\forall i, \bra{\phi} K_i \ket{\psi} \bra{\psi} K_i^\dagger \ket{\phi} = 0$. As this is true for any $\ket{\psi} \in \ch_A^k, \ket{\phi} \in \ch_B^l$, one has $\forall i, \mu_B^l K_i \pi_A^k = 0$. Thus each of the $K_i$'s follow $\dot{\Lambda}$.

Let us prove the reverse implication for a $\Lambda$ with full coherence. If we take $\cc$ with Kraus representation $\{K_i\}_i$ such that each of the $K_i$'s follow $\dot{\Lambda}$, then for all $k,k',l,l'$, $\Lambda_{kk'}^{ll'} = \dot{\Lambda}_k^l \dot{\Lambda}_{k'}^{l'} = 0$ implies that at least one of $\dot{\Lambda}_k^l$ and $\dot{\Lambda}_{k'}^{l'}$ is null, and thus that for any given $i$, at least one of $\mu_B^l K_i \pi_A^k$ and $\mu_B^{l'} K_i \pi_A^{k'}$ is null. Therefore $\forall \rho, \mu_B^l \circ \cc ( \pi_A^k \circ \rho \circ \pi_A^{k'} ) \circ \mu_B^{l'} = \sum_i \mu_B^l \, K_i \, \pi_A^k \,  \rho \,  \pi_A^{k'} \, K_i^\dagger \, \mu_B^{l'} = 0$, so $\cc$ follows $\Lambda$.

We turn to the proof of Theorem \ref{th:KrausFullDecoherence}. Let us take a completely positive route $\Lambda$ with full decoherence, i.e.\ $\Lambda_{kk'}^{ll'} = \dot{\Lambda}_k^l \delta_{kk'} \delta^{ll'}$, and a completely positive map $\cc$ following $\Lambda$. Then $\forall \rho, \cc(\rho) = \sum_{k,l} \dot{\Lambda}_k^l \mu_B^l \circ \cc(\pi_A^k \circ \rho \circ \pi_A^k) \circ \mu_B^l$; thus if we define, for any $k,l$ such that $\dot{\Lambda}_k^l = 1$, $\cc_k^l : \rho \mapsto \mu_B^l \circ \cc(\pi_A^k \circ \rho \circ \pi_A^k) \circ \mu_B^l$, one has $\cc = \sum_{k,l| \dot{\Lambda}_k^l = 1} \cc_k^l$, and each of the $\cc_k^l$'s is a completely postive map from $\cl(\ch_A^k)$ to $\cl(\ch_B^l)$. Taking a Kraus representation for each of the $\cc_k^l$'s yields a Kraus representation of $\cc$ of the form given by Theorem \ref{th:KrausFullDecoherence}. The reverse implication is direct.

\section{A formal construction of index-matching routed maps} \label{app:IndexMatching}

\subsection{Index-matching routed maps as a category}

In this appendix, we present a formal construction of the framework of index-matching routed maps, which was introduced in a more intuitive way in Section \ref{sec:IndexMatching}.

We first need to formally define \textit{multiple indexings}: families of indices, each with its length, i.e.\ the number of different values it can take.

\begin{definition}
A (finite) \textbf{index family} is a finite set $\cx$ equipped with a `length' function $l: \cx \to \mathbb{N}$.
Given such an index family, the corresponding (finite) \textbf{multiple indexing} is the set $\bar{\cx} := \bigtimes_{x \in \cx} \llbracket 1, l(x) \rrbracket$, where $\forall n, \llbracket 1, n \rrbracket := \{ m \in \mathbb{N} | 1 \leq m \leq n \}$ and $\bigtimes$ denotes the cartesian product.
\end{definition}

$\cx$ serves as an `indexing of indices': it gives names to the different possible indices.

Routes in index-matching routed maps have to be \textit{corelations} \cite{grandis2018} (this is written with a single r). Corelations will be used to define, among the union of their input and output indices, clusters of indices which will be matched (i.e.\ will `be the same index').

\begin{definition}
Let $\cx_A$ and $\cx_B$ be two finite sets. A (finite) \textit{corelation} $\kappa : \cx_A \to  \cx_B$ is an equivalence relation on the disjoint union $ \cx_A \sqcup  \cx_B$.
\end{definition}

Finite corelations can be composed sequentially and in parallel, and form a dagger SMC \cat{FCoRel} \cite{coya2018}.

\begin{definition}
Let $\cx_A$ and $\cx_B$ be finite index families. An \textit{index-matching} from $\cx_A$ to $\cx_B$ is a corelation $\kappa: \cx_A \to \cx_B$ such that: $\forall x, x' \in \cx_A \sqcup \cx_B, x \stackrel{\kappa}{\sim} x' \implies l(x) = l(x')$.
\end{definition}

It is easy to see that the theory in which objects are index families and morphisms are index-matchings itself forms a dagger SMC \cat{FMatch}. The following ensures that index-matchings are just special cases of relations: for each index-matching between index families, there is a corresponding relation between the corresponding multiple indexings, in a consistent way. This can be considered as a formal way of defining the relation corresponding to an index-matching as made of Kronecker deltas determined by this index-matching.

\begin{definition}
For any index-matching $\kappa : \cx_A \to \cx_B$, the \textit{relation associated to} $\kappa$ is $\bar{\kappa}: \bar{\cx}_A \to \bar{\cx}_B$, defined by the following condition: an element $ \vec{k} = (k_x)_{x \in \cx_A}$ of $\bar{\cx}_A$ is \emph{\underline{not}} related by $\bar{\kappa}$ to an element $(k_x)_{x \in \cx_B}$ of $\bar{\cx}_B$ if and only if there exist $x, x' \in \cx_A \sqcup \cx_B$ such that $x \stackrel{\kappa}{\sim} x'$ and $k_x \neq k_{x'}$.
\end{definition}

Going from index families to multiple indexings, and from index-matchings to relations, is an operation which preserves the dagger SMC structure of \cat{FMatch} into that of \cat{FRel}.

\begin{theorem} \label{th:FunctorIndexMatchingToRel}
The `bar' operation, which associates to an index family its corresponding multiple indexing, and to an index-matching its associated relation, is a functor of dagger SMCs.
\end{theorem}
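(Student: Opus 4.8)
The plan is to verify, one at a time, each piece of structure that a dagger symmetric monoidal functor must preserve: identities and sequential composition (functoriality), the tensor product on objects and morphisms together with the unit and the swaps (monoidal structure), and the dagger. Since $\bar\cdot$ acts on objects by the cartesian product $\bar\cx = \bigtimes_{x \in \cx} \llbracket 1, l(x) \rrbracket$, most of these checks reduce to unwinding the definition of the associated relation $\bar\kappa$, which relates $\vec k$ to $\vec k'$ exactly when every pair of $\kappa$-equivalent indices receives equal values. I expect the only genuinely nontrivial step to be preservation of sequential composition; the remaining verifications are routine bookkeeping that I would dispatch quickly.

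For the easy parts: the identity index-matching on $\cx$ is the corelation pairing each index with its own copy across $\cx_A \sqcup \cx_B$, so $\bar\kappa$ relates $\vec k$ to $\vec k'$ iff $k_x = k'_x$ for all $x$, i.e.\ it is the identity relation on $\bar\cx$. The monoidal unit of \cat{FMatch} is the empty index family, whose multiple indexing is the empty product $\{*\}$, the unit of \cat{FRel}. On objects the tensor is disjoint union of index families, and $\bigtimes_{x \in \cx_A \sqcup \cx_B} \llbracket 1, l(x) \rrbracket = \bar\cx_A \times \bar\cx_B$; on morphisms, parallel composition of corelations acts independently on the two blocks, so $\overline{\kappa_1 \sqcup \kappa_2} = \bar\kappa_1 \times \bar\kappa_2$, matching the cartesian product of relations, and the swap corelation visibly maps to the swap relation. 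Finally, the dagger in \cat{FMatch} is the opposite corelation, obtained by exchanging the roles of input and output index families; since the defining condition of $\bar\kappa$ is symmetric in those roles, $\overline{\kappa^\dagger}$ is precisely the transpose $\bar\kappa^\top$, so $\bar\cdot$ is a dagger functor.

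The main obstacle is to show $\overline{\kappa' \circ \kappa} = \bar\kappa' \circ \bar\kappa$ for $\kappa : \cx_A \to \cx_B$ and $\kappa' : \cx_B \to \cx_C$. Recall that composition in \cat{FCoRel} is computed by forming the smallest equivalence relation on $\cx_A \sqcup \cx_B \sqcup \cx_C$ containing both $\kappa$ and $\kappa'$, and then restricting to $\cx_A \sqcup \cx_C$; thus $x \in \cx_A$ and $z \in \cx_C$ are $(\kappa' \circ \kappa)$-equivalent iff they are joined by a zig-zag path alternating through the middle block. I would prove the two inclusions of relations separately. For $\bar\kappa' \circ \bar\kappa \subseteq \overline{\kappa' \circ \kappa}$, a witnessing middle assignment $\vec l$ with $\vec k \mathrel{\bar\kappa} \vec l$ and $\vec l \mathrel{\bar\kappa'} \vec m$ transports any zig-zag equality $k_x = \cdots = m_z$ step by step along the path, so $\vec k$ and $\vec m$ agree on every $(\kappa' \circ \kappa)$-forced pair. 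For the reverse inclusion, assuming $\vec k \mathrel{\overline{\kappa' \circ \kappa}} \vec m$, I would construct a compatible $\vec l \in \bar\cx_B$ by assigning to each $y \in \cx_B$ the common value dictated on its connected component in the generated equivalence relation, choosing any admissible value on components meeting neither $\vec k$ nor $\vec m$; this $\vec l$ then witnesses $\vec k \mathrel{\bar\kappa' \circ \bar\kappa} \vec m$. The crux is exactly the consistency of this connected-component assignment: the length condition built into index-matchings ensures all indices on a component share the same length, so the chosen value lands in the correct factor $\llbracket 1, l(y) \rrbracket$, and the hypothesis guarantees that the constraints inherited from the $\cx_A$-side and the $\cx_C$-side never clash. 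This is precisely where index-matching length-preservation and the well-definedness of corelation composition are used.
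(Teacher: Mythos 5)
Your proposal is correct and follows essentially the same route as the paper's proof: the nontrivial step is preservation of sequential composition, handled by transporting equalities along zig-zag paths for one inclusion and, for the other, completing a middle assignment $\vec l \in \bar{\cx}_B$ that is constant on the components of the equivalence relation generated on $\cx_A \sqcup \cx_B \sqcup \cx_C$, with the hypothesis guaranteeing no clash between the $\cx_A$- and $\cx_C$-side constraints. The paper dismisses the remaining checks (identities, unit, tensor, swaps, dagger) as routine, exactly as you dispatch them.
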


\begin{proof}
Let us first prove that it preserves sequential composition, i.e., $\overline{\kappa' \circ \kappa} = \bar{\kappa}' \circ \bar{\kappa}$. From $\overset{\kappa}{\sim}$ and $\overset{\kappa'}{\sim}$, one can form an equivalence relation $\sim$ on $\cx_A \sqcup \cx_B \sqcup \cx_C$, of which $\overset{\kappa' \circ \kappa}{\sim}$ is the restriction to $\cx_A \sqcup \cx_C$. Suppose $(k_x)_{x \in \cx_A} \overset{\overline{\kappa' \circ \kappa}}{\sim} (k_x)_{x \in \cx_C}$; then $\forall x, x' \in \cx_A \sqcup \cx_C, x \overset{\kappa' \circ \kappa}{\sim} x' \implies k_x = k_{x'}$. One can thus complete this by finding a family $(k_x)_{x \in \cx_B}$ such that: $\forall x, x' \in \cx_A \sqcup \cx_B \sqcup \cx_C, x \sim x' \implies k_x = k_{x'}$. Then in particular $(k_x)_{x \in \cx_A} \overset{ \bar{\kappa}}{\sim} (k_x)_{x \in \cx_B} \overset{ \bar{\kappa}'}{\sim} (k_x)_{x \in \cx_C} $, so $(k_x)_{x \in \cx_A} \overset{ \bar{\kappa}' \circ \bar{\kappa}}{\sim} (k_x)_{x \in \cx_C} $.

Reciprocally, if $(k_x)_{x \in \cx_A} \overset{ \bar{\kappa}' \circ \bar{\kappa}}{\sim} (k_x)_{x \in \cx_C}$; then there exists $(k_x)_{x \in \cx_B}$ such that $(k_x)_{x \in \cx_A} \overset{ \bar{\kappa}}{\sim} (k_x)_{x \in \cx_B} \overset{ \bar{\kappa}'}{\sim} (k_x)_{x \in \cx_C} $. If we take $x \in \cx_A, x' \in \cx_C$ such that $k_x \neq k_{x'}$, then for any $x'' \in \cx_B$, at least one of the propositions $k_x = k_{x''}$, $k_{x''} = k_{x'}$ is false, so it is not possible that $x \overset{\kappa}{\sim} x'' \overset{\kappa'}{\sim} x'$, so $x \overset{\kappa' \circ \kappa}{\not\sim} x'$. With the same reasoning, one can prove the same thing if $x$ and $x'$ are both either in $\cx_A$ or in $\cx_C$ and $k_x \neq k_{x'}$. Thus  $(k_x)_{x \in \cx_A} \overset{\overline{\kappa' \circ \kappa}}{\sim} (k_x)_{x \in \cx_C}$. From this implication and the previous one, it follows that $\overline{\kappa' \circ \kappa} = \bar{\kappa}' \circ \bar{\kappa}$.

It is then a routine check to prove that $\overline{\cx_A \times \cx_B} = \bar{\cx}_A \times \bar{\cx}_B$, $\overline{\kappa' \times \kappa} = \bar{\kappa}' \times \bar{\kappa}$, $\overline{\kappa^\top} = \bar{\kappa}^\top$, etc. \end{proof}

Thus, index-matchings can be seen as forming a subtheory of relations. This allows us to define notions for index-matchings from the notions for relations.

\begin{definition}
Let $(\ch_A, \bar{\cx}_A, (\pi^{\vec{k}})_{\vec{k} \in \bar{\cx}_A})$ and $(\ch_B, \bar{\cx}_B, (\mu^{\vec{l}})_{\vec{l} \in \bar{\cx}_B})$ be two partitioned spaces, where $\bar{\cx}_A$ and $\bar{\cx}_B$ are multiple indexings for index families $\cx_A$ and $\cx_B$, and let $\kappa : \cx_A \to \cx_B$ be an index-matching. A linear map $f : \ch_A \to \ch_B$ \textbf{follows the index-matching route} $\kappa$ if it follows its associated relation $\bar{\kappa}$. The pair $(\kappa, f)$ is then an \textbf{index-matching routed map}.
\end{definition}

The following is then direct.

\begin{theorem}
Index-matching routed maps form a dagger SMC \cat{MatchedFHilb}, which is embedded into \cat{RoutedFHilb}.
\end{theorem}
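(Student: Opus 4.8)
The plan is to realise \cat{MatchedFHilb} as a sub-dagger-SMC of \cat{RoutedFHilb}, with the embedding supplied by the bar operation of Theorem \ref{th:FunctorIndexMatchingToRel}. Concretely, I would take the objects of \cat{MatchedFHilb} to be exactly those partitioned Hilbert spaces whose index set is a multiple indexing $\bar{\cx}$ of some index family $\cx$, and its morphisms $A \to B$ to be the index-matching routed maps $(\kappa, f)$, which by the preceding definition are precisely the routed maps $(\bar{\kappa}, f)$ of \cat{RoutedFHilb} whose route is the relation $\bar{\kappa}$ associated to an index-matching $\kappa$. The inclusion sending $(\kappa, f)$ to $(\bar{\kappa}, f)$ is then the candidate embedding, and the whole statement reduces to checking that this class of objects and morphisms is closed under the dagger symmetric monoidal operations that \cat{RoutedFHilb} already carries by Theorem \ref{th:RoutedMaps=dagSMC}.

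First I would verify closure of the morphisms, which is where all the content sits. Given $(\kappa, f): A \to B$ and $(\kappa', g): B \to C$, their composite in \cat{RoutedFHilb} is $(\bar{\kappa}' \circ \bar{\kappa}, g \circ f)$; by the functoriality established in Theorem \ref{th:FunctorIndexMatchingToRel} one has $\bar{\kappa}' \circ \bar{\kappa} = \overline{\kappa' \circ \kappa}$, and since $\kappa' \circ \kappa$ is again an index-matching (the composite in \cat{FMatch}), this is an index-matching routed map; well-definedness, i.e.\ that $g \circ f$ follows $\overline{\kappa' \circ \kappa}$, is inherited directly from \cat{RoutedFHilb}. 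The companion identities $\overline{\kappa' \times \kappa} = \bar{\kappa}' \times \bar{\kappa}$ and $\overline{\kappa^\top} = \bar{\kappa}^\top$ give closure under parallel composition and under adjoints, while $\overline{\mathrm{id}} = \mathrm{id}$ supplies the identities and the swaps, associators and unitors of \cat{FMatch} map under bar onto the corresponding structural routes. For the objects, closure under the tensor product follows from $\overline{\cx_A \times \cx_B} = \bar{\cx}_A \times \bar{\cx}_B$, and the unit object is the trivial partition of $\mathbb{C}$ indexed by the empty index family. Because these operations are literally those of \cat{RoutedFHilb} restricted to a subclass shown to be closed, every coherence axiom holds automatically; hence \cat{MatchedFHilb} is a dagger SMC and the inclusion is a dagger SMC functor.

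The one point deserving care, and the only place where the claim is more than bookkeeping, is the precise sense of \emph{embedding}. The functor is plainly not full, since a generic route between multiple-indexing partitioned spaces need not be of the form $\bar{\kappa}$. Faithfulness amounts to the injectivity of $\kappa \mapsto \bar{\kappa}$, which can fail only in the degenerate case where some index has length $1$ (then two distinct index-matchings, differing only on length-$1$ indices, induce the same associated relation). I would resolve this by defining a morphism of \cat{MatchedFHilb} to carry its route as the relation $\bar{\kappa}$ rather than the raw corelation $\kappa$ — making \cat{MatchedFHilb} a genuine subcategory of \cat{RoutedFHilb} and the inclusion faithful by construction — or, if one prefers to retain $\kappa$ itself as data, by restricting to index families all of whose lengths are at least $2$, which costs nothing in practice. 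With either convention the inclusion is a faithful dagger SMC functor, as asserted. Given Theorem \ref{th:FunctorIndexMatchingToRel}, I expect no real obstacle: once the bar operation is known to be a dagger-SMC functor, the index-matching routes are automatically a sub-collection of routes closed under all the relevant compositions, and the theorem is indeed direct.
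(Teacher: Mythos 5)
Your proof is correct and follows essentially the same route as the paper, which simply declares the theorem ``direct'' from Theorem~\ref{th:FunctorIndexMatchingToRel}: since the bar operation is a dagger-SMC functor, the index-matching routed maps form a subclass of routed maps closed under composition, tensor, and adjoints, and all coherence conditions are inherited from $\cat{RoutedFHilb}$. Your additional observation about faithfulness failing on length-$1$ indices, and the fix of carrying $\bar{\kappa}$ rather than $\kappa$ as the route datum, is a sensible clarification of what ``embedded'' means but does not change the argument.
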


\subsection{Practical isometries and their composition}

The definitions of practical isometries and practical unitaries in \cat{RoutedFHilb} can be used in \cat{MatchedFHilb} as well. In this context, Theorem \ref{th:IsoCompos} becomes more intuitive. First, we will define formally what it means to create and delete an index.

\begin{definition}
An index \textbf{created} (resp.\ \textbf{deleted)} by an index-matching $\kappa$ is an equivalence class under $\kappa$ which only contains output (resp.\ input) elements. Each of these elements is a \textbf{representative} of the created (resp.\ deleted) index.
\end{definition}

This leads to a characterisation of those compositions which do \emph{not} satisfy Theorem \ref{th:IsoCompos}.

\begin{theorem} \label{SuitableIsoMatch}
Let $\kappa: \cx_A \to \cx_B$ and $\iota: \cx_B \to \cx_C$ be index-matchings. The composition of $\kappa$ and $\iota$ is improper for isometries if and only if there exists an index of length greater than or equal to $2$ created by $\kappa$, such that, noting $\cw \subseteq \cx_B$ as the set of representatives of this index, $\iota$ matches at least one index in $\cw$ with an index in $\cx_B  \setminus \cw$.
\end{theorem}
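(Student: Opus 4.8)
The plan is to translate the statement through Theorem \ref{th:IsoCompos}: by the definition of \cat{MatchedFHilb} and Theorem \ref{th:FunctorIndexMatchingToRel}, the composition of $\kappa$ and $\iota$ is improper for isometries precisely when the associated relations $\bar\kappa,\bar\iota$ fail condition \eqref{eq:IsoComposition}, i.e.\ when $(\bar\kappa\circ\bar\kappa^\top)[\cs_{\bar\iota}]\not\subseteq\cs_{\bar\iota}$. So the entire proof reduces to unwinding the three combinatorial objects $\bar\kappa$, $\cs_{\bar\iota}$ and $\bar\kappa\circ\bar\kappa^\top$ in terms of the equivalence classes of $\kappa$ and $\iota$, and then checking the inclusion by hand.

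First I would set up a dictionary. Call a tuple $\vec{l}\in\bar{\cx}_B$ \emph{$\kappa$-compatible} (resp.\ \emph{$\iota$-compatible}) if it is constant on the $\cx_B$-part of every equivalence class of $\kappa$ (resp.\ $\iota$). Straight from the definition of the associated relation, $\vec{k}^A\overset{\bar\kappa}{\sim}\vec{l}$ holds iff the combined tuple $(\vec{k}^A,\vec{l})$ is constant on each $\kappa$-class; and $\vec{l}\in\cs_{\bar\iota}$ holds iff there is some $\vec{m}$ making $(\vec{l},\vec{m})$ constant on each $\iota$-class, which happens exactly when $\vec{l}$ is $\iota$-compatible.

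The main, and only delicate, step is the description of $\bar\kappa\circ\bar\kappa^\top$. I would argue that $\vec{l}\overset{\bar\kappa\circ\bar\kappa^\top}{\sim}\vec{l}'$ holds iff $\vec{l},\vec{l}'$ are both $\kappa$-compatible and agree on every $x\in\cx_B$ that does \emph{not} lie in an index created by $\kappa$. The point is that a common witness $\vec{k}^A$ related by $\bar\kappa$ to both $\vec{l}$ and $\vec{l}'$ exists iff, on each \emph{mixed} class (a $\kappa$-class meeting both $\cx_A$ and $\cx_B$), the forced common value pins $\vec{l}$ and $\vec{l}'$ to the same number, whereas on a created index (a class contained in $\cx_B$) there is no $\cx_A$-element to constrain them, so they may differ freely there. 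This is the crux, since it is precisely the freedom on created indices that drives the theorem.

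With this dictionary both directions are short. For ($\Leftarrow$), given a created index $\cw$ of length at least $2$ and $x\in\cw$, $x'\in\cx_B\setminus\cw$ with $x\overset{\iota}{\sim}x'$, I take $\vec{l}\equiv 1$ (constant, hence $\iota$-compatible, so $\vec{l}\in\cs_{\bar\iota}$) and let $\vec{l}'$ equal $\vec{l}$ except with value $2$ on all of $\cw$; this $\vec{l}'$ is $\kappa$-compatible and agrees with $\vec{l}$ off the created index $\cw$, so $\vec{l}\overset{\bar\kappa\circ\bar\kappa^\top}{\sim}\vec{l}'$, yet $\vec{l}'_x=2\neq 1=\vec{l}'_{x'}$ on a single $\iota$-class shows $\vec{l}'\notin\cs_{\bar\iota}$, violating the inclusion. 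For ($\Rightarrow$), suppose the inclusion fails, with $\vec{l}\in\cs_{\bar\iota}$, $\vec{l}\overset{\bar\kappa\circ\bar\kappa^\top}{\sim}\vec{l}'$ and $\vec{l}'\notin\cs_{\bar\iota}$. Non-$\iota$-compatibility of $\vec{l}'$ yields an $\iota$-class and $x,x'$ in it with $\vec{l}'_x\neq\vec{l}'_{x'}$; since $\vec{l}$ is $\iota$-compatible it is constant there, so $\vec{l}'$ differs from $\vec{l}$ at (say) $x$, which by the description of $\bar\kappa\circ\bar\kappa^\top$ forces $x$ to lie in some created index $\cw$. As $\vec{l}'$ is constant on $\cw$ we cannot have $x'\in\cw$, so $x'\in\cx_B\setminus\cw$ with $x\overset{\iota}{\sim}x'$; and $\vec{l}'_x\neq\vec{l}_x$ forces that coordinate to take two distinct values, so $\cw$ has length at least $2$. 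This is exactly the asserted configuration, completing the equivalence.
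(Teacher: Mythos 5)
Your proof is correct and follows essentially the same route as the paper's: both arguments hinge on characterising $\bar\kappa\circ\bar\kappa^\top$ by the observation that related tuples must agree on every coordinate whose $\kappa$-class meets $\cx_A$ but are free on created indices, and both directions use the same witnesses (the all-ones tuple and its modification taking value $2$ on the created class). The only presentational difference is that you isolate this characterisation as an explicit lemma, while the paper extracts it on the fly from the corelation $\kappa\circ\kappa^\top$.
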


\begin{proof}
Let us note $\cx_B \sqcup \cx_B = \{ (i,x) | i \in \{1,2\}, x \in \cx_B \}$. It is easy to see that for $x \in \cx_B$, $(1, x) \overset{\kappa \circ \kappa^\top}{\not\sim} (2,x)$ if and only if $x$'s equivalence class under $\overset{\kappa}{\sim}$ is an index created by $\kappa$.

If the composition of $\kappa$ and $\iota$ is not proper for isometries, there exist $\vec{k} \in \cs_{\bar{\iota}}, \vec{k}' \in  \bar{\cx}_B \setminus \cs_{\bar{\iota}}$ such that $\vec{k} \overset{\bar{\kappa} \circ \bar{\kappa}^\top}{\sim} \vec{k}'$. The fact that $\vec{k}' \in  \bar{\cx}_B \setminus \cs_{\bar{\iota}}$ implies there exist $x, x' \in \cx_B$ such that $x \overset{\iota}{\sim} x'$ and ${k'}_{x} \neq {k'}_{x'}$; the fact that $\vec{k} \in \cs_{\bar{\iota}}$ implies that ${k}_{x} = {k}_{x'}$. That $(1, x) \overset{\kappa \circ \kappa^\top}{\sim} (2,x)$ and $(1, x') \overset{\kappa \circ \kappa^\top}{\sim} (2,x')$ would imply ${k'}_x = k_x = k_{x'} = {k'}_{x'}$, which would be a contradiction. Thus one of them (say, $x$) satisfies $(1, x) \overset{\kappa \circ \kappa^\top}{\not\sim} (2,x)$, so its equivalence class under $\overset{\kappa}{\sim}$ is an index created by $\kappa$. Calling this equivalence class $\cw \subseteq \cx_B$, one has $x' \in \cx_B \setminus \cw$, as $x' \in \cw$ would imply ${k'}_x = {k'}_{x'}$. Finally, as $x \overset{\iota}{\sim} x'$, $x$ and $x'$ have the same length, and ${k'}_{x} \neq {k'}_{x'}$ implies that this length is at least $2$; so the index which $\cw$ represents has length at least $2$.

Reciprocally, suppose there exists an index of length greater than or equal to $2$ created by $\kappa$, with set of representatives $\cw \subseteq \cx_B$, such that $\iota$ matches $x \in \cw$ with $x' \in \cx_B  \setminus \cw$. Then there exists a $\vec{k}' \in \bar{\cx}_B$ whose indices ${k'}_{y}$ all have value $1$, except for the $y$'s in the equivalence class of $x$ under $\overset{\kappa}{\sim}$, for which the value is ${k'}_y = 2$. We also define $\vec{k} \in \bar{\cx}_B$ whose indices all have value $1$. As the equivalence class of $x$ under $\overset{\kappa}{\sim}$ is an index created by $\kappa$, $(1, x) \overset{\kappa \circ \kappa^\top}{\not\sim} (2,x)$, so $\vec{k} \overset{\bar{\kappa} \circ \bar{\kappa}^\top}{\sim} \vec{k}'$; yet $k \in \cs_{\bar{\iota}}$ and $k' \not\in \cs_{\bar{\iota}}$, as ${k'}_{x} = 2 \neq {k'}_{x'} = 1$. Thus $\bar{\kappa} \circ \bar{\kappa}^\top (\cs_{\bar{\iota}}) \not\subseteq \cs_{\bar{\iota}}$ and the composition of $\bar{\kappa}$ and $\bar{\iota}$ is not suitable for isometries.
\end{proof}

We now just need to spell out the corresponding requirement for unitaries.

\begin{theorem}\label{SuitableUniMatch}
Let $\kappa$ and $\iota$  be matchings of indices, with $\kappa$'s codomain equal to $\iota$'s domain. The set of indices' names in the intermediary domain is noted $\cx_B$. The composition of $\kappa$ and $\iota$ is improper for unitaries if and only if at least one of the following is true:

\begin{itemize}
\item there exists an index of length greater than or equal to $2$ created by $\kappa$ such that, noting $\cw \subseteq \cx_B$ as the set of representatives of this index, $\iota$ matches at least one index in $\cw$ with an index in $\cx_B \setminus \cw$;
\item there exists an index of length greater than or equal to $2$ deleted by $\iota$ such that, noting $\cw \subseteq \cx_B$ as the set of representatives of this index, $\kappa$ matches at least one index in $\cw$ with an index in $\cx_B  \setminus \cw$.
\end{itemize}  
\end{theorem}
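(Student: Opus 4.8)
The plan is to reduce the whole statement to the isometry case, Theorem \ref{SuitableIsoMatch}, by exploiting the symmetry of the two conditions in Theorem \ref{th:UniCompos} under taking adjoints. Recall that, by Theorem \ref{th:UniCompos}, the composition of $\bar{\kappa}$ and $\bar{\iota}$ is proper for unitaries exactly when both $(\bar{\kappa} \circ \bar{\kappa}^\top)[\cs_{\bar{\iota}}] \subseteq \cs_{\bar{\iota}}$ and $(\bar{\iota}^\top \circ \bar{\iota})[\ct_{\bar{\kappa}}] \subseteq \ct_{\bar{\kappa}}$ hold. Hence the composition is \emph{improper} for unitaries precisely when at least one of these two inclusions fails, and it suffices to characterise the failure of each separately in index-matching terms; the two bullet points in the statement will correspond to these two failures. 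The failure of the first inclusion is, by definition, exactly the statement that the composition is improper for isometries, which has already been characterised by Theorem \ref{SuitableIsoMatch}, yielding the first bullet point verbatim.

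For the second bullet, I would show that the second inclusion is nothing but the first inclusion applied to the transposed matchings $\iota^\top : \cx_C \to \cx_B$ and $\kappa^\top : \cx_B \to \cx_A$. Two small facts make this work. First, for any relation $\mu$ one has $\cs_{\mu^\top} = \ct_\mu$, since an element is in the practical input set of the transpose iff it lies in the image of $\mu$, i.e.\ in its practical output set. Second, by the dagger-functoriality of the `bar' operation (Theorem \ref{th:FunctorIndexMatchingToRel}), $\bar{\kappa}^\top = \overline{\kappa^\top}$ and $\bar{\iota}^\top = \overline{\iota^\top}$. Combining these, and using $(\overline{\iota^\top})^\top = \bar{\iota}$, the second inclusion $(\bar{\iota}^\top \circ \bar{\iota})[\ct_{\bar{\kappa}}] \subseteq \ct_{\bar{\kappa}}$ rewrites as $(\overline{\iota^\top} \circ (\overline{\iota^\top})^\top)[\cs_{\overline{\kappa^\top}}] \subseteq \cs_{\overline{\kappa^\top}}$, which is exactly the isometry-composition condition \eqref{eq:IsoComposition} for the pair $(\iota^\top, \kappa^\top)$. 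Thus the second inclusion fails iff the composition of $\iota^\top$ and $\kappa^\top$ is improper for isometries, and Theorem \ref{SuitableIsoMatch} applied to this transposed pair produces a condition phrased in terms of an index \emph{created by} $\iota^\top$ whose representatives $\cw$ are matched by $\kappa^\top$ to $\cx_B \setminus \cw$.

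The last step is a translation of vocabulary, which I would spell out to obtain the second bullet as stated. Since $\iota^\top$ and $\iota$ induce the same equivalence relation on $\cx_B \sqcup \cx_C$ with the two sides' roles interchanged, an equivalence class consisting solely of output representatives of $\iota^\top$ (i.e.\ of elements of $\cx_B$) is exactly one consisting solely of input representatives of $\iota$ — that is, an index \emph{deleted by} $\iota$ — and its length is unchanged. Likewise, because $\kappa^\top$ and $\kappa$ are the same equivalence relation on $\cx_A \sqcup \cx_B$ and `matching' is a symmetric notion, $\kappa^\top$ matches some element of $\cw$ with an element of $\cx_B \setminus \cw$ iff $\kappa$ does. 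This turns the transported condition into the second bullet precisely, and the disjunction of the two bullets then characterises impropriety for unitaries.

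The main obstacle, such as it is, lies entirely in this transpose bookkeeping: one must verify carefully that the symmetric inclusion in Theorem \ref{th:UniCompos} really is the isometry inclusion for the adjointed pair (which hinges on the identity $\cs_{\mu^\top} = \ct_\mu$ together with the dagger-functoriality of `bar'), and that `created' and `deleted' get interchanged — while the `matching' condition, being symmetric, does not — when passing to transposes. No new analytic content is needed beyond Theorems \ref{SuitableIsoMatch} and \ref{th:FunctorIndexMatchingToRel}.
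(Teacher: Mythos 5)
Your proposal is correct and follows essentially the same route as the paper, which states this theorem immediately after Theorem \ref{SuitableIsoMatch} with only the remark that one ``just need[s] to spell out the corresponding requirement for unitaries'', i.e.\ it relies on precisely the symmetry under transposition that you make explicit. Your careful bookkeeping --- the identity $\cs_{\mu^\top} = \ct_\mu$, the dagger-functoriality of the bar operation, and the exchange of `created' and `deleted' under transposition --- is a faithful and complete elaboration of that implicit argument.
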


\section{A formal construction of index-matching quantum circuits} \label{app:IndexMatchingDiagrams}

\subsection{Definition and composition}

First, we define indexed wire systems.

\begin{definition}
An \textbf{indexed wire system} is a finite set $I$ equipped with a set of indices $K_I$, a function $p_I : K_I \to I$ (indicating the indices' placement) and an equivalence relation $\sim_I$ on $K_I$.
\end{definition}

An indexed open directed acyclic graph (IODAG) will then be a map from one indexed wire system to another, taking the form of a multi-indexed directed acyclic graph.

\begin{definition}
Let $I$ and $O$ be two indexed wire systems. An \textbf{indexed open directed acyclic graph} $\Gamma : I \to O$ consists of the following:
\begin{itemize}
\item finite sets $E_\Gamma$ (inner edges), $N_\Gamma$ (nodes), and $K_{E_\Gamma}$ (indices for the inner edges);
\item a head function $h_\Gamma : I \sqcup E_\Gamma \to N_\Gamma$, a tail function $t_\Gamma : E_\Gamma \sqcup O \to N_\Gamma$, and a placing function $p_{E \Gamma} : K_{E_\Gamma} \to E_\Gamma$;
\item an equivalence relation $\sim_\Gamma$ on $K_I \sqcup K_{E_\Gamma} \sqcup K_O$ which reduces to $\sim_I$ on $K_I$ and to $\sim_O$ on $K_O$;
\end{itemize} 
such that the directed graph formed by the edges and nodes is acyclic.
\end{definition}

Note that $h^{-1} : N_\Gamma \to \calp (I \sqcup E_\Gamma)$ and $t^{-1} : N_\Gamma \to \calp (E_\Gamma \sqcup O)$ both serve to specify, respectively, the subset of edges coming in a given node and the subset of edges going out of it. We will note $p_\Gamma := \langle p_I, p_{E \Gamma}, p_O \rangle$.

Most structural theorems for IODAGs will hold only up to isomorphism of IODAGs.

\begin{definition}
An \textbf{isomorphism of IODAGs} from $\Gamma$ to $\Gamma'$ is given by bijections $\alpha : E_\Gamma \to E_\Gamma'$, $\beta : N_\Gamma \to N_{\Gamma'}$ and $\gamma : K_{E_\Gamma} \to K_{E_{\Gamma'}}$, such that $h_{\Gamma'} \circ \langle \tr{id}_I, \alpha \rangle = \beta \circ h_\Gamma$, $t_{\Gamma'} \circ \langle \alpha, \tr{id}_O \rangle = \beta \circ t_\Gamma$, $p_{E \Gamma'} \circ \alpha = \gamma \circ p_{E \Gamma}$, and such that $\langle \tr{id}_I, \alpha, \tr{id}_O \rangle$ maps $\sim_\Gamma$ to $\sim_{\Gamma'}$.
\end{definition}

We will therefore work with equivalence classes of IODAGs under isomorphisms of IODAGs. For the sake of clarity, we will still call such an equivalence class an IODAG, and usually refer to it by specifying a representative of this class.

We can then explain how to compose IODAGs. First, we will need a way to compose equivalence relations which, contrary to the standard composition of corelations, does not forget about the intermediary set.

\begin{theorem} \label{NonForgettingComp}
Let $\sim_1$ and $\sim_2$ be equivalence relations respectively defined on $A \sqcup B$ and $B \sqcup C$, and whose restrictions to $B$ coincide. There exists a unique equivalence relation $\sim$ on $A \sqcup B \sqcup C$ such that $\sim$ reduces to $\sim_1$ on $A \sqcup B$, to $\sim_2$ on $B \sqcup C$, and to the composition of $\sim_1$ and $\sim_2$ (seen as corelations) on $A \sqcup C$. We will call $\sim$ the \textbf{non-forgetting composition} of $\sim_1$ and $\sim_2$.
\end{theorem}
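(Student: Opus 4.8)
The plan is to exhibit the non-forgetting composition explicitly as the \emph{join} of the two given relations. Concretely, I would set $\sim$ to be the smallest equivalence relation on $A \sqcup B \sqcup C$ containing $\sim_1 \cup \sim_2$, i.e.\ the reflexive--symmetric--transitive closure of the union of the two relations (two elements are $\sim$-related if and only if they are linked by a finite zigzag each of whose steps lies in $\sim_1$ or in $\sim_2$). This is well-defined and is an equivalence relation by construction. With this candidate, the requirement that $\sim$ reduce on $A \sqcup C$ to the composition of $\sim_1$ and $\sim_2$ (viewed as corelations) is immediate, since the composite corelation is by definition precisely this join followed by restriction to the outer boundary $A \sqcup C$ \cite{coya2018}. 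The whole substance therefore lies in the other two reduction conditions and in uniqueness.

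The key lemma --- and the step I expect to be the main obstacle --- is a chain-shortening argument showing $\sim|_{A \sqcup B} = \sim_1$ (the claim for $B \sqcup C$ being symmetric). The inclusion $\sim_1 \subseteq \sim|_{A\sqcup B}$ is trivial, so I must show that no genuinely new identifications inside $A \sqcup B$ are created by excursions through $C$. Take $x, y \in A \sqcup B$ with $x \sim y$ and fix a witnessing zigzag. Since $\sim_1$ only relates elements of $A \sqcup B$ and $\sim_2$ only relates elements of $B \sqcup C$, any occurrence of a $C$-element in the zigzag is flanked exclusively by $\sim_2$-steps; hence each $C$-element is strictly interior to some maximal run of consecutive $\sim_2$-steps. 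Each endpoint of such a run either borders a $\sim_1$-step (hence lies in $A \sqcup B$) or is one of $x,y$ (hence lies in $A \sqcup B$), and being incident to a $\sim_2$-step it also lies in $B \sqcup C$, so it lies in $B$. Transitivity of $\sim_2$ collapses each run to a single relation $b \sim_2 b'$ with $b,b' \in B$, and the hypothesis that $\sim_1$ and $\sim_2$ agree on $B$ upgrades this to $b \sim_1 b'$. Replacing every $\sim_2$-run in this way yields a zigzag from $x$ to $y$ whose steps all lie in $\sim_1$ and whose intermediate vertices all lie in $A \sqcup B$; transitivity of $\sim_1$ then gives $x \sim_1 y$, as required.

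For uniqueness I would argue purely combinatorially. Every pair of elements of $A \sqcup B \sqcup C$ lies in at least one of the three subsets $A \sqcup B$, $B \sqcup C$, $A \sqcup C$, so a putative solution $\sim'$ has its behaviour on \emph{every} pair fixed by the three prescribed restrictions; hence any two solutions coincide as relations. Existence is supplied by the candidate $\sim$ constructed above, whose validity (established via the lemma) simultaneously certifies that the prescriptions are mutually consistent on the over-determined pairs --- those lying inside $A$ or inside $C$, which are constrained both by $\sim_1$ (resp.\ $\sim_2$) and by the composite --- since the lemma precisely shows that the composite agrees there with the restriction of $\sim_1$ (resp.\ $\sim_2$). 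This completes the plan.
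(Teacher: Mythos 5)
Your proof is correct, but it takes a genuinely different route from the paper's. You construct $\sim$ explicitly as the equivalence relation generated by $\sim_1 \cup \sim_2$ and then prove, via a chain-shortening argument, that excursions through $C$ create no new identifications inside $A \sqcup B$: every maximal run of $\sim_2$-steps in a witnessing zigzag has its endpoints in $B$, collapses by transitivity to a single $\sim_2$-step between elements of $B$, and is then converted to a $\sim_1$-step using the agreement hypothesis. The paper instead represents $\sim_1$ and $\sim_2$ as cospans $A \to X \gets B$ and $B \to Y \gets C$, forms the pushout $X \to Z \gets Y$, and observes that the agreement of the two relations on $B$ forces the pushout coprojections to be injective, which is exactly what guarantees that the induced relation on $A \sqcup B \sqcup C$ restricts back to $\sim_1$ and $\sim_2$ rather than coarsening them. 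The two constructions yield the same relation (as both uniqueness arguments show --- yours by noting that every pair of elements lies in at least one of the three prescribed subsets, the paper's by the contrapositive of the same observation). Your approach is more elementary and makes the combinatorial mechanism transparent; the paper's pushout formulation pays off immediately afterwards, since associativity of the non-forgetting composition (needed for associativity of sequential composition of IODAGs) follows from uniqueness of pushouts, whereas with your construction one would have to verify associativity of the generated-closure operation by hand (which is easy, but is an extra step). One minor point: you invoke ``by definition'' for the fact that the composite corelation is the join restricted to $A \sqcup C$, whereas the paper's official definition is via pushouts of cospans; the equivalence of the two descriptions is standard, but strictly speaking it is a (small) lemma rather than a definition in this context.
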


\begin{proof}
To build such an equivalence relation, let us take a cospan $A \stackrel{f_A}{\to} X \stackrel{f_B}{\gets} B$ representing $\sim_1$, and a cospan $B \stackrel{g_B}{\to} Y \stackrel{g_C}{\gets} C$ representing $\sim_2$. We can take the pushout of $X \stackrel{f_B}{\gets} B \stackrel{g_B}{\to} Y$, given by $X \stackrel{i_1}{\to} Z \stackrel{i_2}{\gets} Y$. This yields an arrow $\langle i_1 \circ f_A, i_1 \circ f_B, i_2 \circ g_C \rangle : A \sqcup B \sqcup C \to Z$, which defines an equivalence relation $\sim$ on $A \sqcup B \sqcup C$. As this is the standard way to define compositions of corelations, it follows that $\sim$ reduces to the composition of $\sim_1$ and $\sim_2$ (seen as corelations) on $A \sqcup C$.

Let us prove that $\sim$ reduces to $\sim_1$ on $A \sqcup B$ and to $\sim_2$ on $B \sqcup C$. The pushout $Z$ is defined as the set of equivalence classes of $X \sqcup Y$ under the equivalence relation $\approx$ generated by the requirement: $x \approx y \iff \exists b, x = f_B(b) \land y = g_B(b)$. Yet, the fact that $\sim_1$ and $\sim_2$ coincide on $B$ implies that there exists a partition $B = \amalg_i B_i$ and families $(x_i)$, $(y_i)$ such that $\forall b \in B_i, f_B(b)= x_i \land g_B(b) = y_i$. Therefore, the equivalence classes of $\approx$ are the $\{x_i, y_i\}$ and the singletons $\{ w \}$ where $w \not\in f_B(B) \sqcup g_B(B)$. This implies that $i_1$ and $i_2$ are bijections. As $a_1$ is a bijection, two elements of $A \sqcup B$ are mapped to the same element of $X$ if and only if they are mapped to the same element of $Z$; thus $\sim$ restricts to $\sim_1$ on $A \sqcup B$. Symmetrically, it restricts to $\sim_2$ on $B \sqcup C$.

Let us finally prove uniqueness; suppose that $\sim'$ satisfies the same requirements and that there exist $d, d'$ such that $d \sim d'$ and $d \not\sim' d'$. Then, given that $\sim$ and $\sim'$ coincide on $A \sqcup B$ and on $B \sqcup C$, one must have $d \in A$ and $d' \in C$; this contradicts the fact that $\sim$ and $\sim'$ coincide on $A \sqcup C$.
\end{proof}

\begin{definition}
The sequential composition of two IODAGs $I \stackrel{\Gamma_1}{\to} J \stackrel{\Gamma_2}{\to} O$ is $\tilde{\Gamma}: I \to O$ defined by $\tilde{N} = N_1 \sqcup N_2$, $\tilde{E} = E_1 \sqcup J \sqcup E_2$, $\tilde{h} = \langle h_1, h_2 \rangle$, $\tilde{t} = \langle t_1, t_2 \rangle$, $K_{\tilde{E}} = K_{E_1} \sqcup K_J \sqcup K_{E_2}$, $\tilde{p}_{\tilde{E}} = \langle p_{E_1}, p_J, p_{E_2} \rangle$, and where $\sim_{\tilde{\Gamma}}$ is the non-forgetting composition of $\sim_1$ and $\sim_2$.
\end{definition}

\begin{theorem}
Sequential composition of IODAGs is associative.
\end{theorem}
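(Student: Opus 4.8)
The plan is to fix composable IODAGs $I \xrightarrow{\Gamma_1} J \xrightarrow{\Gamma_2} M \xrightarrow{\Gamma_3} O$ and exhibit an isomorphism of IODAGs between $(\Gamma_1 ; \Gamma_2) ; \Gamma_3$ and $\Gamma_1 ; (\Gamma_2 ; \Gamma_3)$; since IODAGs are only considered up to such isomorphism, this suffices. I would split the data of a composite IODAG into its \emph{structural} part (the sets $N$, $E$, $K_E$ together with $h$, $t$, $p$) and its \emph{equivalence-relation} part $\sim$, and treat the two separately.

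For the structural part, I would observe that in the definition of sequential composition every one of $\tilde{N}$, $\tilde{E}$, $K_{\tilde{E}}$ is built by disjoint unions, and $\tilde{h}$, $\tilde{t}$, $\tilde{p}$ by tuplings $\langle \cdot \rangle$ of the data of the factors. Expanding both bracketings, one obtains e.g.\ $\tilde{E} = (E_1 \sqcup J \sqcup E_2) \sqcup M \sqcup E_3$ on one side and $E_1 \sqcup J \sqcup (E_2 \sqcup M \sqcup E_3)$ on the other, and likewise for $\tilde{N}$ and $K_{\tilde{E}}$. The canonical associativity isomorphisms of $\sqcup$ then supply the required bijections $\alpha$, $\beta$, $\gamma$, and verifying the intertwining conditions $h_{\Gamma'} \circ \langle \mathrm{id}_I, \alpha \rangle = \beta \circ h_\Gamma$, $t_{\Gamma'} \circ \langle \alpha, \mathrm{id}_O \rangle = \beta \circ t_\Gamma$, $p_{E\Gamma'} \circ \alpha = \gamma \circ p_{E\Gamma}$ is routine, since each of $\tilde{h}$, $\tilde{t}$, $\tilde{p}$ is a tupling of tuplings and tupling is compatible with reassociation of $\sqcup$.

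The only substantive point concerns $\sim$, and it reduces to showing that the non-forgetting composition of Theorem~\ref{NonForgettingComp} is itself associative. The key lemma I would establish is that, for equivalence relations $\sim_1$ on $A \sqcup B$ and $\sim_2$ on $B \sqcup C$ coinciding on $B$, their non-forgetting composition is exactly the \emph{join} $\sim_1 \vee \sim_2$ in the lattice of equivalence relations on $A \sqcup B \sqcup C$ — the smallest equivalence relation containing both, where each is regarded as a relation on the full set by relating only the elements on which it was originally defined. This is precisely what the pushout construction in the proof of Theorem~\ref{NonForgettingComp} computes: the pushout of the two quotient cospans identifies two elements iff they are linked by a finite alternating chain of $\sim_1$- and $\sim_2$-steps, which is the transitive closure of $\sim_1 \cup \sim_2$.

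With this characterisation the result is formal. Both $(\sim_1 \vee \sim_2) \vee \sim_3$ and $\sim_1 \vee (\sim_2 \vee \sim_3)$ are taken over the same total index set $K_I \sqcup K_{E_1} \sqcup K_J \sqcup K_{E_2} \sqcup K_M \sqcup K_{E_3} \sqcup K_O$, and both equal the join $\sim_1 \vee \sim_2 \vee \sim_3$ of all three, since $\vee$ is an associative lattice operation. Before invoking this I would check that the hypotheses of Theorem~\ref{NonForgettingComp} hold at each nesting: $\sim_1 \vee \sim_2$ restricts to $\sim_M$ on $K_M$ because $\sim_2$ does, so it coincides there with $\sim_3$, and symmetrically for the other bracketing; hence every non-forgetting composition that appears is well defined and yields the corresponding iterated join. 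Transporting this equality of equivalence relations along the reassociation bijection $\langle \mathrm{id}_I, \gamma, \mathrm{id}_O \rangle$ of the previous step then shows that $(\alpha, \beta, \gamma)$ is an isomorphism of IODAGs, completing the proof. I expect the main obstacle to be the identification of non-forgetting composition with the lattice join (everything downstream is then purely formal), together with the bookkeeping needed to confirm that the `coinciding on the overlap' hypothesis survives each iteration.
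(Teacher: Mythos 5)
Your proof is correct, but it takes a more explicit route than the paper on the one point the paper identifies as non-trivial. The paper's entire argument is that associativity of the non-forgetting composition ``is ensured by the way we built it using cospans and pushouts \ldots and the fact that pushouts are unique up to isomorphism'' --- i.e.\ an appeal to the standard pasting/uniqueness argument for iterated pushouts, left implicit. You instead prove a concrete characterisation: the non-forgetting composition of $\sim_1$ and $\sim_2$ is the join $\sim_1 \vee \sim_2$ in the lattice of equivalence relations on $A \sqcup B \sqcup C$ (the transitive closure of the union, which is exactly what the pushout computes), and then associativity follows from associativity of $\vee$, modulo the bookkeeping that extending an equivalence relation from a subset by reflexive pairs commutes with joins and that the ``coincide on the overlap'' hypothesis of Theorem~\ref{NonForgettingComp} is preserved at each nesting --- a check you correctly carry out and the paper omits. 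Your treatment of the structural part (reassociation of disjoint unions and tuplings, packaged as an isomorphism of IODAGs) matches what the paper silently dismisses as trivial. The trade-off: the paper's categorical argument is shorter but leans on an unstated pasting lemma for pushouts; your lattice-theoretic version is self-contained, makes the underlying combinatorial fact (alternating chains) explicit, and as a by-product re-derives the restriction properties of the composite via the uniqueness clause of Theorem~\ref{NonForgettingComp}.
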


\begin{proof}
The only non-trivial thing to check is associativity of the non-forgetting composition. This is ensured by the way we built it using cospans and pushouts in the proof of Theorem \ref{NonForgettingComp}, and the fact that pushouts are unique up to isomorphism.
\end{proof}

\begin{definition}
Given two indexed wire systems $I$ and $I'$, their parallel composition is given by  the set $I \sqcup I'$  and the structure $K_{I \sqcup I'} := K_I \sqcup K_{I'}$, $p_{I \sqcup I'} := \langle p_I, p_{I'} \rangle$ and $\sim_{I \sqcup I'} := \sim_I \sqcup \sim_{I'}$, defined by the fact that it does not relate any elements of $K_I$ and $K_{I'}$ and that it restricts to $\sim_I$ and $\sim_{I'}$ respectively on $K_I$ and $K_{I'}$.

Similarly, the parallel composition of $\Gamma : I \to O$ and $\Gamma' : I' \to O'$ is the IODAG $\Gamma \sqcup \Gamma' : I \sqcup I' \to O \sqcup O'$ given by taking disjoint unions on all of the relevant structure and defining the new equivalence relation in the same way.
\end{definition}

The following is then direct.

\begin{theorem}
The parallel composition of IODAGs is associative, and distributes over sequential composition.
\end{theorem}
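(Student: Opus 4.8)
The plan is to prove both statements by direct comparison of the structural data of the two sides, working throughout up to isomorphism of IODAGs, and to isolate the single genuinely substantive point, which concerns the behaviour of the equivalence relations under these operations.

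For associativity of parallel composition, I would observe that every piece of data entering $\Gamma \sqcup \Gamma'$ — the node set $N$, the inner-edge set $E$, the index set $K_E$, the functions $h$, $t$, $p$, and the relation $\sim$ — is obtained by a disjoint union of the corresponding data of $\Gamma$ and $\Gamma'$. Since the disjoint union of sets is associative up to a canonical bijection, and the copairings $\langle \cdot, \cdot \rangle$ together with the disjoint-union relation $\sqcup$ transport along these bijections in the evident way, the associators assemble into an isomorphism of IODAGs from $(\Gamma \sqcup \Gamma') \sqcup \Gamma''$ to $\Gamma \sqcup (\Gamma' \sqcup \Gamma'')$. As we already work with equivalence classes of IODAGs under isomorphism, this suffices. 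The relation $\sim$ causes no difficulty here, since $\sim_{I \sqcup I'}$ relates no element of $K_I$ to any element of $K_{I'}$, so the threefold disjoint-union relation is unambiguous.

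For distributivity over sequential composition — the interchange law — I would take $\Gamma_1 : I \to J$, $\Gamma_2 : J \to O$, $\Gamma_1' : I' \to J'$, $\Gamma_2' : J' \to O'$ and compare $(\Gamma_2 \circ \Gamma_1) \sqcup (\Gamma_2' \circ \Gamma_1')$ with $(\Gamma_2 \sqcup \Gamma_2') \circ (\Gamma_1 \sqcup \Gamma_1')$, both composites being defined since composability is inherited from the shared interfaces $J$ and $J'$. Expanding the definitions, both sides carry node set amounting to $N_1 \sqcup N_2 \sqcup N_1' \sqcup N_2'$, edge set $E_1 \sqcup J \sqcup E_2 \sqcup E_1' \sqcup J' \sqcup E_2'$, and index set $K_{E_1} \sqcup K_J \sqcup K_{E_2} \sqcup K_{E_1'} \sqcup K_{J'} \sqcup K_{E_2'}$, differing only by a re-bracketing and re-ordering of summands absorbed into the canonical associativity and commutativity isomorphisms of disjoint union. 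The head, tail and placing functions then agree on the nose, since in both constructions they are copairings of the component functions.

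The one step demanding real care, and the main obstacle, is to check that the two equivalence relations coincide under this identification: on the left one forms the non-forgetting compositions of $(\sim_1, \sim_2)$ and of $(\sim_1', \sim_2')$ and then takes their disjoint union, whereas on the right one first takes $\sim_1 \sqcup \sim_1'$ and $\sim_2 \sqcup \sim_2'$ and then forms their non-forgetting composition. Here I would invoke the construction supplied in the proof of Theorem \ref{NonForgettingComp}: the non-forgetting composition is the pushout of the cospans representing $\sim_1$ and $\sim_2$, equivalently the equivalence relation generated by $\sim_1 \cup \sim_2$ on the triple disjoint union, with the coincidence of $\sim_1$ and $\sim_2$ on the shared middle set ensuring the correct restrictions. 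Since the primed and unprimed data share no indices, the generating relation $(\sim_1 \sqcup \sim_1') \cup (\sim_2 \sqcup \sim_2')$ splits as the disjoint union of $\sim_1 \cup \sim_2$ and $\sim_1' \cup \sim_2'$, and generating an equivalence relation commutes with such a disjoint union — equivalently, the pushout of the combined cospans is the coproduct of the two separate pushouts, because pushouts commute with coproducts in $\cat{Set}$. This identifies the two relations and completes the interchange law.
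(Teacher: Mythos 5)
Your proof is correct and is essentially the paper's own argument: the paper offers no proof beyond declaring the result ``direct,'' and what you have written is precisely the routine verification that is being elided --- all structural data of an IODAG is built from disjoint unions and copairings, which are associative and satisfy the interchange law up to the canonical isomorphisms already quotiented out. The one point you rightly single out as substantive, that the non-forgetting composition commutes with disjoint union of equivalence relations on disjoint index sets, is handled correctly via the pushout description of Theorem \ref{NonForgettingComp} (equivalently, because an equivalence relation generated by a union of relations supported on disjoint components splits as the disjoint union of the generated relations).
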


Note, however, that the theory of IODAGs, thus defined, does not form a symmetric monoidal category, as it lacks identity morphisms. This can be dealt with by extending the definition of IODAGs, in order to allow for \textit{empty nodes}.

\begin{definition}
One can extend the definition of IODAGs by further equipping them with a set $\dot{N}_\Gamma \subseteq N_\Gamma$ of empty nodes, such that for a given $n \in \dot{N}_\Gamma$, $n$ has only one ingoing wire $\tr{in}(n)$, and one outgoing wire $\tr{out}(n)$, whose indices are related in a consistent way: i.e.\ there exists a bijection $\xi^\Gamma_n : p_\Gamma^{-1} (\tr{in}(n)) \to p_\Gamma^{-1} (\tr{out}(n))$ such that $\forall k \in p_\Gamma^{-1} (\tr{in}(n)), k \sim_\Gamma \xi^\Gamma_n (k)$. One can further redefine a IODAG to be an equivalence class under the rewriting operations which consist in getting rid of some empty nodes and identifying their ingoing wire with their outgoing wire.
\end{definition}

\begin{theorem}
The theory of IODAGs with possibly empty nodes is a symmetric monoidal category. 
\end{theorem}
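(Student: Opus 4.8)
The plan is to lift the symmetric monoidal structure from the underlying (symmetric, strict) monoidal structure that disjoint union puts on indexed wire systems, with \emph{every} structural and coherence morphism realised as an IODAG built entirely from empty nodes, i.e.\ as a pure rewiring. The earlier results already supply the hard bifunctorial content: sequential composition is associative, and parallel composition is associative and distributes over sequential composition (the interchange law). In the light of the remark preceding the statement, the only genuinely missing ingredient was identity morphisms, and these are exactly what empty nodes are designed to provide; so the bulk of the proof is to check that the empty-node quotient makes the identity laws hold on the nose, and then that the same kind of empty-node rewirings give a coherent symmetry.

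First I would construct the identities. For an indexed wire system $I$, let $\mathrm{id}_I : I \to I$ be the IODAG all of whose nodes are empty, with one empty node per wire of $I$, whose ingoing and outgoing wires are the corresponding wires of $I$ and whose index bijection $\xi$ is the identity, so that $\sim$ matches each input index with its output counterpart. To see $\mathrm{id}_I \circ \Gamma = \Gamma$ for $\Gamma : J \to I$, note that the sequential composite inserts a layer of empty nodes on $\Gamma$'s output wires; the empty-node rewriting then deletes these nodes and identifies each ingoing wire with its outgoing wire, recovering $\Gamma$ up to isomorphism of IODAGs. That the index data survives this is precisely guaranteed by the non-forgetting composition of the equivalence relations, Theorem \ref{NonForgettingComp}, whose restriction property ensures the matching on $K_I$ is transported correctly. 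The symmetric argument gives $\Gamma \circ \mathrm{id}_J = \Gamma$.

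Next I would supply the monoidal and symmetric structure. The unit object is the empty indexed wire system $\emptyset$, and since parallel composition was already shown associative and distributive (up to isomorphism of IODAGs), the associators and left/right unitors may be taken to be the canonical empty-node rewirings implementing the bijections $(I\sqcup J)\sqcup K \cong I\sqcup(J\sqcup K)$ and $I\sqcup\emptyset\cong I$ of disjoint unions (reducing to identities whenever disjoint union is realised strictly). The symmetry $\sigma_{I,J} : I\sqcup J \to J\sqcup I$ is likewise the empty-node IODAG implementing the swap bijection, with $\xi$ the induced bijection on indices. Each of these is an isomorphism because its inverse is the empty-node IODAG for the inverse bijection, and the composite collapses to an identity after empty-node rewriting; in particular $\sigma_{J,I}\circ\sigma_{I,J}=\mathrm{id}$. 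Naturality of $\sigma$ (and of the associators/unitors) follows because sequentially composing a rewiring with a parallel composite $\Gamma\sqcup\Gamma'$ merely permutes or relabels the node and edge sets, so the two sides of each naturality square present isomorphic IODAGs once empty nodes are removed.

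Finally, the pentagon, triangle and hexagon identities reduce, after empty-node rewriting, to the corresponding equalities among the canonical bijections of disjoint unions acting on the index sets and their equivalence relations; these hold for finite sets and index-matchings, hence for the empty-node coherence cells. The main obstacle is not conceptual but the bookkeeping: one must verify, for each structural IODAG, that it respects $\sim_\Gamma$ (again via the non-forgetting composition of Theorem \ref{NonForgettingComp}) and that the empty-node rewriting genuinely equates the two sides of every naturality and coherence square \emph{as equivalence classes}, rather than only up to a further rewiring. Once the identity laws are established through the quotient, the remaining symmetric monoidal coherence is essentially inherited from the symmetric strict monoidal structure of disjoint unions, so the verification, though fiddly, presents no real difficulty.
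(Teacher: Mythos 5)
Your proposal is correct and follows essentially the same route as the paper: the paper's proof likewise observes that the only nontrivial ingredients are identities and swaps, constructs both as empty-node IODAGs (the identity with one empty node per wire of $I$ and the index relation matching each input index to its output counterpart, the swap analogously), and relies on the previously established associativity and distributivity of the two compositions. You simply spell out in more detail the verification of the unit laws through the empty-node rewriting quotient and the coherence conditions, which the paper leaves implicit.
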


\begin{proof}
The non-trivial part is to prove that this theory has identity morphisms and swaps. The identity morphism from $I$ to itself is given by the IODAG with no inner edges, $\abs{I}$ empty nodes, each of which connects an element of $I$ in the inputs with its counterpart in the outputs, and such that two elements $k, k' \in K_I \sqcup K_I$ are related if and only if they are related as elements of $K_I$. The swap from $I \sqcup J$ to $J \sqcup I$ is built in an analogous way.
\end{proof}

We can now single out these IODAGs which are well indexed, for an interpretation in practical isometries and for one in practical unitaries.

\begin{definition}\label{isoDot}
Let $\Gamma: I \to O$ be a IODAG. For a given node $n$ of $\Gamma$, the set of incoming indices for this node is $p_\Gamma^{-1} \circ h^{-1} (n)$, and the set of outgoing indices for this node is $p_\Gamma^{-1} \circ t^{-1} (n)$.

Let $c$ be an equivalence class of $K_I \sqcup K_{E_\Gamma} \sqcup K_O$ under $\sim_\Gamma$, a \textbf{starting point} for $c$ is a node $n$ such that $c$ has at least one representative in the set of outgoing indices of $n$, but no representatives in its incoming indices. An \textbf{endpoint} for $c$ is defined symmetrically.

$\Gamma$ is an \textbf{iso-IODAG} if each equivalence class $c$ of $K_I \sqcup K_{E_\Gamma} \sqcup K_O$ under $\sim_\Gamma$ has at most one starting point, and has no starting point if it appears in the inputs of the diagram (i.e.\ if it has a representative in $K_I$).

$\Gamma$ is a \textbf{uni-IODAG} if it is an iso-IODAG in which each equivalence class $c$ of $K_I \sqcup K_{E_\Gamma} \sqcup K_O$ under $\sim_\Gamma$ has at most one endpoint, and has no endpoint if it appears in the outputs of the diagram (i.e.\ if it has a representative in $K_O$).
\end{definition}

Being well-indexed is a property preserved by composing diagrams, sequentially and in parallel:

\begin{theorem}
Sequential and parallel compositions of iso-IODAGs are iso-IODAGs, and sequential and parallel compositions of uni-IODAGs are uni-IODAGs.
\end{theorem}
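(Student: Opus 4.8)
The plan is to handle the two compositions separately. Parallel composition is straightforward: in $\Gamma \sqcup \Gamma'$ the relation $\sim_{\Gamma\sqcup\Gamma'}$ relates no index of $\Gamma$ to any index of $\Gamma'$, so every equivalence class is entirely a class of one summand; since the node set also splits disjointly and a node of one summand carries only that summand's indices, the starting points (resp.\ endpoints) of a class and whether it appears in the global inputs (resp.\ outputs) are inherited verbatim from the relevant summand. The iso- and uni-IODAG conditions therefore transfer with no cross-terms, so I would dispatch this case first and concentrate on sequential composition.

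For the sequential composite $\tilde\Gamma$ of $I \overset{\Gamma_1}{\to} J \overset{\Gamma_2}{\to} O$, the first step is a structural analysis of the classes of $\sim_{\tilde\Gamma}$. Writing $S_1 = K_I \sqcup K_{E_1} \sqcup K_J$ and $S_2 = K_J \sqcup K_{E_2} \sqcup K_O$, Theorem \ref{NonForgettingComp} ensures that $\sim_{\tilde\Gamma}$ restricts to $\sim_1$ on $S_1$ and to $\sim_2$ on $S_2$; hence for each $\sim_{\tilde\Gamma}$-class $\tilde c$, the sets $c_1 := \tilde c \cap S_1$ and $c_2 := \tilde c \cap S_2$ are each empty or a single full class of $\sim_1$, resp.\ $\sim_2$, overlapping exactly in $\tilde c \cap K_J$. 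Using that $\sim_{\tilde\Gamma}$ reduces to the composition of corelations on $(K_I \sqcup K_{E_1}) \sqcup (K_{E_2} \sqcup K_O)$, I would verify that when $\tilde c \cap K_J = \emptyset$ the class lies wholly in $K_I \sqcup K_{E_1}$ or wholly in $K_{E_2} \sqcup K_O$. I would then observe, from $\tilde h = \langle h_1, h_2\rangle$, $\tilde t = \langle t_1, t_2\rangle$ and the placing data, that a node $n \in N_1$ has the same incoming and outgoing index sets in $\tilde\Gamma$ as in $\Gamma_1$ (and symmetrically for $N_2$), all of them lying inside $S_1$ (resp.\ $S_2$). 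Consequently the starting points of $\tilde c$ in $\tilde\Gamma$ are exactly the disjoint union of the starting points of $c_1$ in $\Gamma_1$ and of $c_2$ in $\Gamma_2$.

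With this decomposition the counting is short, and it is precisely where the asymmetric clause of Definition \ref{isoDot} is used. If $\tilde c$ meets the interface $K_J$, then $c_2$ is a class appearing in the inputs of $\Gamma_2$, so the iso condition on $\Gamma_2$ forces it to have no starting point, while $c_1$ contributes at most one; hence $\tilde c$ has at most one. If moreover $\tilde c$ meets $K_I$, then $c_1$ appears in the inputs of $\Gamma_1$ and also has no starting point, so $\tilde c$ has none. The remaining case $\tilde c \cap K_J = \emptyset$ collapses directly to a single one of $\Gamma_1, \Gamma_2$. This shows that $\tilde\Gamma$ is an iso-IODAG. The uni-IODAG claim then follows by running the dual argument on endpoints, in which the `no endpoint if it appears in the outputs' clause on $\Gamma_1$ plays the role that the input clause on $\Gamma_2$ played above, combined with the iso-part already proved.

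The main obstacle I expect is not the counting but the interface bookkeeping underpinning it: cleanly justifying that $\tilde c \cap S_i$ is a full $\sim_i$-class, that interface classes glue exactly along $K_J$, and that a junction node sees the same index data in $\tilde\Gamma$ as in its home diagram. Once these identities are secured, the asymmetry built into the iso/uni conditions makes preservation essentially automatic.
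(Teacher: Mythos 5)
Your proposal is correct and follows essentially the same route as the paper: decompose each $\sim_{\tilde\Gamma}$-class via the restrictions guaranteed by the non-forgetting composition into classes $c_1$, $c_2$ of the two factors, use the ``no starting point if it appears in the inputs'' clause on $\Gamma_2$ (resp.\ the dual output clause on $\Gamma_1$ for endpoints) to kill the interface contribution, and dispatch parallel composition by noting that $\sim_{\Gamma_1\sqcup\Gamma_2}$ relates nothing across the summands. Your version merely spells out more of the interface bookkeeping (full-class restriction, node-local index data) that the paper leaves implicit.
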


\begin{proof}
Let us take two iso-IODAGs $I \stackrel{\Gamma_1}{\to} J \stackrel{\Gamma_2}{\to} O$ and look at their sequential composition $\tilde{\Gamma}$. Take an equivalence class of indices $c$. Then the fact that $\sim_{\tilde{\Gamma}}$ reduces to $\sim_1$ on $K_I \sqcup K_J$ and to $\sim_2$ on $K_J \sqcup K_O$ implies that:

\begin{itemize}
\item if $c$ has a representative in $K_J$, then there is an equivalence class $c_1$ under $\sim_1$ and an equivalence class $c_2$ under $\sim_2$ which correspond to $c$ (i.e., an element belongs to one of these classes if and only if it belongs to $c$). As $\Gamma_2$ is an iso-IODAG and $c_2$ has a representative in $K_J$, it has no starting point in $\Gamma_2$, and thus neither does $c$. If $c$ has a representative in $K_I$, then so does $c_1$, which therefore has no starting point in $\Gamma_1$; $c$ then has no starting point in $\Gamma$. Otherwise, $c_1$ has one starting point, and thus so does $c$.
\item if $c$ has no representative in $K_J$, then its representatives are either all in $\Gamma_1$ or all in $\Gamma_2$; as both are iso-IODAGs, $c$ therefore satisfies the conditions of Definition $\ref{isoDot}$.
\end{itemize}

$\tilde{\Gamma}$ is therefore an iso-IODAG. The rest of the proof for uni-IODAGs is symmetric.

For parallel compositions, as one has $\sim_{\Gamma_1 \sqcup \Gamma_2} = \sim_1 \sqcup \sim_2$, the set of representatives of a given class is included in one of the two diagrams; thus, that the requirements of Definition \ref{isoDot} are satisfied by $\Gamma_1$ and by $\Gamma_2$ directly implies that they are satisfied by $\Gamma_1 \sqcup \Gamma_2$.
\end{proof}

\subsection{Interpretation}

Finally, it is time to turn ourselves to interpretations of IODAGs in terms of index-matching routed maps. First, the index-matching routes are, in fact, defined even before having to interpret anything: they are given by the IODAGs alone.

\begin{definition} \label{CorelInterp}
Let $\Gamma$ be an iso-IODAG. For each edge $e \in I \sqcup E_\Gamma \sqcup O$, we define its set of indices $\cx_e := p_\Gamma^{-1} (e)$. For each node $n \in N_\Gamma$, we define its corelation $\kappa_n : \amalg_{e \in h^{-1}(n)} \cx_e \to  \amalg_{e' \in t^{-1}(n)} \cx_{e'}$ by the fact that it relates two elements if and only if these are related under $\sim_\Gamma$. Furthermore, we define a pre-processing corelation $\kappa_\tr{pre}$ from $K_I$ to itself by the requirement that it relates two elements of $K_I \sqcup K_I$ if and only if, when considered as elements of $K_I$, they are related by $\sim_I$.
\end{definition}

Before interpreting, we can show that the compositions of such corelations are suitable.

\begin{lemma} \label{SuitableDot}
Given an iso-IODAG (resp.\ a uni-IODAG) $\Gamma$, let us consider the corelation obtained by composing all of its nodes' corelations according to $\Gamma$'s graph, then pre-composing the result with $\kappa^\tr{pre}$. It is equal to $\kappa_\tr{tot}$, the corelation which relates two elements of $K_I \sqcup K_O$ if and only if they are related by $\sim_\Gamma$. Furthermore, all the compositions in this construction are suitable for isometries (resp.\ for unitaries).
\end{lemma}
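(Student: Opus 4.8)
The plan is to split the statement into its two assertions---that the composite corelation equals $\kappa_\tr{tot}$, and that every composition performed along the way is suitable for isometries (resp.\ unitaries)---and to treat each by tracking how the single global equivalence relation $\sim_\Gamma$ interacts with the local corelations $\kappa_n$ of Definition~\ref{CorelInterp}. Throughout, I would process the nodes in a topological order $n_1, \dots, n_N$ compatible with the acyclicity of $\Gamma$, so that composing ``according to $\Gamma$'s graph'' becomes an ordinary iterated sequential composition of corelations, amenable to induction and to the associativity already established.

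For the equality with $\kappa_\tr{tot}$, I would prove the two inclusions separately. The inclusion of the composite into $\kappa_\tr{tot}$ is immediate: each $\kappa_n$ is by definition the restriction of $\sim_\Gamma$ to the indices incident to $n$, and $\kappa_\tr{pre}$ is the restriction of $\sim_\Gamma$ to $K_I$; hence any chain of relations realised in the composite connects elements that are already $\sim_\Gamma$-related, so it relates only pairs related by $\kappa_\tr{tot}$. For the reverse inclusion, the key structural fact I would use is that $\sim_\Gamma$ is generated by the family $(\kappa_n)_n$ together with $\sim_I$: every internal edge is shared as an outgoing index of its tail node and as an incoming index of its head node, so gluing the $\kappa_n$ along these shared indices via the non-forgetting composition of Theorem~\ref{NonForgettingComp} reconstructs $\sim_\Gamma$ on the full set $K_I \sqcup K_{E_\Gamma} \sqcup K_O$; forgetting the internal indices then yields exactly $\sim_\Gamma$ restricted to $K_I \sqcup K_O$. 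The role of $\kappa_\tr{pre}$ is precisely to reinstate those relations among input indices that feed into \emph{distinct} first-layer nodes and are therefore invisible to any single $\kappa_n$; pre-composing with it closes this gap. Induction on the topological order, using uniqueness of the non-forgetting composition at each merge, then gives the full equality.

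For suitability, I would argue inductively that each sequential composition encountered---starting from $\kappa_\tr{pre}$ and absorbing one node at a time---satisfies the hypotheses of Theorem~\ref{SuitableIsoMatch}. Write $\kappa_\tr{acc}$ for the corelation accumulated from the inputs up to the current cut, with intermediate index set $\cx_B$, and suppose for contradiction that absorbing the next node $n$ gave an improper composition. By Theorem~\ref{SuitableIsoMatch} there would be an index of length $\geq 2$ \emph{created} by $\kappa_\tr{acc}$---i.e.\ a class $c$ present at the cut but not in $K_I$---with representative set $\cw$, such that $\kappa_n$ matches some $x \in \cw$ with some $x' \in \cx_B \setminus \cw$. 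Since $\kappa_n$ is a restriction of $\sim_\Gamma$, this forces $x \sim_\Gamma x'$, hence $x'$ lies in the same $\sim_\Gamma$-class $c$; and since $c$ does not meet $K_I$, neither $x$ nor $x'$ traces back to an input, so both are created inside the processed part. But $x$ and $x'$ belong to \emph{different} classes of $\kappa_\tr{acc}$ (as $x' \notin \cw$), so the relation $x \sim_\Gamma x'$ was not yet realised before the cut: the two indices were started independently, giving $c$ two distinct starting points. This contradicts the defining condition of an iso-IODAG (Definition~\ref{isoDot}). The base case is clean because $\kappa_\tr{pre}$ creates no index (each input index is matched to its own copy), and the clause ``no starting point if $c$ meets $K_I$'' rules out the remaining degenerate case. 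For uni-IODAGs the same argument, run with the roles of creation and starting points replaced by deletion and endpoints, settles the second clause of Theorem~\ref{SuitableUniMatch}, so both clauses hold and every composition is suitable for unitaries.

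I expect the main obstacle to be the reverse inclusion in the first part: showing rigorously that $\sim_\Gamma$ is recovered, and not merely over-approximated, by gluing the local corelations. This rests on the claim that $\sim_\Gamma$ carries no relations beyond those forced by the node matchings and the input equivalence $\sim_I$, which must be extracted from the way $\sim_\Gamma$ is assembled from local data in Definition~\ref{CorelInterp} and propagated correctly through the iterated non-forgetting compositions of Theorem~\ref{NonForgettingComp}; keeping the bookkeeping of shared internal indices consistent across the whole topological sweep is the delicate point. The suitability part, by contrast, reduces cleanly to the single observation that a problematic merge would demand two starting points for one class, which the iso/uni-IODAG conditions explicitly forbid.
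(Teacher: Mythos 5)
Your second half (suitability of all the compositions) is essentially the paper's own argument: linearise the graph, absorb one layer/node at a time, and use Theorem \ref{SuitableIsoMatch} together with the starting-point conditions of Definition \ref{isoDot} to show that a class created by the accumulated corelation is already fully connected at the cut, so the next node cannot match it to anything outside its representative set. (Both you and the paper gloss the step ``not yet connected at the cut $\Rightarrow$ two starting points''; it needs the same descent argument discussed below, but the idea is the right one.)

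The equality with $\kappa_\tr{tot}$ is where there is a genuine gap. Your ``key structural fact'' --- that $\sim_\Gamma$ is generated by the family $(\kappa_n)_n$ together with $\sim_I$ --- is precisely the nontrivial content of the lemma, it is \emph{false} for a general IODAG, and you never invoke the iso-IODAG hypothesis in this half of the argument. Note that $\sim_\Gamma$ is primitive data of the IODAG (Definition \ref{CorelInterp} goes the other way: each $\kappa_n$ is \emph{defined} as a restriction of $\sim_\Gamma$), so there is nothing to ``extract from the way $\sim_\Gamma$ is assembled from local data''; e.g.\ two disconnected nodes whose output indices are $\sim_\Gamma$-related give a perfectly good IODAG in which no chain of local matchings realises that relation. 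What saves the lemma is exactly Definition \ref{isoDot}: a class meeting $K_I$ has \emph{no} starting point, so from any of its representatives one can descend, node by node (each step realised by some $\kappa_n$), to an index in $K_I$, and $\kappa_\tr{pre}$ then closes the chain; a class not meeting $K_I$ has a \emph{unique} starting point, to whose outgoing indices all representatives descend, and $\kappa_{n_0}$ at that node closes the chain. This descent argument is the paper's proof of the inclusion $\kappa_\tr{tot}\subseteq\kappa'_\tr{tot}$ and is the missing idea in your outline; without it, the ``delicate point'' you flag is not merely delicate but unresolved. (Your other inclusion, and your observation that each $\kappa_n$ only relates $\sim_\Gamma$-related indices so the composite cannot over-approximate, is correct and matches the paper.)
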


\begin{proof}
We call $\kappa_\tr{tot}'$ the corelation thus built; let us prove that it is equal to $\kappa_\tr{tot}$. If we take two elements $k$, $k'$ of $K_I$ which are related by $\kappa_\tr{tot}$, then they are related by $\kappa_\tr{pre}$ and therefore also by $\kappa_\tr{tot}'$; the same holds if we take two elements of $K_O$  which are related by $\kappa_\tr{tot}$. If we take $k \in K_I$ and $k' \in K_O$ related by $\kappa_\tr{tot}$, this means that they belong to a same equivalence class; as $\Gamma$ is an iso-IODAG, this implies that this equivalence class has no starting point. There is therefore necessarily a path of related indices going downwards from $k'$ to at least one index $k''$ in the inputs of the diagram, which implies that $k'$ and $k''$ are related by $\kappa_\tr{tot}'$. As $k''$ is itself related to $k$ by $\kappa_\tr{pre}$ and therefore also by $\kappa_\tr{tot}'$, transitivity allows to conclude that $k$ and $k'$ are related by $\kappa_\tr{tot}'$.

Reciprocally, if we take two elements of $K_I \sqcup K_O$ which are related by $\kappa_\tr{tot}'$, this means that either they are related by $\kappa_\tr{pre}$, or we can find a path of related indices connecting them through the graph of $\Gamma$. In the first case, they are clearly related by $\kappa_\tr{tot}$; in the second case, as each of the index-matching route maps was obtained through a restriction of $\sim_\Gamma$, and as $\sim_\Gamma$ is transitive, this means that these two indices are also related by $\sim_\Gamma$, and therefore also by $\kappa_\tr{tot}$.

Let us now prove, for the isometric case, that the compositions are suitable for isometries. One can build $\kappa_\tr{tot}'$ by foliating the graph, then composing the corelations layer by layer, starting with the pre-processing. The composition of the pre-processing with the first layer is suitable for isometries by Theorem \ref{SuitableIsoMatch}, as there are no indices created by the pre-processing. Say we have composed the pre-processing and the $m$ first layers, yielding a corelation $\kappa_m$; its composition with layer $m+1$ is suitable for isometries as well, again by Theorem \ref{SuitableIsoMatch}; indeed, an index created by $\kappa_m$ is an index which is not present in the inputs of the diagram; it therefore has a starting point in the $m$ first layers. This entails that the set of wires bearing this index is connected in the graph of $\Gamma$; therefore, all elements of the corresponding equivalence class in the outputs of the $m$ first layers are related by $\kappa_n$. The $m+1$ layer therefore cannot relate these elements with others, as it has to respect $\sim_\Gamma$. The rest of the proof in the unitary case is similar.
\end{proof}

\begin{definition}
An \textbf{interpretation} of an iso-IODAG (resp.\ a uni-IODAG) $\Gamma : I \to O$ in practically isometric (resp.\ practically unitary) IMRMs consists of the following:

\begin{itemize}
\item a function $\tr{length} : K_I \sqcup K_{E_\Gamma} \sqcup K_O \to \mathbb{N}$, satisfying $k \sim_\Gamma k' \implies \tr{length}(k) = \tr{length}(k')$;
\item a function $\tr{sys}$ which, to any $e \in I \sqcup E_\Gamma \sqcup O$, associates a partitioned Hilbert space of the form $(\ch_A, \bar{X}_e, (\pi^{\vec{k}})_{\vec{k} \in \bar{X}_e})$, where $\bar{X}_e := \bigtimes_{x \in \cx_e} \llbracket 1, \tr{length}(x) \rrbracket$;
\item a function $\tr{morph}$ which, to any $n \in N_\Gamma$, associates a practically isometric (resp. practically unitary) index-matching routed map of the form $(\kappa_n, f)$ from $\bigotimes_{e \in h^{-1}(n)} \tr{sys}(e)$ to $\bigotimes_{e' \in t^{-1}(n)} \tr{sys}(e')$.
\end{itemize}

In addition, the input and output wires of an empty node must have the same interpretation and the interpretation of the empty node must be an identity morphism.
\end{definition}

Interpreted IODAGs are called index-matching quantum circuits. The global index-matching routed map that an index-matching quantum circuit represents will be called its \textit{meaning}.

\begin{definition}
Given an interpretation $(\tr{length}, \tr{sys}, \tr{morph})$ of an iso-IODAG, we define a pre-processing map $(\kappa^\tr{pre}, \pi^\tr{pre})$ from $\bigotimes_{e \in I} \tr{sys}(e)$ to itself, where $\kappa_\tr{pre}$ was defined in Definition \ref{CorelInterp}, and $\pi^\tr{pre} := \sum_{\vec{k} \in \bigtimes_{e \in I} \bar{\cx}_e}  \bar{\kappa}_\tr{pre}^{\vec{k}, \vec{k}} \pi^{\vec{k}}$.

The \textbf{meaning} of $(\tr{length}, \tr{sys}, \tr{morph})$ is then $F \circ (\kappa_\tr{pre}, \pi^\tr{pre})$, where $F$ is the index-matching routed map obtained by composing the $\tr{morph}(n)$ according to the graph of $\Gamma$.
\end{definition}

\begin{theorem} \label{th:InterpretationsMeaning}
Given an interpretation of an iso-IODAG (resp.\ of a uni-IODAG) in practically isometric (resp. in practically unitary) IMRMs, its meaning is a practical isometry (resp.\ a practical unitary).
\end{theorem}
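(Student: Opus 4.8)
The plan is to build the meaning up layer by layer, mirroring exactly the foliation used in the proof of Lemma \ref{SuitableDot}, and at each stage to invoke the two closure results for practical isometries: that a parallel composition of practical isometries is always a practical isometry (Appendix \ref{app:practicalIsos}), and that a sequential composition of practical isometries is a practical isometry whenever the routes compose properly, i.e.\ whenever condition \eqref{eq:IsoComposition} holds (Theorem \ref{th:IsoCompos}). The role of Lemma \ref{SuitableDot} is precisely to guarantee that, for an iso-IODAG, every sequential composition arising in this layered construction is proper for isometries, so that the hypothesis of Theorem \ref{th:IsoCompos} is met at each step.

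First I would record that the constituents of the composite are each practical isometries. By definition of an interpretation, every $\tr{morph}(n) = (\kappa_n, f)$ is a practically isometric index-matching routed map. The identity morphisms placed on the wires that merely pass through a given layer are practical isometries (their route is an identity corelation and the underlying map is $\tr{id}$). Finally, the pre-processing map $(\kappa_\tr{pre}, \pi^\tr{pre})$ is a practical isometry: $\pi^\tr{pre}$ is the orthogonal projector onto the direct sum of the sectors $\vec{k}$ with $\bar{\kappa}_\tr{pre}^{\vec{k}\vec{k}} = 1$, and a routine check shows that this range coincides with its practical input space $\ch^{\cs_{\bar{\kappa}_\tr{pre}}}$, so that $\pi^\tr{pre}$ restricts to the identity, hence to an isometry, on that subspace.

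Next I would foliate $\Gamma$ into topological layers, so that each inner edge runs from an earlier layer to a later one. Each layer is then a parallel composition of the node maps $\tr{morph}(n)$ lying in it together with identity morphisms on the through-wires; by the parallel-composition closure result this layer is itself a practical isometry, following the parallel product of the corresponding node and identity routes. I would then compose the pre-processing with the first layer, the result with the second layer, and so on. At each such step, Lemma \ref{SuitableDot} guarantees that the two routes being composed satisfy \eqref{eq:IsoComposition}, so Theorem \ref{th:IsoCompos} applies and the running composite stays a practical isometry. An induction on the number of layers then shows that the full composite $F \circ (\kappa_\tr{pre}, \pi^\tr{pre})$, which is by definition the meaning, is a practical isometry.

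The unitary case is handled symmetrically: one uses uni-IODAGs, the analogous parallel-composition statement for practical unitaries, and Theorem \ref{th:UniCompos}, whose two conditions are both supplied, layer by layer, by the unitary half of Lemma \ref{SuitableDot} (the pre-processing projector is then a practical unitary as well, since $\cs_{\bar{\kappa}_\tr{pre}} = \ct_{\bar{\kappa}_\tr{pre}}$ by symmetry of $\kappa_\tr{pre}$). The main obstacle I anticipate is organisational rather than conceptual: making the foliation and the identity-padding of through-wires precise enough that the phrase ``$F$ obtained by composing the $\tr{morph}(n)$ according to the graph'' is genuinely realised as an iterated sequential-of-parallel composition to which the closure theorems literally apply, and verifying that the routes appearing at each stage are exactly the partial compositions of corelations analysed in Lemma \ref{SuitableDot}, so that its suitability conclusion transfers verbatim from the corelations to the present maps.
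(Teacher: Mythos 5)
Your proposal is correct and follows essentially the same route as the paper, whose proof is simply the observation that the result follows from Lemma \ref{SuitableDot} together with Theorems \ref{th:IsoCompos} and \ref{th:UniCompos}; your layer-by-layer foliation merely makes explicit the iterated sequential-of-parallel structure that the paper leaves implicit (and that already appears inside the proof of Lemma \ref{SuitableDot}). The additional checks you flag --- that the pre-processing projector and the identity padding are themselves practical isometries --- are correct and worth recording, but they do not change the argument.
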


\begin{proof}
This follows directly from Lemma \ref{SuitableDot} and Theorems \ref{th:IsoCompos} and \ref{th:UniCompos}.
\end{proof}

Interpreting also plays well with sequential and parallel compositions.

\begin{theorem} \label{th:SeqCompInterpretation}
Let $I \stackrel{\Gamma_1}{\to} J \stackrel{\Gamma_2}{\to} O$ be two iso-IODAGs, whose sequential composition is noted $\tilde{\Gamma}$, and let $(\tr{length}_1, \tr{sys}_1, \tr{morph}_1)$ and $(\tr{length}_2, \tr{sys}_2, \tr{morph}_2)$ be respective interpretations which agree on $J$, i.e.\ $\forall k \in K_J, \tr{length}_1(k) = \tr{length}_2(k)$ and $\forall e \in E_J, \tr{sys}_1(e) = \tr{sys}_2(e)$. Then the sequential composition of their meanings is equal to the meaning of the interpretation of $\tilde{\Gamma}$ obtained by combining them.
\end{theorem}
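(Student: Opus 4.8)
The plan is to unfold the three meanings and reduce the statement to a single absorption identity. Write $P_i := (\kappa^i_\tr{pre}, \pi^\tr{pre}_i)$ for the pre-processing routed map attached to the interpretation of $\Gamma_i$ in Definition \ref{CorelInterp}, and $F_i$ for the index-matching routed map obtained by composing its node maps $\tr{morph}_i(n)$ along the graph, so that the meaning of $(\tr{length}_i,\tr{sys}_i,\tr{morph}_i)$ is $M_i := F_i \circ P_i$. Writing $\tilde M = \tilde F \circ \tilde P$ for the meaning of the combined interpretation of $\tilde\Gamma$, the target $M_2 \circ M_1 = \tilde M$ becomes $F_2 \circ P_2 \circ F_1 \circ P_1 = \tilde F \circ \tilde P$. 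First I would establish two structural identifications, and then prove an absorption lemma $P_2 \circ F_1 = F_1$; together these close the argument, since they give $F_2 \circ P_2 \circ F_1 \circ P_1 = F_2 \circ F_1 \circ P_1 = \tilde F \circ \tilde P$.

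The first identification is $\tilde F = F_2 \circ F_1$. Since $\tilde\Gamma$'s graph is the concatenation of those of $\Gamma_1$ and $\Gamma_2$ with the wires of $J$ turned into inner edges, and since the interpretations agree on $J$ (so the systems $\tr{sys}$ match across the interface), the composite of all node maps of $\tilde\Gamma$ factors as the $\Gamma_2$-composite after the $\Gamma_1$-composite; well-definedness of this reordering is exactly the associativity of IODAG composition proven above. The second is $\tilde P = P_1$: the input wire system of $\tilde\Gamma$ is $I$ and $\sim_{\tilde\Gamma}$ restricts to $\sim_I$ on $K_I$, so the pre-processing corelation $\kappa^{\tilde\Gamma}_\tr{pre}$ and its associated projector coincide with $\kappa^1_\tr{pre}$ and $\pi^\tr{pre}_1$.

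The crux, which I expect to be the main obstacle, is the absorption identity $P_2 \circ F_1 = F_1$: the pre-processing matching $J$'s input indices is redundant once it sits after $F_1$. I would prove this at both the route level and the linear-map level. On routes, letting $\kappa^1_\tr{nd}$ be the node-composite corelation of $\Gamma_1$ (so $\kappa^1_\tr{nd}\circ\kappa^1_\tr{pre} = \kappa_\tr{tot}$ by Lemma \ref{SuitableDot}), the claim is $\kappa^2_\tr{pre} \circ \kappa^1_\tr{nd} = \kappa^1_\tr{nd}$. This holds because any two $J$-indices identified by $\sim_J$ lie in a single $\sim_{\Gamma_1}$-class which, $\Gamma_1$ being an iso-IODAG (Definition \ref{isoDot}), has at most one starting point and is therefore connected through the graph back to a common source; hence $\kappa^1_\tr{nd}$ already matches them and the extra matching imposed by $\kappa^2_\tr{pre}$ adds nothing. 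This is precisely the connectivity argument used in the proof of Lemma \ref{SuitableDot}. On linear maps, the associated relation $\bar\kappa^1_\tr{nd}$ then relates no multi-index $\vec l \in \bar\cx_J$ that violates the matching to anything, so by the equivalent form of ``follows a route'' from Appendix \ref{app:proofRoutedMapsSMC} we get $\mu^{\vec l}\circ F_1 = 0$ for every such $\vec l$. Since $\pi^\tr{pre}_2 = \sum_{\vec l}(\bar\kappa^2_\tr{pre})^{\vec l\vec l}\,\mu^{\vec l}$ is exactly the projector onto the matching sectors, the unpopulated non-matching components are killed for free and $\pi^\tr{pre}_2\circ F_1 = F_1$. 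Assembling the route and linear-map parts yields $P_2\circ F_1 = F_1$, completing the reduction. The delicate point throughout is the bookkeeping of the interface indices of $J$ — verifying that the matching data carried by $\sim_{\tilde\Gamma}$ on $K_J$ agrees with what both sub-interpretations impose — so that the graph connectivity genuinely forces the claimed equalities.
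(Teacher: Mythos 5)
Your overall strategy is the same as the paper's (decompose each meaning as $F_i \circ P_i$, identify $\tilde F = F_2 \circ F_1$ and $\tilde P = P_1$, and absorb $P_2$), but your key absorption lemma is stated too strongly and its justification fails. You claim $P_2 \circ F_1 = F_1$, arguing that any $\sim_J$-class is ``connected through the graph back to a common source'' because $\Gamma_1$ is an iso-IODAG. This only works for classes that have a starting point. A class with a representative in $K_I$ has \emph{no} starting point, and its representatives may trace back to \emph{distinct} input wires of $\Gamma_1$; such a class need not be connected within $\Gamma_1$'s graph alone, and becomes connected only after the pre-processing $P_1$ identifies the input indices. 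Concretely, take $\Gamma_1$ with inputs $A^k, B^{k'}$, $k \sim_I k'$, two disjoint nodes $n_1 : A^k \to C^l$ (with $k \sim l$) and $n_2 : B^{k'} \to D^{l'}$ (with $k' \sim l'$), and $l \sim_J l'$. This is an iso-IODAG, yet $\kappa^1_{\tr{nd}} = \kappa_{n_1} \sqcup \kappa_{n_2}$ does not relate $l$ and $l'$, and $\pi^{\tr{pre}}_2 \circ (f_{n_1} \otimes f_{n_2}) \neq f_{n_1} \otimes f_{n_2}$ for generic sector-preserving node maps, since $F_1$ alone populates mismatched output sectors (e.g.\ $C$-sector $1$ with $D$-sector $2$). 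So both the route-level identity $\kappa^2_{\tr{pre}} \circ \kappa^1_{\tr{nd}} = \kappa^1_{\tr{nd}}$ and the linear-map identity $\pi^{\tr{pre}}_2 \circ F_1 = F_1$ are false in general.

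The flaw is repairable, and the repair is exactly what the paper does: the correct absorption statement is $P_2 \circ M_1 = M_1$ with $M_1 = F_1 \circ P_1$, not $P_2 \circ F_1 = F_1$. By Lemma \ref{SuitableDot} the route of the full meaning $M_1$ (pre-processing included) is $\kappa^{\tr{tot}}_1$, which does relate any two $\sim_J$-related elements of $K_J$; hence the extra identifications of $\kappa^2_{\tr{pre}}$ are already present and the projector $\pi^{\tr{pre}}_2$ acts trivially on the image of $M_1$. Since in your final chain $P_2$ only ever meets $F_1$ with $P_1$ already to its right, substituting the weaker identity gives $F_2 \circ P_2 \circ F_1 \circ P_1 = F_2 \circ F_1 \circ P_1 = \tilde F \circ \tilde P$ and the conclusion stands. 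Your explicit verification of $\tilde F = F_2 \circ F_1$ and $\tilde P = P_1$ (which the paper leaves implicit) is fine.
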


\begin{proof}
Let $\tr{mean}_1 = F_1 \circ (\kappa^\tr{pre}_1, \pi^\tr{pre}_1)$ and $\tr{mean}_2 = F_2 \circ (\kappa^\tr{pre}_2, \pi^\tr{pre}_2)$ be the respective meanings. By Lemma \ref{SuitableDot}, the index-matching route of $\tr{mean}_1$ is $\kappa^\tr{tot}_1$. In particular, for two elements of $J$ related by $\sim_J$, they are related by $\kappa^\tr{tot}_1$; therefore, $\tr{mean}_1 = (\kappa^\tr{pre}_2, \pi^\tr{pre}_2) \circ \tr{mean}_1$. This implies that $\tr{mean}_2 \circ \tr{mean}_1 = F_2 \circ F_1 \circ (\kappa^\tr{pre}_1, \pi^\tr{pre}_1)$, which is the meaning of the corresponding interpretation of $\tilde{\Gamma}$.
\end{proof}

\begin{theorem} \label{th:ParrCompInterpretatiom}
Let $\Gamma$ and $\Gamma'$ be two iso-IODAGs, with respective interpretations $(\tr{length}_1, \tr{sys}_1, \tr{morph}_1)$ and $(\tr{length}_2, \tr{sys}_2, \tr{morph}_2)$. The parallel composition of their meanings is equal to the meaning of the interpretation $(\langle \tr{length}_1, \tr{length}_2 \rangle, \langle \tr{sys}_1, \tr{sys}_2 \rangle, \langle \tr{morph}_1, \tr{morph}_2 \rangle)$ of $\Gamma \sqcup \Gamma'$.
\end{theorem}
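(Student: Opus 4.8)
The plan is to mirror the proof of Theorem \ref{th:SeqCompInterpretation}, exploiting the fact that the parallel composition of IODAGs is defined by taking disjoint unions of all the structure, with no inner edge and no index-equivalence bridging the two components. Write $\tr{mean}_1 = F_1 \circ (\kappa^\tr{pre}_1, \pi^\tr{pre}_1)$ and $\tr{mean}_2 = F_2 \circ (\kappa^\tr{pre}_2, \pi^\tr{pre}_2)$ for the two meanings, and let $\tr{mean}$ denote the meaning of the combined interpretation of $\Gamma \sqcup \Gamma'$. The goal is to establish $\tr{mean} = \tr{mean}_1 \otimes \tr{mean}_2$, all computations taking place inside the dagger SMC $\cat{MatchedFHilb} \subseteq \cat{RoutedFHilb}$ of Theorem \ref{th:RoutedMaps=dagSMC}.

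First I would show that the ``graph part'' $F$ factors as a parallel composition, i.e.\ $F = F_1 \otimes F_2$. Since $N_{\Gamma \sqcup \Gamma'} = N_\Gamma \sqcup N_{\Gamma'}$ and $E_{\Gamma \sqcup \Gamma'} = E_\Gamma \sqcup E_{\Gamma'}$, with head and tail functions $\langle h_\Gamma, h_{\Gamma'}\rangle$ and $\langle t_\Gamma, t_{\Gamma'}\rangle$, no edge of $\Gamma$ is incident to a node of $\Gamma'$ or vice versa. Hence a topological foliation of $\Gamma \sqcup \Gamma'$ can be chosen as the layerwise juxtaposition of foliations of $\Gamma$ and of $\Gamma'$, and composing the morphisms $\tr{morph}(n)$ layer by layer splits, by the interchange law (parallel composition distributes over sequential composition, Theorem \ref{th:RoutedMaps=dagSMC}), into $F_1 \otimes F_2$. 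This is the step I expect to be the main obstacle: it is ``morally obvious'' but rests on the soundness of graph-based composition in an SMC, so one must invoke the coherence underlying Theorem \ref{th:PureLinearMapsSoundForCircuits} to guarantee that the composite is independent of the chosen foliation and genuinely equals the parallel composite up to the structural isomorphisms.

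Second I would check that the pre-processing factors, $(\kappa^\tr{pre}, \pi^\tr{pre}) = (\kappa^\tr{pre}_1, \pi^\tr{pre}_1) \otimes (\kappa^\tr{pre}_2, \pi^\tr{pre}_2)$. By the definition of parallel composition of indexed wire systems, $\sim_{I \sqcup I'}$ restricts to $\sim_I$ on $K_I$, to $\sim_{I'}$ on $K_{I'}$, and relates no element of $K_I$ to one of $K_{I'}$; hence the corelation $\kappa^\tr{pre}$ of Definition \ref{CorelInterp} is exactly the parallel (cartesian) product $\kappa^\tr{pre}_1 \times \kappa^\tr{pre}_2$. Applying the bar functor, which preserves parallel composition (Theorem \ref{th:FunctorIndexMatchingToRel}), yields $\bar{\kappa}^\tr{pre} = \bar{\kappa}^\tr{pre}_1 \times \bar{\kappa}^\tr{pre}_2$, so the diagonal coefficients $\bar{\kappa}_\tr{pre}^{\vec{k},\vec{k}}$ multiply and the projector $\pi^\tr{pre} = \sum_{\vec{k}} \bar{\kappa}_\tr{pre}^{\vec{k},\vec{k}}\, \pi^{\vec{k}}$ splits as the tensor product $\pi^\tr{pre}_1 \otimes \pi^\tr{pre}_2$.

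Finally I would combine the two factorisations using the interchange law once more:
\begin{align*}
\tr{mean} &= F \circ (\kappa^\tr{pre}, \pi^\tr{pre}) \\
&= (F_1 \otimes F_2) \circ \big((\kappa^\tr{pre}_1, \pi^\tr{pre}_1) \otimes (\kappa^\tr{pre}_2, \pi^\tr{pre}_2)\big) \\
&= \big(F_1 \circ (\kappa^\tr{pre}_1, \pi^\tr{pre}_1)\big) \otimes \big(F_2 \circ (\kappa^\tr{pre}_2, \pi^\tr{pre}_2)\big) \\
&= \tr{mean}_1 \otimes \tr{mean}_2 ,
\end{align*}
which is the claimed equality. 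All composites involved are legitimate because, as established just before the interpretation subsection, the parallel composition of iso-IODAGs (resp.\ uni-IODAGs) is again an iso-IODAG (resp.\ a uni-IODAG), so $\Gamma \sqcup \Gamma'$ admits the combined interpretation and no further suitability check is required.
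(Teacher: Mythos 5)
Your proof is correct and matches the paper's intent: the paper's own proof of this theorem is simply ``Direct,'' and your argument spells out exactly the factorisation (graph part, pre-processing, interchange law) that the authors regard as immediate from the disjoint-union definition of parallel composition and the SMC structure of Theorem \ref{th:RoutedMaps=dagSMC}.
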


\begin{proof}
Direct.
\end{proof}

\end{document}